\documentclass[12pt]{article}

\pdfoutput=1

\usepackage{latexsym,amsfonts,bm,epsfig,amsmath,natbib,authblk,amsthm,thmtools,amssymb}
\usepackage{dsfont}
\usepackage{IEEEtrantools}
\usepackage{float}
\usepackage[section]{placeins} %

\usepackage{booktabs}
\usepackage{stmaryrd}
\usepackage{graphicx}
\usepackage{subcaption}
\usepackage[margin=1cm]{caption}
\usepackage{xr-hyper}
\usepackage{color}
\usepackage[colorlinks,linkcolor=black,citecolor=black,filecolor=black,bookmarks=false,pagebackref]{hyperref}
\usepackage[ruled,vlined,linesnumbered]{algorithm2e}
\usepackage{mathrsfs}
\usepackage{bigints}
\usepackage{multirow}
\usepackage{soul}
\usepackage[normalem]{ulem} 
\usepackage{pgfplots}
\pgfplotsset{compat=1.18}
\usepgfplotslibrary{groupplots}
\usepackage{tikz}
\newlength\figureheight
\newlength\figurewidth
\usepackage{threeparttable}

\floatstyle{plain}

\makeatletter
\newcommand{\myitem}[1]{%
\item[#1]\protected@edef\@currentlabel{#1}%
}
\makeatother

\definecolor{procolour}{HTML}{4477AA}
\definecolor{gibbscolour}{HTML}{EE6677}
\definecolor{bnncolour}{HTML}{CCBB44}

\makeatletter
\renewcommand{\algocf@captiontext}[2]{#1\algocf@typo. \AlCapFnt{}#2} %
\def\@algocf@capt@plain{top}
\renewcommand{\algocf@makecaption}[2]{%
	\addtolength{\hsize}{\algomargin}%
	\sbox\@tempboxa{\algocf@captiontext{#1}{#2}}%
	\ifdim\wd\@tempboxa >\hsize%
	\hskip .5\algomargin%
	\parbox[t]{\hsize}{\algocf@captiontext{#1}{#2}}%
	\else%
	\global\@minipagefalse%
	\hbox to\hsize{\box\@tempboxa}%
	\fi%
	\addtolength{\hsize}{-\algomargin}%
}
\makeatother

\newcommand{\ignore}[1]{}

\newcommand{\Keywords}[1]{\par\noindent{{\em \large{Keywords}\/}: #1}}

\newcommand{\Var}{\mathrm{Var}}

\newcommand{\x}{{x}_{1:n}}

\newcommand{\PP}{\mathbb{P}}
\newcommand{\FF}{\mathsf{F}}
\newcommand{\GG}{\mathbb{G}}

\newcommand{\dt}{\mathsf{d}}
 
\newcommand{\E}{\mathbb{E}}

\newcommand{\argmin}{\operatornamewithlimits{argmin}} 
\newcommand{\argmax}{\operatornamewithlimits{argmax}}

\newtheorem{assumption}{Assumption}
\newtheorem{theorem}{Theorem}

\newtheorem{corollary}{Corollary}

\theoremstyle{definition}

\theoremstyle{remark}
\newtheorem{remark}{Remark}
\theoremstyle{definition}

\begin{document}
	\def\spacingset#1{\renewcommand{\baselinestretch}%
		{#1}\small\normalsize} \spacingset{1}
	
\title{Bagged Martingale Posteriors: Calibrated Uncertainty Quantification for Predictive Resampling\footnote{Corresponding author: david.frazier@monash.edu.}}
\date{}
\author[1]{Hui Wang}
\author[2]{Edwin Fong}
\author[1]{David T. Frazier}
\affil[1]{Department of Econometrics and Business Statistics, Monash University}
\affil[2]{Department of Statistics and Actuarial Science, University of Hong Kong}
\makeatletter
\makeatother	
	\maketitle
	
	\begin{abstract}
Martingale posteriors and related predictive resampling methods replace the likeli\-hood--prior pair used within Bayesian inference with a predictive model for future observations. These methods are simple to implement and increasingly popular due to their computational efficiency, but little is known about their ability to accurately quantify uncertainty. 
In this work, we study the concentration and calibration properties of the martingale posterior for general functionals, and show that credible sets can systematically undercover if the predictive algorithms are not carefully tuned.
We propose a simple remedy: the bagged martingale posterior. Rather than starting every predictive path from the observed sample, we start the paths from random bootstrap resamples of the data and then amalgamate the resulting draws. Critically, this approach 
incurs no additional {simulation} cost compared to standard predictive resampling, and delivers conservatively calibrated credible sets. %
This scalability enables application to a range of challenging examples, including sparse high-dimensional regression and {nonparametric} conditional quantile models.
\end{abstract}
\Keywords{Bagging; bootstrap; calibration; martingale posterior; model misspecification; predictive inference}
	\spacingset{1.9} %

\section{Introduction}\label{sec:intro}

Bayesian inference produces probabilistic statements about unknown quantities in complex models and data sets. To make such statements about observable data $\x=(x_1,\dots,x_n)$, it requires two ingredients: a model $f_\theta(\x)$ that explains the data up to an unknown parameter $\theta\in\Theta\subseteq\mathbb{R}^{d_\theta}$, and a prior $\pi(\theta)$ over that parameter. The laws of probability then deliver the posterior
$$
\pi_n(\theta)\propto \pi(\theta) f_\theta(\x),
$$
which expresses uncertainty about $\theta$ given $\x$. Uncertainty about future observations follows from the posterior predictive $f(x_{n+1}\mid\x):=\int_\Theta f_\theta(x_{n+1}\mid\x)\pi_n(\theta)\dt\theta$.

Flexible as it is, this construction requires the inference problem to be expressed through a likelihood, and prior information to be encoded in a distribution that does not depend on the data. In complex problems, specifying such a likelihood--prior pair may be difficult or unnatural. Predictive Bayesian inference (PBI; see, e.g., \citealp{berti21class,fortini23predictionbased,fortini2025exchangeability}) and martingale posteriors (MGPs; \citealp{fong2023martingale,holmes2023statistical}) dispense with this requirement: they replace the likelihood--prior pair by a predictive model, or algorithm, for future observations given past ones, from which draws can be cheaply simulated. See \citet{fortini2020quasi} for an early application of this predictive framework.

Given a predictive model $\FF(X_{n+1}\in\cdot\mid\x)$ and a link between future data and the parameter of interest, draws from the MGP for $\theta\mid\x$ are obtained directly by predictive resampling (Section \ref{sec:method}). These draws require no Markov chain Monte Carlo (MCMC), and are therefore available in settings where MCMC samplers fail without considerable care and expertise and usually require a fraction of the computational cost. This ease of implementation has produced a rapidly growing body of applications of PBI, from quantile estimation \citep{fong2025bayesian} to neural processes \citep{lee2023martingale} and tabular foundation models \citep{ng2025tabmgp}. {This growth in popularity demands a stronger theoretical understanding of concentration and calibration properties of general PBI methodology.}
Critically, we show in Section \ref{sec:UQ} that the MGP's ability to correctly quantify uncertainty is strongly susceptible to predictive misspecification: if the model does not faithfully reproduce future observed data, measures of uncertainty obtained from the MGP {need not be} calibrated in the frequentist sense. 

To illustrate this fact, Figure \ref{fig:reg1} (top panel) plots the parametric MGP of \citet{fong2026asymptotics} for an unknown regression parameter of interest under various degrees of model misspecification, represented on the $x$-axis; complete details of this example are given in Section \ref{sec:highDReg}. Even under mild levels of misspecification, the MGP is highly susceptible to mismatches between the assumed predictive model and the observed data and exhibits poor uncertainty quantification. We confirm this finding theoretically in Section \ref{sec:UQ}, and show in Section \ref{sec:highDReg} that the actual coverage of the 95\% MGP credible set in this example is 78.5\%.   

\begin{figure}[htbp]
    \centering
    \includegraphics[width=0.75\textwidth]{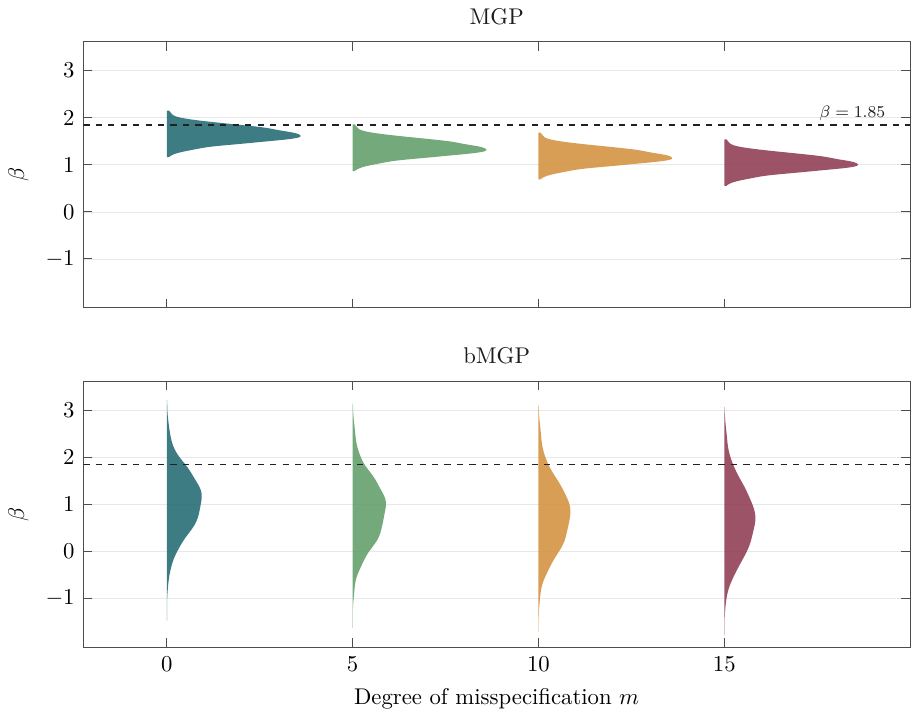}
   \caption{Posteriors for a single regression coefficient under the MGP (top panel) and the proposed bMGP (bottom panel), at increasing degrees of conditional misspecification $m$ (horizontal axis; the design is that of Section \ref{sec:highDReg}, with $m=0$ a correctly specified engine). Each shape is a posterior density and the dashed line is the true value of the coefficient.}
\label{fig:reg1}
\end{figure}

Motivated by this issue, we propose a modification of predictive resampling that repairs this failure at no additional simulation cost. Like a number of recent proposals that loosen the likelihood--prior paradigm \citep[see, e.g.,][]{bissiri2016general}, this approach combines frequentist and Bayesian devices: instead of running every predictive path on the same data set, we run paths based on bootstrap resamples \citep{efron1979bootstrap} of the data, and amalgamate the resulting posterior draws. The approach we propose can be viewed as a bagged version of the MGP{, which we call the bagged martingale posterior (bMGP),} and thus shares similarities with bootstrapping approaches such as the bagged posterior (\citealp{huggins2024reproducible}), or the variational approach of \citet{Fan2025}. {However, these approaches
incur a much higher computational cost than the traditional posterior, due to the reliance on Monte Carlo or variational optimization to resample the posterior of interest. In contrast to these approaches, predictive resampling is {uniquely amenable} to bagging, and
we show in Section \ref{sec:bagging} that the bMGP runs on the same forward-simulation budget as the original MGP,} 
{enabling application to high-dimensional and nonparametric settings that are computationally onerous for competing bagged posteriors.}
Unlike existing MGPs, the bMGP delivers calibrated uncertainty quantification. The differences between the bMGP and MGP are immediately apparent in Figure \ref{fig:reg1}: even in the presence of severe misspecification, the bMGP still places reasonable mass on the true parameter, even though the MGP has zero mass near this value. 

The remainder of the paper is organized as follows. Section \ref{sec:method} reviews predictive resampling, states the functional formulation on which our analysis rests, and gives the concentration and distributional results for the MGP. Section \ref{sec:bagging} introduces the bMGP, illustrates the calibration failure and its repair in a Gaussian example, establishes its asymptotic validity, and quantifies what bagging gains and what it costs. Section \ref{sec:examples} reports the numerical experiments and an application to clinical trial data, and Section \ref{sec:discussion} discusses connections with related bagged Bayesian procedures and the limitations of the approach. Assumptions, discussion of them, and all proofs are collected in Appendix \ref{app:theory}.

\section{Methodology and Preliminary Results}\label{sec:method}
\subsection{Predictive Bayes and Predictive Resampling}
Let $\mathsf P_0$ be the probability law governing the random sequence of observations $(X_n)_{n\ge1}$, $X_n\in\mathcal{X}\subseteq\mathbb{R}^d$ for all $n\ge1$, which, hereafter, we refer to as the true distribution; the same symbol denotes the law of the sequence and its marginal and conditional distributions, the argument making clear which is meant. Associated with $\mathsf{P}_0$ is the true predictive sequence
\[
\mathsf{P}_{n:n+1}(A) := \mathsf P_0(X_{n+1} \in A \mid x_{1:n}),\quad A\subseteq \mathcal{X}\subseteq\mathbb{R}^d,
\]
which is the true conditional law of $X_{n+1}\mid x_{1:n}$, where $x_{1:n}=(x_1,\dots,x_n)$ denotes the observed data up to time $n$; the $k$-step-ahead law $\mathsf{P}_{n:n+k}$, the joint law of $X_{n+1:n+k}\mid x_{1:n}$, is defined analogously, as is $\FF_{n:n+k}$ for the predictive engine introduced below. The idea of predictive Bayesian inference (PBI) is to use samples from $\mathsf{F}_{n:n+k}$ to produce posterior inferences on a functional of interest $T(\mathsf{P}_0)$; for independent and identically distributed (i.i.d.)\ data, for instance, we may want to conduct inference on the risk minimizer $T(\mathsf{P}_0):=\argmin_{\theta\in\Theta}\int \ell(\theta,x)\dt\mathsf{P}_0(x)$, where $\ell(\theta,x)$ is some loss function over $\Theta\subseteq\mathbb{R}^{d_\theta}$, and $\mathsf{P}_0$ is the true distribution of $X$ {and $\mathsf{P}_{n:n+1}(A) = \mathsf{P}_0(A)$}.

Of course, $\mathsf{P}_0$ is unknown, and so we must instead specify a proxy for the true predictive distribution $\mathsf{P}_{n:n+k}$, $k\ge1$. Instead of specifying a likelihood--prior pair to produce a posterior predictive, predictive Bayes posits a hypothetical distribution for the unobserved random variable $X_{n+1}\mid x_{1:n}$, denoted by $ \FF_{n:n+1}(A)=\FF(X_{n+1}\in A\mid x_{1:n}), $ which we refer to hereafter as the predictive engine. The notation $\FF(\cdot\mid \x)$ clarifies that this predictive engine may be distinct from the distribution that generated the data. Having specified $\FF_{n:n+k}$, inference on the distribution of a functional of interest $T:\mathcal{P}(\mathcal{X})\rightarrow\mathcal{T}$ given the observed data $x_{1:n}$ can be carried out using various predictive resampling schemes; see \citet{fortini2025exchangeability} for a recent review.

At a general level, PBI is based on recursively simulating sequences of the type
\begin{equation*}%
Z_{n+1}\sim \FF(\cdot\mid x_{1:n}),\;Z_{n+2}\sim \FF(\cdot\mid x_{1:n},z_{n+1}),\ldots,\;Z_{N}\sim \FF(\cdot\mid x_{1:n},z_{n+1},\dots,z_{N-1}),
\end{equation*}which gives rise to simulated realizations denoted by 
$
z_{n,N}^{}:=(z^{}_{n+1},\dots, z^{}_{N})$. Let
$$
{\PP}^{}_{n,N}(A):=\frac{1}{N}\left\{\sum_{i=1}^{n}\delta_{x_i}(A)+\sum_{j=n+1}^{N}\delta_{z^{}_j}(A)\right\}
$$
denote the empirical measure of the joint data $x_{1:n},z_{n,N}$. PBI attempts to draw samples from the law of the infeasible ``parameter/functional'' of interest
\begin{equation}
\vartheta_{n}^{}=T(\mathsf{F}_{n}),\quad \mathsf{F}_{n}(A):=\lim_{N\rightarrow\infty}\frac{1}{N}\left\{\sum_{i=1}^{n}\mathds{1}\left(x_i\in A\right)+\sum_{j=n+1}^{N}\mathds{1}\left(z_j\in A\right)\right\}\label{eq:exact_functional},
\end{equation}
where $\mathsf{F}_{n}$ is assumed to exist.\footnote{While our notion of $\FF_n$ may appear distinct from that in \citet{fong2023martingale}, the two are in fact the same. To see this, note that $\FF_{n}(A):=\lim_{N\rightarrow\infty}\PP_{n,N}(A)$, which is precisely the definition of $\FF_\infty$ given in \citet{fong2023martingale}. We retain this quantity's dependence on $n$ to signify its dependence on $\x$, and since we will eventually allow $n\rightarrow\infty$ to understand the ability of these methods to correctly quantify uncertainty.} Under specific predictive schemes for $z_{n,N}$, the distribution of the random variable $\vartheta_{n}=T(\mathsf{F}_{n})$ is called the martingale posterior (MGP), and $\vartheta_n$ represents a draw from the distribution
$$
\Pi_{\infty}(\vartheta_n\in\cdot\mid \x)=\int \mathds{1}\{T(\FF_n)\in\cdot\}\,\Pi(\dt\FF_n\mid \x),
$$
where, following the notation of \citet{fong2023martingale}, $\Pi(\cdot\mid\x)$ denotes the law of the random measure $\FF_n$ given the observed data.

In practice, one cannot sample from $\Pi_\infty$ directly, and so one instead draws samples from 
\begin{equation}\label{eq:mgp}
\vartheta=\vartheta(z_{n,N})=T({\PP}^{}_{n,N}),\quad \vartheta\sim \Pi_N(\cdot\mid x_{1:n}).
\end{equation}
{We highlight here that the notation $\vartheta$ is used only for convenience, and we suppress its dependence on $n$ and $N$ for simplicity.}
Samples from $\Pi_N$ can be obtained using the following simple recursive Monte Carlo algorithm.%
\begin{figure}[htbp]
\small
  \centering
\begin{minipage}{.95\linewidth}
\spacingset{1}
\begin{algorithm}[H]
\DontPrintSemicolon
  \SetAlgoLined
{Compute $\PP_n$ from the observed data $x_{1:n}$ and initialize the predictive engine
$\FF$ at $x_{1:n}$\;
$N>n$ is a large integer}\;
  \For{$j \gets 1$ \textnormal{\textbf{to}} $M$}{
  Reset $\FF$ to its initialization\;
  \For{$i \gets n+1$ \textnormal{\textbf{to}} $N$} {
  Sample $Z_{i} \sim {\FF}_{i-1:i}$\;
  Update $\FF \leftarrow \left\{\FF_{i-1:i}, Z_{i}\right\} $\;
  }
 Compute $\PP_{n,N}$ from $\{x_{1:n}, Z_{n+1:N}\}$\;
 Evaluate   ${\vartheta}^{(j)} =T(\PP_{n,N}) $ \;}
 Return $\{\vartheta^{(1)},\ldots,\vartheta^{(M)} \} \stackrel{\text{iid}}{\sim} \Pi_N(\cdot \mid x_{1:n})$\;
\caption{Predictive resampling}\label{alg:predictive_resampling}
\end{algorithm}
\end{minipage}
\end{figure}

\subsection{Concentration and Calibration of the MGP}

Several authors have analyzed conditions to ensure that $\Pi_\infty$ exists, and to ensure that $\Pi_N$ is an accurate approximation of $\Pi_\infty$, see, e.g., \citet{fong2023martingale} and \citet{battiston2025bayesian}. Verifying such conditions requires example-specific analysis that depends on the chosen predictive model class $\FF_{n:n+1}$ and on the behavior of $T(\cdot)$. We do not pursue that analysis here, and throughout the remainder assume that $\Pi_\infty$ exists directly. 

While it is almost trivial to draw samples from $\Pi_N$, little is known in general about the ability of $\Pi_N$ to accurately produce inferences on the true population functional of interest $\vartheta_0:=T(\mathsf{P}_0)$. In this section, we answer this question in the negative: in general $\vartheta\sim \Pi_N$ {may not necessarily concentrate around} $\vartheta_0:=T(\mathsf{P}_0)$.
More generally, even when the MGP concentrates on $\vartheta_0$, using $\Pi_N$ to quantify uncertainty about $\vartheta_0$ may not deliver calibrated inference in general.\footnote{We say that a credible set for $\vartheta_0$ obtained from $\Pi_N$ that contains $100(1-\alpha)\%$ posterior probability is \textit{calibrated} if the credible set asymptotically contains $\vartheta_0$ with $\mathsf{P}_0$-probability at least $(1-\alpha)$.}

Before showing that $\Pi_N$ {may not} deliver calibrated inferences on $\vartheta_0$, we first give a general posterior concentration result for $\Pi_N$ that is of independent interest. We are unaware of any such general concentration result for PBI methods. This concentration result then allows us to establish the inability of $\Pi_N$ to deliver calibrated inferences in general. The posterior $\Pi_N(\cdot\mid\x)$ averages over the simulated path, so it is a function of the observed data alone, and the result below is a statement about $\x$ in the usual sense of posterior contraction. A property holds with $\mathsf{P}_0$-probability converging to one if the $\mathsf{P}_0$-probability of the set of data sequences on which it holds tends to one. %

\subsubsection{Posterior Concentration}\label{sec:concentration}

Let $\|\cdot\|$ denote a norm on the space $\mathcal{T}$ in which $T$ takes its values.  Throughout, $C$ is an arbitrary positive constant that can change from line to line. Let $\FF_{\star}:=\lim_n \FF_n$ represent the observed data limit (in $n$) of $\FF_n$, which is assumed to exist, and write $\vartheta_\star:=T(\FF_\star)$ for the functional evaluated at this limit and $\vartheta_0:=T(\mathsf P_0)$ for its population counterpart. The gap $b:=\|\vartheta_\star-\vartheta_0\|$ directly encodes the predictive engine's ability to replicate the true data-generating process through the functional $T(\cdot)$. We also write $\vartheta\sim\Pi_N(\cdot\mid x_{1:n})$.%

\begin{theorem}\label{thm:concentration}
Let Assumptions \ref{ass:predictives}--\ref{ass:stability} in Appendix \ref{app:theory} be satisfied, let $M_n\rightarrow\infty$ be any positive sequence with $M_nr_n\rightarrow0$, and set $\epsilon_n:=\varsigma_n+M_nr_n$, where $r_n$ and $\varsigma_n$ are the sequences of Assumption \ref{ass:stability}. For $n,N\rightarrow\infty$ such that $\sqrt{N-n}\,\epsilon_n\rightarrow\infty$, and for $K>0$ large enough, with $\mathsf{P}_0$-probability converging to one, for $\delta\ge K\epsilon_n$,
$$
\Pi_N\left\{\|\vartheta-\vartheta_{0}\|>b+\delta\mid x_{1:n}\right\}\le \frac{1}{K}.
$$
\end{theorem}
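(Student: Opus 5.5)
We prove the statement by splitting $\vartheta-\vartheta_0$ into three pieces through the triangle inequality,
$$
\|\vartheta-\vartheta_0\|\le \|T(\PP_{n,N})-T(\FF_n)\|+\|T(\FF_n)-\vartheta_\star\|+\|\vartheta_\star-\vartheta_0\|,
$$
where the last term is exactly $b$. On the event $\{\|\vartheta-\vartheta_0\|>b+\delta\}$, at least one of the first two terms must exceed $\delta/2$. A union bound therefore gives
$$
\Pi_N\{\|\vartheta-\vartheta_0\|>b+\delta\mid x_{1:n}\}\le \Pi_N\{\|T(\PP_{n,N})-T(\FF_n)\|>\delta/2\mid x_{1:n}\}+\Pi\{\|T(\FF_n)-\vartheta_\star\|>\delta/2\mid x_{1:n}\},
$$
where the second probability is under the joint law of the predictive path given $x_{1:n}$. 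We then bound each term by $1/(2K)$, using Markov's inequality applied to conditional first or second moments.

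\textbf{Second term (randomness of the limit $\FF_n$).} I expect Assumption~\ref{ass:stability} to provide two controls. The first is deterministic: $\|T(\FF^{\circ}_n)-\vartheta_\star\|\lesssim\varsigma_n$ with $\mathsf P_0$-probability tending to one, where $\FF^\circ_n$ is a centering such as the engine's predictive mean or the value at the observed data. The second is stochastic: $\E\{\|T(\FF_n)-T(\FF^\circ_n)\|\mid x_{1:n}\}\le C r_n$, reflecting the martingale property that the predictive mean of $\FF_n$ equals $\FF_{n:n+1}$. Markov's inequality then gives
$$
\Pi\{\|T(\FF_n)-\vartheta_\star\|>\delta/2\}\le \Pi\{C r_n>\delta/2-C\varsigma_n\}\le \frac{2Cr_n}{\delta-2C\varsigma_n}.
$$
With $\delta\ge K\epsilon_n=K(\varsigma_n+M_nr_n)$ and $K$ large enough that $K\ge 4C$, the denominator is at least $K\epsilon_n/2\ge KM_nr_n/2$. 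The bound is therefore at most $4C/(KM_n)$, which is below $1/(2K)$ once $M_n\to\infty$. The condition $M_nr_n\to0$ guarantees $\epsilon_n\to0$, so the statement is nontrivial. If the assumption instead gives a high-probability bound of order $r_n$, the same conclusion follows, because $M_nr_n$ dominates $r_n$.

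\textbf{First term (truncation at finite $N$).} We use a Lipschitz or Hadamard-type stability condition on $T$ from the assumptions, for instance $\|T(\PP)-T(\mathsf Q)\|\le C\,d(\PP,\mathsf Q)$ locally. We also need the martingale-type bound $\E\{d(\PP_{n,N},\FF_n)^2\mid x_{1:n}\}\le C/(N-n)$. This should follow from Assumption~\ref{ass:predictives}, since $\PP_{n,N}$ averages $N-n$ draws whose conditional laws converge to $\FF_n$, and the increments are martingale differences with bounded variance. Chebyshev's inequality then gives the bound $4C/\{(N-n)\delta^2\}\le 4C/\{(N-n)K^2\epsilon_n^2\}$, which tends to zero because $\sqrt{N-n}\,\epsilon_n\to\infty$, and is in particular below $1/(2K)$.

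\textbf{Main obstacle.} The delicate step is the first term: controlling $\PP_{n,N}-\FF_n$ uniformly enough for $T$ to transfer the bound. The problem is that $\PP_{n,N}$ also contains the $n$ observed atoms with weight $n/N$, and that the $z_j$ are not i.i.d. I would handle this by writing $\PP_{n,N}-\FF_n$ as a sum of martingale differences $\delta_{z_j}-\FF_{j-1:j}$ plus the drift $\FF_{j-1:j}-\FF_n$, and bounding the second moments of both pieces using Assumption~\ref{ass:predictives}. Finally, all statements hold on data events whose $\mathsf P_0$-probability tends to one, so intersecting finitely many such events yields the claim.
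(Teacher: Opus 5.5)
Your three-term decomposition and union bound follow the same architecture as the paper's proof. The step that controls $\|T(\PP_{n,N})-T(\FF_n)\|$, however, does not go through.

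You assert $\E\{d(\PP_{n,N},\FF_n)^2\mid x_{1:n}\}\le C/(N-n)$ and attribute it to Assumption~\ref{ass:predictives}. That assumption only gives $\PP_{n,N}\Rightarrow\FF_n$ as $N\to\infty$, with no rate and no moment control. So the martingale-difference-plus-drift argument you sketch has nothing with which to bound the drift $\FF_{j-1:j}-\FF_n$.

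More seriously, the bound cannot hold for $\PP_{n,N}$ in general. By the mixture representation \eqref{eq:mixture}, the observed atoms contribute a piece of order $\alpha_N\mathcal{D}(\PP_n,\FF_n)$. In one dimension with $\mathcal{W}_1$, for instance, $\mathcal{W}_1(\PP_{n,N},\FF_n)\ge\alpha_N\mathcal{W}_1(\PP_n,\FF_n)-(1-\alpha_N)\mathcal{W}_1(\PP_{N-n},\FF_n)$. When the engine's shape differs from the data (the Gaussian example with $\varkappa\neq1$), the conditional second moment is therefore of order at least $(n/N)^2$. That is not $O\{1/(N-n)\}$ unless $N\gtrsim n^2$, so it fails at the canonical $N\asymp n^{3/2}$.

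The missing idea is to split this piece off rather than ask it to decay at the simulation rate. Assumption~\ref{ass:discrepancy}(ii)--(iii) gives $\|T(\PP_{n,N})-T(\FF_n)\|\le CL_T\alpha_N\mathcal{D}(\PP_n,\FF_n)+CL_T(1-\alpha_N)\mathcal{D}(\PP_{N-n},\FF_n)$.
\begin{itemize}
\item The first term is $o(\varsigma_n)$ almost surely by Assumption~\ref{ass:stability}(ii). It is therefore eventually below $\delta/4$, because $\delta\ge K\varsigma_n$.
\item The second term is controlled by the conditional tail bound of Assumption~\ref{ass:concentration}, a $p$-th order bound with $p>1$ rather than a variance bound. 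Integrating it over the law of $\FF_n$ given $x_{1:n}$ gives at most $C/\{K\epsilon_n\sqrt{N-n}\}^{p}$, which is small precisely because $\sqrt{N-n}\,\epsilon_n\to\infty$.
\end{itemize}

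This also shows that you have misplaced $\varsigma_n$. It is the allowance for the observed-data component of the mixture, not for the bias of a centering $T(\FF_n^{\circ})$. Neither that deterministic bias bound nor the conditional first-moment bound $\E\{\|T(\FF_n)-T(\FF_n^{\circ})\|\mid x_{1:n}\}\le Cr_n$ is among the hypotheses, and the martingale condition is not assumed.

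Your fallback for the second term is the right one, but it needs the step that turns an unconditional statement into a conditional one. Assumption~\ref{ass:stability}(iii) gives $\mathcal{D}(\FF_n,\FF_\star)=O_p(r_n)$ under the joint law of data and path. Hence $\Pr\{L_T\mathcal{D}(\FF_n,\FF_\star)>\delta/2\}\to0$, because $\delta/2\ge KM_nr_n/2$ and $M_n\to\infty$. Since this is the $\mathsf{P}_0$-expectation of the conditional probability given $x_{1:n}$, Markov's inequality makes that conditional probability at most $1/(2K)$ with $\mathsf{P}_0$-probability tending to one. This is exactly how the paper treats its event $\Omega_{n,N}$, which constrains $\FF_n$ and so is not $\mathcal{F}_n$-measurable.
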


Theorem \ref{thm:concentration} points to an inherent robustness of MGP methods. Unlike likelihood-based Bayesian methods, which must model the whole distribution, they target one functional of $\mathsf{P}_0$ only. The predictive engine may therefore be misspecified in any way that leaves that functional intact: what is required is not that $\FF_\star=\mathsf{P}_0$, but only that $\vartheta_\star=\vartheta_0$. 

Theorem \ref{thm:concentration} highlights that taking $N$ very large needlessly wastes computation. The simulation effort enters the result only through the requirement $\sqrt{N-n}\,\epsilon_n\rightarrow\infty$, that is, $N-n\gg\epsilon_n^{-2}$, which in the canonical case in which $\epsilon_n$ is of order $n^{-1/2}$ asks only for $N-n\gg n$, and is therefore already implied by $\alpha_N:=n/N=o(1)$. What does restrict $N$ is the requirement that the observed data be negligible within the mixture $\PP_{n,N}$: Assumption \ref{ass:stability}(ii) with $\varsigma_n=n^{-1/2}$ reads $\alpha_N\mathcal{D}(\PP_n,\FF_n)=o(n^{-1/2})${, where $\PP_n$ is the empirical measure of $\x$}, so that when $\mathcal{D}(\PP_n,\FF_n)$ is bounded it suffices that $n^{3/2}/N\rightarrow0$, and when $\mathcal{D}(\PP_n,\FF_n)$ is in addition bounded away from zero the two are equivalent. Taking $N\gg n^{3/2}$ is therefore enough, and the same choice is what ensures that the resulting MGP is asymptotically Gaussian in Theorem \ref{thm:bvm}.

\subsubsection{Uncertainty Quantification}\label{sec:UQ}
To analyze the accuracy with which $\Pi_N$ quantifies uncertainty, we restrict our analysis to the case where the functional of interest takes values in a Euclidean space, i.e., $\mathcal{T}\subseteq\mathbb{R}^{d_\vartheta}$. To understand what the uncertainty generated from PBI is actually measuring, it is useful to decompose the behavior of $\vartheta_{}= T(\PP_{n,N})\sim \Pi_{N}(\cdot\mid\x)$ around $\vartheta_0:=T(\mathsf{P}_0)$: 
\begin{equation}
\label{eq:decomp_main}
\vartheta_{}-\vartheta_{0}
=\{T(\PP_{n,N})-T(\FF_n)\}+\{T(\FF_n)-T(\FF_{\star})\}+\{T(\FF_{\star})-T(\mathsf{P}_0)\}.    
\end{equation}The first term in \eqref{eq:decomp_main} controls variability due to fluctuations in the simulated data, conditional on the observed data, and can be analyzed for fixed $n$, as $N$ diverges. The second component in \eqref{eq:decomp_main} controls the variability of the (infeasible) predictive engine $\FF_n$ as $n$ increases; $\FF_n$ varies with the observed data and, given the data, with the simulated path. The last term captures the limitations of the predictive engine: if the predictive engine can accurately capture $T(\mathsf{P}_0)$ this term will be zero, else this term represents the irreducible bias under the chosen predictive engine. 

The decomposition in \eqref{eq:decomp_main} clarifies that so long as $T(\cdot)$ is smooth enough, convergence of the various components in \eqref{eq:decomp_main} will translate directly to the difference $\vartheta-\vartheta_0$. To formalize this result, we will maintain that $T(\cdot)$ is Fr\'{e}chet differentiable at $\FF_\star$ \citep[see, e.g.,][Chapter 20]{vandervaart1998asymptotic}: there exists a function {$\psi: \mathcal{X} \rightarrow \mathbb{R}^{d_\vartheta}$}, depending on $\FF_\star$, such that, for $\FF\in\mathcal{P}(\mathcal{X})$,
\begin{flalign}
T(\FF)=T(\FF_\star)+\int_{\mathcal{X}} \psi_{}(z) \dt\FF(z)+o\{\mathcal{D}(\FF,\FF_\star)\}, \quad \mathcal{D}(\FF,\FF_\star)\rightarrow0,\label{eq:frechet}
\end{flalign}
where, without loss of generality, $\int_{\mathcal{X}} \psi(z)\dt \FF_\star(z) =0$, and where $\mathcal{D}$ is the discrepancy in Assumption \ref{ass:discrepancy}. %
The function $\psi$ is the \textit{influence function} of $T$ at $\FF_\star$ \citep{hampel1974influence}. 

Let $\mathcal{F}_n:=\sigma(\x)$ denote the $\sigma$-field generated by the observed data, $\mathcal{L}(X\mid\mathcal{F}_n)$ the conditional law of $X$ given $\mathcal{F}_n$, and let $\dt_{\mathrm{BL}}(F,G):=\sup\{|\int f\dt F-\int f\dt G|:\|f\|_\infty+\|f\|_{\mathrm{Lip}}\le1\}$ denote the bounded-Lipschitz metric between $F$ and $G$. %

\begin{theorem}\label{thm:bvm}
Let Assumptions \ref{ass:predictives}--\ref{ass:dist} in Appendix \ref{app:theory} be satisfied, with $r_n=\varsigma_n=n^{-1/2}$. Then, there exists a random centering sequence $\bar\xi_n\in\mathbb{R}^{d_\vartheta}$, and a $d_\vartheta\times d_\vartheta$ covariance matrix $\Sigma_\star$ such that, as $n,N\rightarrow\infty$, with $\mathsf{P}_0$-probability converging to one,
$$
\dt_{\mathrm{BL}}\left[\mathcal{L}\left\{\sqrt{n}\left(\vartheta-\vartheta_\star\right)\mid\mathcal{F}_n\right\},\,{\mathcal{N}}(\sqrt{n}\bar\xi_n,\Sigma_\star)\right]\longrightarrow0.
$$
\end{theorem}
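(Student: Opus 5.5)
Throughout, the assumptions are used only through the roles I assign them below. I start from the decomposition \eqref{eq:decomp_main}, dropping its last term, since we centre at $\vartheta_\star$ rather than $\vartheta_0$. Write
\[
\sqrt n(\vartheta-\vartheta_\star)=\sqrt n\{T(\PP_{n,N})-T(\FF_n)\}+\sqrt n\{T(\FF_n)-T(\FF_\star)\}.
\]
Each term is linearised with the Fréchet expansion \eqref{eq:frechet} at $\FF_\star$. By Assumption \ref{ass:stability} with $r_n=\varsigma_n=n^{-1/2}$ (the same bounds that drive Theorem \ref{thm:concentration}), both $\mathcal D(\PP_{n,N},\FF_\star)$ and $\mathcal D(\FF_n,\FF_\star)$ are $O_p(n^{-1/2})$ conditionally on $\mathcal F_n$, with $\mathsf P_0$-probability tending to one. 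The remainders are therefore $o_p(n^{-1/2})$, and
\[
\sqrt n(\vartheta-\vartheta_\star)=\sqrt n\!\int\psi\,\dt(\PP_{n,N}-\FF_n)+\sqrt n\!\int\psi\,\dt\FF_n+o_p(1).
\]
Because bounded-Lipschitz distance is insensitive to $o_p(1)$ perturbations in conditional probability, it suffices to handle the two linear terms.

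For the first term I split
\[
\PP_{n,N}=\alpha_N\PP_n+(1-\alpha_N)\,\frac1{N-n}\sum_{j>n}\delta_{z_j}.
\]
The observed-data part contributes $\sqrt n\,\alpha_N\int\psi\,\dt(\PP_n-\FF_n)$, which is $o_p(1)$ by Assumption \ref{ass:stability}(ii), i.e.\ the condition $\alpha_N\mathcal D(\PP_n,\FF_n)=o(n^{-1/2})$. The simulated part is a normalised sum of martingale-type increments $\psi(z_j)-\int\psi\,\dt\FF_n$ along the predictive path. Its conditional variance is of order $n/(N-n)\to0$, so it too is $o_p(1)$ given $\mathcal F_n$. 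Hence all the spread comes from the second term, $\sqrt n\int\psi\,\dt\FF_n$, where $\FF_n$ is random given $\mathcal F_n$ through the infinite predictive path.

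For the second term I use the martingale structure of the engine. Set $\bar\xi_n:=\int\psi\,\dt\FF_{n:n+1}$, the predictive mean of $\psi$ at time $n$, which is $\mathcal F_n$-measurable. By the martingale property of predictive resampling, $\int\psi\,\dt\FF_n-\bar\xi_n=\sum_{i\ge n}(\mu_{i+1}-\mu_i)$, where $\mu_i=\int\psi\,\dt\FF_{i:i+1}$ is a $\mathcal F_i$-martingale along the simulated path. I would then apply a martingale CLT for the tail sum (e.g.\ Hall–Heyde, in the form used by \citealp{fong2023martingale}), scaled by $\sqrt n$. This needs two inputs from Assumption \ref{ass:dist}:
\begin{itemize}
\item the conditional variance condition $n\sum_{i\ge n}\E\{(\mu_{i+1}-\mu_i)(\mu_{i+1}-\mu_i)^\top\mid\mathcal F_i\}\to\Sigma_\star$ in probability;
\item a conditional Lindeberg condition, which follows from the typical increment size $O(1/i)$ together with a moment bound on $\psi$.
\end{itemize}
This yields conditional convergence of $\sqrt n(\int\psi\,\dt\FF_n-\bar\xi_n)$ to $\mathcal N(0,\Sigma_\star)$. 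Here $\sqrt n\bar\xi_n$ plays the role of the random centring: the statement keeps it inside the normal, since it need not be negligible under misspecification.

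The main obstacle is this martingale CLT step: it must hold conditionally on $\mathcal F_n$, in $\mathsf P_0$-probability, for a triangular array whose starting point depends on the data. The identification of the variance limit $\Sigma_\star$ also needs uniformity over data sets where $\FF_{n:n+1}$ is near $\FF_\star$. I would handle this by working on the high-probability event where Assumption \ref{ass:stability} holds, and by metrising weak convergence with $\dt_{\mathrm{BL}}$, so that the three pieces combine by the triangle inequality: the $o_p(1)$ remainders, the negligible simulation term, and the Gaussian tail term. The convergence in $\dt_{\mathrm{BL}}$ then follows with $\mathsf P_0$-probability converging to one.
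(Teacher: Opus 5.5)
There is a genuine gap in your treatment of the second term. You centre at $\int\psi\,\dt\FF_{n:n+1}$ and derive the Gaussian limit from a martingale CLT for $\mu_i=\int\psi\,\dt\FF_{i:i+1}$. However, none of Assumptions \ref{ass:predictives}--\ref{ass:dist} imposes the martingale condition on the engine. So $\mu_i$ need not be a martingale, and $\int\psi\,\dt\FF_{n:n+1}$ need not be the conditional mean of $\int\psi\,\dt\FF_n$. For a drifting engine the two differ by an amount that is not $o_p(n^{-1/2})$, which displaces your normal limit. Nor does Assumption \ref{ass:dist} supply the conditional quadratic-variation and Lindeberg conditions you attribute to it: it is a conditional weak-convergence statement and says nothing about increments along the path. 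What it does state is exactly the conclusion you are trying to reach, namely $\sqrt n(\FF_n-\bar{\FF}_n)\mid\mathcal F_n\Rightarrow\GG_{\FF_\star}$ in $\ell^\infty(\mathcal G)$, with $\bar{\FF}_n:=\E(\FF_n\mid\mathcal F_n)$. The repair runs as follows. Take $\bar\xi_n:=\int\psi\,\dt\bar{\FF}_n$. Linearize $T(\FF_n)$ by Assumption \ref{ass:frechet}, which is legitimate since $\mathcal D(\FF_n,\FF_\star)=O_p(n^{-1/2})$ by Assumption \ref{ass:stability}(iii). This gives $\sqrt n\{T(\FF_n)-T(\FF_\star)\}-\sqrt n\bar\xi_n=\int\psi\,\dt\{\sqrt n(\FF_n-\bar{\FF}_n)\}+o_p(1)$. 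Then apply the continuous mapping theorem to $h\mapsto\int\psi\,\dt h$, which is continuous and linear on $\ell^\infty(\mathcal G)$ because the coordinates of $\psi$ lie in $\mathcal G$. The limit is $\mathcal N(0,\Sigma_\star)$ with $\Sigma_\star$ defined by \eqref{eq:psi_gaussian}. The ``main obstacle'' you identify is therefore part of the hypotheses rather than something to prove. Your martingale-CLT route would establish the result only under extra conditions (martingale engine, variance convergence, Lindeberg) that the theorem does not impose.

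The first term has a second, smaller problem. Linearizing $T(\PP_{n,N})$ through Assumption \ref{ass:frechet} requires $\mathcal D(\PP_{n,N},\FF_\star)=O_p(n^{-1/2})$. This cannot be derived, because $\mathcal D$ is not assumed to satisfy a triangle inequality. Likewise, bounding $\int\psi\,\dt(\PP_n-\FF_n)$ by $\mathcal D(\PP_n,\FF_n)$, and the claimed conditional variance of order $n/(N-n)$, are not available from the assumptions. No linearization is needed here. Assumption \ref{ass:discrepancy}(ii)--(iii) and the mixture representation give $\sqrt n\|T(\PP_{n,N})-T(\FF_n)\|\le CL_T\sqrt n\,\alpha_N\mathcal D(\PP_n,\FF_n)+CL_T\sqrt n\,\mathcal D(\PP_{N-n},\FF_n)$. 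The first piece is $o(1)$ by Assumption \ref{ass:stability}(ii). For the second, Assumption \ref{ass:concentration} with $u=\varepsilon/(CL_T\sqrt n)$ bounds the conditional probability that it exceeds $\varepsilon$ by $C\varepsilon^{-p}\{n/(N-n)\}^{p/2}\rightarrow0$, uniformly in $(x_{1:n},\FF_n)$. Hence the term is negligible conditionally on $\mathcal F_n$ as well.
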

Theorem \ref{thm:bvm} asserts that the difference between the posterior of $\sqrt{n}(\vartheta-\vartheta_\star)$ and the random distribution $\mathcal{N}(\sqrt{n}\bar\xi_n,\Sigma_\star)$ converges to zero, so that in large samples {the law of $\sqrt{n}(\vartheta-\vartheta_\star)$ given $\mathcal{F}_n$} behaves like $\mathcal{N}(\sqrt{n}\bar\xi_n,\Sigma_\star)$. The centering sequence $ \sqrt{n}\bar\xi_n$ captures the local behavior of the influence function under the predictive engine: if the predictive engine satisfies the martingale condition of \citet{fong2023martingale}, $\E[\FF_n(x)\mid\mathcal{F}_n]=\FF(X_{n+1}\le x\mid\x)$, then $\bar\xi_n\equiv \int \psi(x)\dt \FF(X_{n+1}\le x\mid\x)$ and $\bar\xi_n$ plays the same role as the estimated centering sequence within the parametric martingale framework of \citet{fong2026asymptotics} (see, in particular, their Theorem 3). The matrix $\Sigma_\star$ directly measures the variability of the influence function under the limiting predictive model $\FF_\star$, and in many cases, but not all, we will have $ 
\Sigma_\star=\int \psi(z)\psi(z)^\top\dt \FF_\star(z).
$ For the general definitions of $\bar\xi_n$ and $\Sigma_\star$, and a more detailed treatment and discussion of these quantities, we refer to Appendix \ref{app:normality_discuss}. 

The key takeaway of Theorem \ref{thm:bvm} is that credible sets based on $\Pi_N$ have asymptotic width $\Sigma_\star$. Heuristically, this means that a credible set $B_{n,N}$ satisfying $\Pi_N(\vartheta\in B_{n,N}\mid\x)=1-\alpha$ can be viewed as (asymptotically equivalent to) a set of the form $W_{n,N}:=\vartheta_\star+\bar\xi_n+\Sigma_\star^{1/2}C_{n,N}/\sqrt{n}$ for some $C_{n,N}$ that asymptotically contains $1-\alpha$ probability under the Gaussian distribution. To see the implications of this, write $\hat\vartheta_n:=\vartheta_\star+\bar\xi_n$, so that $W_{n,N}= \hat\vartheta_n+\Sigma_\star^{1/2}C_{n,N}/\sqrt{n}$ resembles a Wald-based confidence set. However, closer inspection clarifies that $W_{n,N}$ is not a confidence set {with the nominal level} in general unless the asymptotic variance of the ``point estimator'' $\hat\vartheta_n$, defined as 
\begin{equation}\label{eq:sigma0_main}
\Sigma_0:=\lim_{n\rightarrow\infty}\Var_{\x\sim \mathsf{P}_0}\left\{\sqrt{n}(\hat\vartheta_n-\vartheta_\star)\right\}\equiv \lim_{n\rightarrow\infty}\Var_{\x\sim \mathsf{P}_0}\left\{\sqrt{n}\bar\xi_n\right\},
\end{equation}
coincides with $\Sigma_\star$. This equivalence is unlikely to be satisfied in general, but holds in the canonical i.i.d.\ setting when the predictive engine is well-specified, that is $\FF_\star=\mathsf{P}_0$, and satisfies additional technical conditions on its centering and curvature; see Appendix \ref{app:normality_discuss} for details.
In this well-specified setting, both variance matrices are given by $\int\psi\psi^\top\dt\mathsf{P}_0$ and we expect correct calibration. Like in traditional Bayes however, if the predictive engine is misspecified, that is $\FF_\star\ne\mathsf{P}_0$, the MGP may not deliver calibrated inferences. We briefly highlight that the requirement $\FF_\star=\mathsf{P}_0$ is relatively weak due to the additional flexibility in specifying the predictive engine and is not necessarily sufficient for MGP calibration, which we discuss further in Appendix \ref{app:normality_discuss}.

\section{Bagged MGPs}\label{sec:bagging}

For the MGP, uncertainty quantification for the population value $\vartheta_0=T(\mathsf{P}_0)$ is induced by the imputation of future observations according to a predictive engine. While the predictive engine may be estimated from observed data, 
it is generally not the case that the predictive engine will learn a sufficiently accurate estimate of the data-generating process from a finite sample, for example if the predictive model is misspecified. Furthermore, we have illustrated in the previous section that the calibration of the MGP depends delicately on this predictive engine. In this section, we propose a direct remedy to this issue.

\subsection{Proposed Method}

Interestingly, accurate (conservative) uncertainty quantification is completely feasible within the MGP framework through a simple change: rather than run predictive resampling for $M$ paths, each initialized at the observed data $\x$, we instead propose to create $B$ bagged data sets and run $M_B$ paths from each. Let $\x^{*}:=(x_1^{*},\dots,x_n^{*})$ denote a bootstrapped sample of observations from the empirical measure $\PP_n$, and let $\PP_n^*$ denote the bootstrap law under which $\x^*$ is generated, with $\E^*_n$ the corresponding bootstrap expectation. %
Predictive resampling then allows us to generate draws from
$$
\vartheta^{*}\sim \Pi_N(\cdot\mid\x^{*}),\quad  \Pi_N\left(\vartheta\in \cdot\mid \x^{*}\right)=\int \mathds{1}\left\{T(\PP^*_{n,N})\in \cdot\right\}\FF(\dt z_{n,N}\mid \x^{*}),
$$where we recall that $\vartheta=T(\PP_{n,N})$ and $\PP^*_{n,N}$ denotes the corresponding version of \eqref{eq:mgp} calculated from the bootstrap sample $\x^*$. The idealized bMGP is then defined as 
$$
\Pi_N^{\dagger}(\cdot\mid\x^{}):= \E_{n}^{*}[\Pi_N(\cdot\mid\x^{*})\mid\x].
$$In practice the expectation defining $\Pi^{\dagger}_N$ is itself approximated by Monte Carlo. Drawing $\x^{*(1)},\dots,\x^{*(B)}$ independently from $\PP^{*}_n$ gives the bMGP
\begin{equation}\label{eq:finiteB}
        \widehat\Pi_{N}^{\dagger}(\cdot\mid x_{1:n})
        :=
        \frac1B\sum_{b=1}^B
        \Pi_N (\cdot\mid \x^{*(b)} ),
\end{equation}
which converges to $\Pi^{\dagger}_N(\cdot\mid\x)$ as $B\rightarrow\infty$ with $\x$ held fixed. We give pseudo-code for the bMGP in Algorithm \ref{alg:bag_predictive_resampling}.
\begin{figure}[htbp]
\small
  \centering
\begin{minipage}{.95\linewidth}
\spacingset{1}
\begin{algorithm}[H]
\DontPrintSemicolon
  \SetAlgoLined
{Compute $\PP_n$ from the observed data $x_{1:n}$.
For $b=1,\dots,B$, draw $\x^{\ast(b)}\sim \PP^{\ast}_n$\;}
  \For{$b\gets 1$ \textnormal{\textbf{to}} $B$}{
  Initialize the predictive engine $\FF^{(b)}$ at $\x^{\ast(b)}$\;
  \For{$j \gets 1$ \textnormal{\textbf{to}} $M_B$}{
  Reset $\FF^{(b)}$ to its initialization\;
  \For{$i \gets n+1$ \textnormal{\textbf{to}} $N$} {
  Sample $Z_{i} \sim {\FF}^{(b)}_{i-1:i}$\;
  Update $\FF^{(b)} \leftarrow \left\{\FF^{(b)}_{i-1:i}, Z_{i}\right\} $\;
  }
 Compute $\PP^{\ast (b)}_{n,N}$ from $\{\x^{\ast(b)}, Z_{n+1:N}\}$\;
 Evaluate   ${\vartheta}^{(b)}_{j} =T(\PP^{\ast(b)}_{n,N}) $ \;}}
 Return $\{\vartheta^{(1)}_{1},\ldots,\vartheta^{(1)}_{M_B},\ldots,\vartheta^{(B)}_{M_B}\}\sim \widehat\Pi^\dagger_N$, the $M_B$ draws sharing a bag being exchangeable rather than independent\;
\caption{Bagged predictive resampling}\label{alg:bag_predictive_resampling}
\end{algorithm}
\end{minipage}
\end{figure}

Randomizing the initialization in this way encodes the variation of the observed data directly into the posterior draws without any additional simulation cost. %
Disregarding the cost of the initialization step, the {number of forward-simulation steps} required to sample the MGP and bMGP can easily be made equivalent: the MGP must simulate $M(N-n)$ terms, while the bMGP must simulate $BM_B(N-n)$, so that taking $BM_B=M$ equates the simulation cost of the two algorithms. The split between $B$ and $M_B$ does not enter the theory below, which concerns the idealized posterior $\Pi^{\dagger}_N$; it affects only the Monte Carlo accuracy of the finite-$B$ average \eqref{eq:finiteB}. 

\subsection{A Simple Gaussian Example}\label{sec:Gaussexample}
While implementation of the bMGP is a simple change relative to the MGP, it can have profound implications for uncertainty quantification. Theorems \ref{thm:concentration} and \ref{thm:bvm} are best appreciated in a setting in which the predictive engine can be misspecified in a controlled way and the functional of interest is elementary.
Suppose our goal is inference for the mean functional:
\begin{equation}\label{eq:m&v_fun}
T(\nu) = \int z \dt\nu(z) = \E_{Z\sim \nu}[Z].
\end{equation}
Inference for $T(\nu)$ uses the mean update of the parametric martingale recursion of \citet{fong2026asymptotics}: for $t=n,\dots,N-1$,
\begin{equation}
\mu_{t+1} = \mu_t+\frac{z_{t+1}-\mu_t}{t+1},\qquad z_{t+1}\mid\mathcal{F}_t \sim \mathcal{N}\left(\mu_t,\,\varkappa\,\sigma^2_n\right),\label{eq:gauss_pred}
\end{equation}
where $\mu_n=\bar{x}_n$ and $\sigma^2_n$ are the sample mean and sample variance of the observed data, $\{\mathcal F_t\}_{t\ge n}$ is the natural filtration generated by the observed and simulated data up to time $t$, and $\varkappa>0$ is a fixed constant. The predictive variance is held at $\varkappa\sigma^2_n$ along the whole path. At $\varkappa=1$ the engine draws with the variance of the data; for $\varkappa<1$ it is over-confident about the spread of future observations, and for $\varkappa>1$ under-confident.

The mean is a martingale with increment variances $\varkappa\sigma^2_n/(t+1)^{2}$, so summing them gives, as $n\rightarrow\infty$ with $N/n\rightarrow\infty$,
\begin{equation}\label{eq:gauss_sigmastar}
n\Var\left\{T(\PP_{n,N})\mid\x\right\}\longrightarrow \varkappa\,\sigma^{2},
\end{equation}
where $\sigma^{2}$ is the variance of $\mathsf{P}_0$, whereas calibration requires the sampling variance $\Sigma_0=\sigma^{2}$ of $\sqrt{n}\,\bar{x}_n$. The engine has the limit $\FF_\star=\mathcal{N}(\vartheta_0,\varkappa\sigma^{2})$, its mean satisfies the martingale condition, and \eqref{eq:gauss_sigmastar} {gives, by direct calculation, $\Sigma_\star=\varkappa\sigma^{2}$ for this engine}. At $\varkappa=1$ the two variances agree and credible sets are calibrated. In what follows we take $\varkappa=1/2$, so that $\Sigma_\star=\sigma^{2}/2$.\footnote{{By direct calculation, in the form of} Corollary \ref{cor:coverage}, the asymptotic coverage of a nominal $95\%$ MGP interval is then $2\Phi(z_{0.975}/\sqrt{2})-1\approx0.83$, and that of the bMGP interval is $2\Phi(z_{0.975}\sqrt{3/2})-1\approx0.98$.}

We now contrast the two methods in a repeated-sampling exercise. Both simulate the same number of predictive paths, $1000$, for the same number of steps, $N-n=\lceil n^{3/2}\rceil$; the MGP starts all $1000$ from $\x$, while the bMGP starts $M_B=20$ paths from each of $B=50$ bagged data sets. Table \ref{tab:mean-r200-mn-b50} reports the Monte Carlo coverage of the resulting $95\%$ credible set for the population mean over $200$ replicated data sets, under a Gaussian data-generating process (DGP) matching the form of the predictive engine, $\mathcal{N}(0,1)$, and under a non-Gaussian DGP, $\mathrm{Ga}(2,2)$. Across both DGPs and all three sample sizes the bMGP delivers empirical coverage close to the nominal level, whereas the MGP coverage ranges between $77\%$ and $86\%$. %
\begin{table}[htbp]
\centering
\small
\begin{threeparttable}
\caption{Monte Carlo coverage of nominal $95\%$ credible intervals for the population mean
$\vartheta_0$, with the signed bias of the posterior mean in parentheses, over $200$ replicated
data sets. The predictive engine is \eqref{eq:gauss_pred} with $\varkappa=1/2$. Both methods use
$1000$ predictive paths and $N-n=\lceil n^{3/2}\rceil$ steps; the bMGP splits its paths over
$B=50$ bagged data sets, $M_B=20$ each.}
\label{tab:mean-r200-mn-b50}
\begin{tabular}{ccccc}
\toprule
& \multicolumn{2}{c}{DGP: \(\mathcal{N}(0,1)\)}
& \multicolumn{2}{c}{DGP: \(\mathrm{Ga}(2,2)\)} \\
\cmidrule(lr){2-3} \cmidrule(lr){4-5}
\(n\) & MGP & bMGP & MGP & bMGP \\
\midrule
100 & \(0.770\,( 0.00154)\) & \(0.975\,( 0.00370)\)
& \(0.790\,( 0.00100)\) & \(0.970\,( 0.00078)\) \\
200 & \(0.855\,(-0.00405)\) & \(0.995\,(-0.00323)\)
& \(0.815\,( 0.00209)\) & \(0.975\,( 0.00255)\) \\
500 & \(0.810\,(-0.00302)\) & \(0.980\,(-0.00303)\)
& \(0.845\,(-0.00050)\) & \(0.970\,(-0.00037)\) \\
\bottomrule
\end{tabular}
\end{threeparttable}
\end{table}

Figure \ref{fig:overlays} shows why the bMGP delivers nearly calibrated credible sets. For one representative data set under each data-generating process, the left column plots the predictive paths generated by Algorithm \ref{alg:bag_predictive_resampling}, colored by the bagged data set they start from, and the right column the terminal posterior. The paths separate quickly and then drift only slowly, so that the posterior is essentially determined by the location of the paths after only a few hundred predictive steps. That terminal spread has two parts: paths sharing a bagged data set differ only through the predictive noise, and their spread is the path variance of \eqref{eq:gauss_sigmastar}, which at $\varkappa=1/2$ is $\sigma^2/2$, the $\Sigma_\star$ of Theorem \ref{thm:bvm}; the spread \emph{between} bags is the sampling variability $\sigma^2$ of $\sqrt{n}\,\bar{x}_n$, which is $\Sigma_0$, the term required for calibrated inference. The two sum to $3\sigma^2/2$, so the within-bag component accounts for one third of the posterior variance and the between-bag part for two thirds. In contrast, the MGP posterior variance only contains the first component, which cannot deliver calibrated inferences in this setting. 
\begin{figure}[H]
\centering
\includegraphics[scale=0.85]{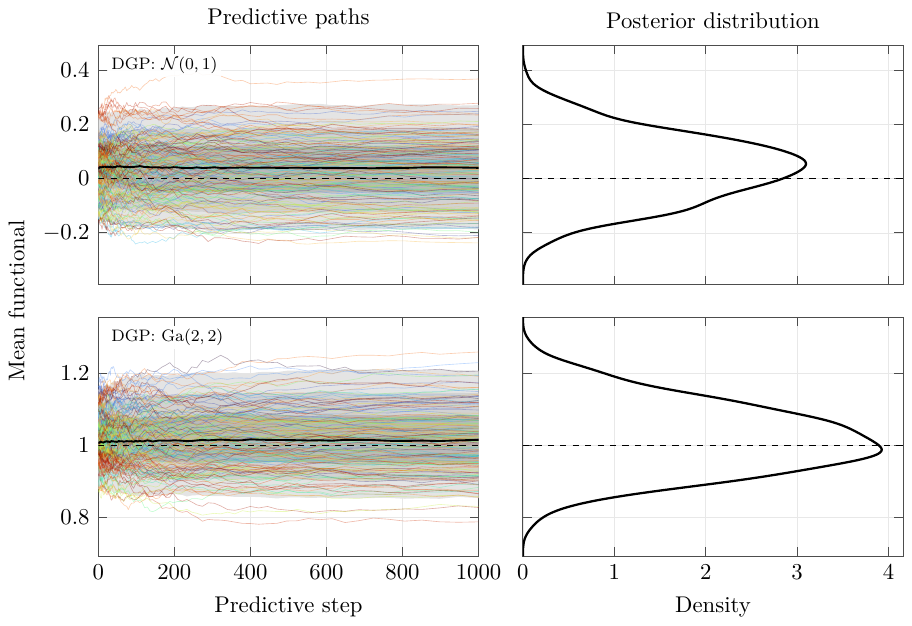}
\caption{The bMGP for the mean functional example. The paths shown are a subsample. %
Left: the predictive paths colored
by (bagged) data set, over the pooled $2.5$--$97.5$, $10$--$90$ and
$25$--$75$ quantile bands with the pooled median in black, based on all $B\times M_B$ {paths}. Right: terminal posterior. The dashed line marks the population mean.}
    \label{fig:overlays}
\end{figure}

\subsection{Theoretical Validity}\label{sec:tradeoff}
The bMGP directly injects uncertainty about the observed data through the variation inherent in the bootstrapped data. The following result shows that the bagged posterior attains width $\Sigma_0+\Sigma_\star\ge\Sigma_0$, so that its draws deliver conservatively calibrated inferences for the pseudo-true value $\vartheta_\star$. 

\begin{theorem}\label{thm:boot}
Let Assumptions \ref{ass:predictives}--\ref{ass:boot} in Appendix \ref{app:theory} be satisfied with $r_n=\varsigma_n=n^{-1/2}$. In addition, let Assumptions \ref{ass:concentration}, \ref{ass:stability} and \ref{ass:dist} also hold for {$\PP^{*}_{n}$, $\PP^{*}_{N-n}$ and $\FF^{*}_n=\FF_n(\cdot\mid\x^{*})$ in place of $\PP_{n}$, $\PP_{N-n}$ and $\FF_n$}. Let $\bar\xi_n$ and $\Sigma_\star$ be as in Theorem \ref{thm:bvm}, and let $\Sigma_0$ be as in \eqref{eq:sigma0_main}. Then, for $\vartheta^\dagger\sim\Pi_N^\dagger(\cdot\mid\x)$, as $n,N\rightarrow\infty$, with $\mathsf{P}_0$-probability converging to one,
$$
\dt_{\mathrm{BL}}\left[\mathcal{L}\left\{\sqrt{n}\left(\vartheta^\dagger-\vartheta_\star\right)\mid\mathcal{F}_n\right\},\,\mathcal{N}(\sqrt{n}\,\bar\xi_n,\Sigma_0+\Sigma_{\star})\right]\longrightarrow0.
$$
\end{theorem}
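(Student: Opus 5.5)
Our approach is to view $\vartheta^\dagger$ as a two-stage draw: first $\x^*\sim\PP_n^*$, then $\vartheta^\dagger\mid\x^*\sim\Pi_N(\cdot\mid\x^*)$. We then combine a conditional version of Theorem \ref{thm:bvm}, applied to the inner stage, with a bootstrap central limit theorem for the outer stage. Write $\bar\xi_n^*$ for the centering sequence of Theorem \ref{thm:bvm} computed from $\x^*$. We then decompose
\[
\sqrt n(\vartheta^\dagger-\vartheta_\star)=\sqrt n(\vartheta^\dagger-\vartheta_\star-\bar\xi_n^*)+\sqrt n(\bar\xi_n^*-\bar\xi_n)+\sqrt n\,\bar\xi_n .
\]
The first term is the within-bag fluctuation, the second the between-bag fluctuation, and the third the deterministic (given $\mathcal F_n$) centering.

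\textbf{Step 1 (inner stage).} Because Assumptions \ref{ass:concentration}, \ref{ass:stability} and \ref{ass:dist} are assumed to hold with $\PP_n^*$, $\PP^*_{N-n}$ and $\FF_n^*$ in place of their unstarred versions, the proof of Theorem \ref{thm:bvm} can be rerun conditionally on $\x^*$. It should give
\[
\dt_{\mathrm{BL}}\bigl[\mathcal L\{\sqrt n(\vartheta^\dagger-\vartheta_\star)\mid\x^*\},\mathcal N(\sqrt n\bar\xi_n^*,\Sigma_\star)\bigr]\to0
\]
in $\PP_n^*$-probability, with $\mathsf P_0$-probability tending to one. The key point is that the limiting covariance stays $\Sigma_\star$, because $\FF^*_n\to\FF_\star$ by the same Fr\'echet expansion \eqref{eq:frechet} at $\FF_\star$. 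The bounded-Lipschitz distance is bounded by $2$, so dominated convergence lets us average this statement over $\x^*$. We conclude that $\mathcal L\{\sqrt n(\vartheta^\dagger-\vartheta_\star)\mid\mathcal F_n\}$ is asymptotically the mixture $\E_n^*[\mathcal N(\sqrt n\bar\xi_n^*,\Sigma_\star)]$.

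\textbf{Step 2 (outer stage).} We will use Assumption \ref{ass:boot}, which we read as providing a linear representation $\bar\xi_n=\frac1n\sum_i\tilde\psi(x_i)+o_p(n^{-1/2})$, together with its bootstrap analogue and a moment condition. A conditional bootstrap CLT (Lindeberg under $\PP_n^*$, or Gin\'e--Zinn) then gives
\[
\dt_{\mathrm{BL}}\bigl[\mathcal L\{\sqrt n(\bar\xi_n^*-\bar\xi_n)\mid\mathcal F_n\},\mathcal N(0,\Sigma_0)\bigr]\to0,
\]
with $\Sigma_0$ identified as in \eqref{eq:sigma0_main}, since the bootstrap variance of the linear term converges to $\Var_{\mathsf P_0}\{\tilde\psi(X)\}=\Sigma_0$.

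\textbf{Step 3 (combination).} Consider a Gaussian location mixture $\E_n^*[\mathcal N(m^*,\Sigma_\star)]$ whose location $m^*=\sqrt n\bar\xi_n+\sqrt n(\bar\xi_n^*-\bar\xi_n)$ is asymptotically $\mathcal N(\sqrt n\bar\xi_n,\Sigma_0)$. We would show that such a mixture is close in $\dt_{\mathrm{BL}}$ to $\mathcal N(\sqrt n\bar\xi_n,\Sigma_0+\Sigma_\star)$. The map $m\mapsto\int f\,\dt\mathcal N(m,\Sigma_\star)$ is bounded and Lipschitz whenever $f$ is, with constants controlled uniformly over the $\dt_{\mathrm{BL}}$ unit ball. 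Hence the mixture distance is bounded by $\dt_{\mathrm{BL}}$ between the laws of $m^*$ and $\mathcal N(\sqrt n\bar\xi_n,\Sigma_0)$, which is translation invariant and tends to zero by Step 2. Convolving $\mathcal N(\cdot,\Sigma_0)$ with $\mathcal N(0,\Sigma_\star)$ then yields exactly $\Sigma_0+\Sigma_\star$, and the triangle inequality finishes the proof.

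\textbf{Main obstacle.} We expect the hardest part to be Step 1, namely making the conditional version of Theorem \ref{thm:bvm} hold \emph{uniformly enough} over bootstrap samples, with $\mathsf P_0\times\PP_n^*$-probability tending to one, so that the averaging is legitimate. This includes ensuring that the remainder in the Fr\'echet expansion at $\FF^*_n$ is $o_p(n^{-1/2})$ jointly. We would handle it by working on a bootstrap event of $\PP_n^*$-probability tending to one on which the starred assumptions hold with the same rates. Outside that event we use the bound $2$ for $\dt_{\mathrm{BL}}$, so its contribution to the averaged distance is at most twice its probability.
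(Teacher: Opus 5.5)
Your proposal is correct and uses the paper's decomposition. Your within-bag term $\sqrt n(\vartheta^\dagger-\vartheta_\star-\bar\xi^{*}_n)$ is, up to the Fr\'echet remainder, the paper's $R_n+B_n$, and your between-bag term is its $A_n=Z_n$. Both arguments rest on the tower property through $\mathcal F_n\subseteq\mathcal F^{*}_n$ and on the fact that the inner limit $\mathcal N(0,\Sigma_\star)$ does not depend on $\x^{*}$.

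The difference is how the two pieces are combined. The paper linearizes first and then factorizes the conditional characteristic function $\varphi_n(t)$ by the tower property. This requires an extra step at the end (integrating over $t$, extracting subsequences, and an almost-everywhere form of L\'evy's continuity theorem) to upgrade pointwise-in-$t$ convergence in probability to conditional weak convergence in probability. You instead average the conditional $\dt_{\mathrm{BL}}$ bound over $\x^{*}$. You then use that Gaussian smoothing $m\mapsto\int f\,\dt\mathcal N(m,\Sigma_\star)$ maps the bounded-Lipschitz unit ball into itself, which yields the explicit bound
\begin{multline*}
\dt_{\mathrm{BL}}\bigl[\mathcal L\{\sqrt n(\vartheta^\dagger-\vartheta_\star)\mid\mathcal F_n\},\,\mathcal N(\sqrt n\,\bar\xi_n,\Sigma_0+\Sigma_\star)\bigr]\\
\le\E\Bigl(\dt_{\mathrm{BL}}\bigl[\mathcal L\{\sqrt n(\vartheta^\dagger-\vartheta_\star)\mid\mathcal F^{*}_n\},\,\mathcal N(\sqrt n\,\bar\xi^{*}_n,\Sigma_\star)\bigr]\Bigm|\mathcal F_n\Bigr)\\
+\dt_{\mathrm{BL}}\bigl[\mathcal L(Z_n\mid\mathcal F_n),\,\mathcal N(0,\Sigma_0)\bigr].
\end{multline*}
This gives the stated conclusion directly, in the target metric and without any subsequence argument.

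Your anticipated main obstacle largely disappears. The starred assumptions are imposed under the joint law $\mathsf P^{*}_0$, so your Step 1 holds in $\mathsf P^{*}_0$-probability. Since $\dt_{\mathrm{BL}}\le2$, bounded convergence and Markov's inequality send the first term on the right to zero in $\mathsf P_0$-probability, with no uniformity over resamples needed.

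Two points of reading should be corrected. First, Step 2 needs no asymptotically linear representation, bootstrap CLT or moment condition. Assumption \ref{ass:boot} \emph{is} the statement $Z_n=\sqrt n(\bar\xi^{*}_n-\bar\xi_n)\mid\mathcal F_n\Rightarrow\mathcal N(0,\Sigma_0)$ in $\mathsf P_0$-probability, with $\Sigma_0$ as in \eqref{eq:sigma0_main}, so you should cite it rather than derive it. As written, your Step 2 would prove the theorem only for engines with an asymptotically linear center, a restriction the paper deliberately avoids. Identifying $\Var_{\mathsf P_0}\{\tilde\psi(X)\}$ with the limit of variances in \eqref{eq:sigma0_main} would also need uniform integrability, which an $o_p(n^{-1/2})$ remainder does not supply.

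Second, in Step 1 the covariance stays $\Sigma_\star$ because the starred Assumption \ref{ass:dist} gives $\sqrt n(\FF^{*}_n-\bar\FF^{*}_n)\mid\mathcal F^{*}_n\Rightarrow\GG_{\FF_\star}$ with the same Gaussian limit. The Fr\'echet expansion at $\FF_\star$, applied with $H_n=\FF^{*}_n$ via the starred Assumption \ref{ass:stability}(iii), only ensures that the same influence function $\psi$ appears.
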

To convert the above into a statement about coverage, we must assume that the centering sequence $\bar\xi_n$ appearing in Theorems \ref{thm:bvm}--\ref{thm:boot} has a Gaussian limit. Such a condition would follow from a functional central limit theorem for $\sqrt{n}(\bar\FF_n-\FF_\star)$, which our assumptions do not explicitly maintain: Assumption \ref{ass:dist} concerns the fluctuation of the path about $\bar\FF_n$, %
not that of $\bar\FF_n$ itself. See Appendices \ref{app:normality_discuss} and \ref{app:bmgp} for further details and discussion.

\begin{corollary}\label{cor:coverage}
Suppose, in addition to the conditions of Theorem \ref{thm:boot}, that Assumption \ref{ass:center} is satisfied, and fix $c\in\mathbb{R}^{d_\vartheta}$ with $c^\top\Sigma_0c>0$. Then, 
\begin{enumerate}
    \item[(i)] the equal-tailed $(1-\alpha)$ credible interval for $c^\top\vartheta_\star$ obtained from $\Pi_N(\cdot\mid\x)$ has asymptotic coverage
$
2\Phi\left(z_{1-\alpha/2}\sqrt{c^\top\Sigma_\star c/c^\top\Sigma_0c}\right)-1; 
$ 
\item[(ii)] the equal-tailed $(1-\alpha)$ credible interval for $c^\top\vartheta_\star$ obtained from $\Pi^{\dagger}_N(\cdot\mid\x)$ 
has asymptotic coverage $2\Phi\bigl(z_{1-\alpha/2}\sqrt{1+c^\top\Sigma_\star c/c^\top\Sigma_0c}\bigr)-1\ge1-\alpha$.
\end{enumerate}
\end{corollary}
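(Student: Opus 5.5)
We reduce both parts to the Gaussian approximations of Theorems \ref{thm:bvm} and \ref{thm:boot}, combined with the Gaussian limit of the centering sequence in Assumption \ref{ass:center}. We take Assumption \ref{ass:center} to mean that $\sqrt{n}\,\bar\xi_n\rightsquigarrow\mathcal{N}(0,\Sigma_0)$ under $\mathsf P_0$, consistent with the definition \eqref{eq:sigma0_main}. Fix $c$ and write $s_\star^2:=c^\top\Sigma_\star c$, $s_0^2:=c^\top\Sigma_0c>0$, and $s^2$ for the relevant posterior variance: $s^2=s_\star^2$ in part (i) and $s^2=s_0^2+s_\star^2$ in part (ii).

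\textbf{Step 1: projecting the Gaussian approximations.} The map $u\mapsto c^\top u$ is Lipschitz, so bounded-Lipschitz convergence is preserved up to a factor of $\max(1,\|c\|)$. Theorems \ref{thm:bvm} and \ref{thm:boot} therefore give, with $\mathsf P_0$-probability tending to one,
$$\dt_{\mathrm{BL}}\bigl[\mathcal L\{\sqrt n\,c^\top(\vartheta-\vartheta_\star)\mid\mathcal F_n\},\ \mathcal N(\sqrt n\,c^\top\bar\xi_n,s^2)\bigr]\to0.$$
In part (i) $\vartheta$ is a draw from the MGP, and in part (ii) $\vartheta=\vartheta^\dagger$ is a draw from the bMGP.

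\textbf{Step 2: convergence of quantiles.} We need the conditional $\beta$-quantiles of $\sqrt n\,c^\top(\vartheta-\vartheta_\star)$ for $\beta\in\{\alpha/2,1-\alpha/2\}$. The approximating law is a location shift of $\mathcal N(0,s^2)$. If $s^2>0$, its distribution function is continuous with a density bounded below near these quantiles, uniformly in the shift. A standard argument (bounded-Lipschitz convergence implies weak convergence, which implies quantile convergence at continuity points, made uniform over shifts because the Gaussian family is translation invariant) then shows the posterior quantiles equal $\sqrt n\,c^\top\bar\xi_n+s z_\beta+o_{\mathsf P_0}(1)$. The endpoints of the equal-tailed interval are therefore
$$c^\top\vartheta_\star+c^\top\bar\xi_n\pm s z_{1-\alpha/2}/\sqrt n+o_{\mathsf P_0}(n^{-1/2}).$$
This step is the main technical obstacle. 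The target distributions move with $n$, so we must argue along subsequences, or directly through the uniform equicontinuity of Gaussian distribution functions with fixed variance. We must also pass from "with probability tending to one" to $o_{\mathsf P_0}(1)$ for the quantile error.

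\textbf{Step 3: computing coverage.} The interval contains $c^\top\vartheta_\star$ if and only if
$$|\sqrt n\,c^\top\bar\xi_n+o_{\mathsf P_0}(1)|\le s z_{1-\alpha/2}.$$
By Assumption \ref{ass:center} and Slutsky's lemma, $\sqrt n\,c^\top\bar\xi_n\rightsquigarrow\mathcal N(0,s_0^2)$. This limit puts no mass on the boundary points $\pm s z_{1-\alpha/2}$, so the coverage probability converges to $2\Phi(z_{1-\alpha/2}\,s/s_0)-1$.

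Substituting $s^2=s_\star^2$ gives part (i). Substituting $s^2=s_0^2+s_\star^2$ gives
$$s/s_0=\sqrt{1+s_\star^2/s_0^2},$$
which is the formula in part (ii). Finally, $s_\star^2\ge0$ because $\Sigma_\star$ is positive semidefinite and $\Phi$ is increasing, so this coverage is at least $2\Phi(z_{1-\alpha/2})-1=1-\alpha$.

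In part (i) the degenerate case $s_\star=0$ is handled directly: the interval has length $o_{\mathsf P_0}(n^{-1/2})$ and covers with probability tending to zero, which matches $2\Phi(0)-1=0$. In part (ii) degeneracy cannot occur, since $s^2\ge s_0^2>0$.
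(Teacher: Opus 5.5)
Your proposal is correct and follows essentially the same route as the paper's proof. You project the Gaussian approximations of Theorems \ref{thm:bvm} and \ref{thm:boot} onto $c$ and turn bounded-Lipschitz closeness into endpoints $c^\top\vartheta_\star+c^\top\bar\xi_n\pm s\,z_{1-\alpha/2}/\sqrt n+o_p(n^{-1/2})$; you then obtain the coverage limit from Assumption \ref{ass:center} and Slutsky's theorem, with the same handling of the degenerate case $c^\top\Sigma_\star c=0$.
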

 Corollary \ref{cor:coverage} demonstrates that $\Pi_N$ can display over- or under-coverage depending on the relationship between $\Sigma_\star$ and $\Sigma_0$. When  $c^\top\Sigma_\star c< c^\top\Sigma_0c$, the intervals display under-coverage. However, if $c^\top\Sigma_\star c> c^\top\Sigma_0c$, they display over-coverage. For instance, returning to the simple Gaussian example in Section \ref{sec:Gaussexample}, for which $\Sigma_\star=\varkappa\,\sigma^2$, we have that $\varkappa>1$ implies conservative intervals, while $\varkappa<1$ implies over-confident intervals. In contrast, Corollary \ref{cor:coverage} demonstrates that the bMGP will achieve conservative calibration in both settings.  

The additional condition in Corollary \ref{cor:coverage} is required since, in contrast to the classical Bernstein--von Mises theorem, see, e.g., \citet{kleijn_bernstein-von-mises_2012}, there is no universally acceptable centering sequence for MGPs: nothing restricts the form of $\bar\xi_n$ other than the choice of predictive engine. That is, while our maintained assumptions control the fluctuation of the simulated path about this center, which is what $\Sigma_\star$ measures, they do not determine the sampling law of $\bar\xi_n$. Theorem \ref{thm:boot} adds a condition on the bootstrap law of the center, Assumption \ref{ass:boot}; Assumption \ref{ass:center} is its sampling counterpart, and Appendix \ref{app:bmgp} shows that the two hold together whenever the center is a smooth functional of the empirical measure. Indeed, as discussed in \citet{fong2026asymptotics}, the choice of the centering sequence need not agree with the limiting form of the MGP, so that the centering sequence and the algorithm must generally be based on different assumptions; see Sections 4.3 and 4.4 of \citet{fong2026asymptotics} for additional discussion on this point.  %

Theorem \ref{thm:boot} demonstrates that the bMGP delivers conservatively calibrated inferences on $\vartheta_\star$, in the sense of Corollary \ref{cor:coverage}, and at the same forward-simulation budget as the original MGP, at least for the choices of $B$ and $M_B$ that equate the two simulation budgets. However, it does not deliver exact calibration: an equal-tailed $(1-\alpha)$ credible interval for a scalar contrast obtained from the bMGP has asymptotic coverage \textit{at least} $1-\alpha$. 
Indeed, in the canonical case where $\Sigma_\star=\int\psi(z)\psi(z)^\top\dt\FF_\star(z)$ {as discussed in Section \ref{sec:UQ}} when the predictive engine is correctly specified ($\FF_\star=\mathsf{P}_0$), the MGP has asymptotic sampling variance $\Sigma_\star=\Sigma_0$, while the bMGP has variance $\Sigma_0+\Sigma_\star=2\Sigma_0$. Hence, in this setting, the bMGP is conservative as the intervals are wider than necessary, whereas the MGP will deliver calibrated inferences.

However, if $\FF_\star\ne \mathsf{P}_0$, then {$\Sigma_\star$ need not equal $\Sigma_0$}, and the MGP can deliver over- or under-coverage, whereas the bMGP remains conservatively calibrated. For example, in the Gaussian example of Section \ref{sec:Gaussexample}, with $\varkappa=1/2$ we have that $\Sigma_\star=(1/2)\Sigma_0$. Remark \ref{rem:single_index} in Appendix \ref{app:single_index} records the same comparison for the single-index regression engine of \citet{fong2026asymptotics}, where $\Sigma_\star$ and $\Sigma_0$ have closed forms.

\section{Examples}\label{sec:examples}

In this section, we illustrate our key theoretical results; namely, credible sets built from the MGP are not calibrated in general, and the bMGP delivers conservative inferences. %

\subsection{High-Dimensional Regression}\label{sec:highDReg}

We first compare the MGP and bMGP in two high-dimensional sparse regression experiments under homoskedastic Gaussian and Student-$t$ models for the errors, respectively. In both experiments we set $n=250$, $p=\dim(\beta)=200$, and generate responses as
$$
Y_i = X_i^{\top}\beta_0 + \omega_m(X_i)\varepsilon_i,\qquad \varepsilon_i \perp X_i,
$$
where $\varepsilon_i \sim \mathcal{N}(0, 1)$ in the Gaussian experiment and $\varepsilon_i \sim t_4$ with unit scale in the Student-$t$ experiment. The conditional scale $\omega_m(X_i)$ is indexed by the non-negative constant $m$ with
\begin{equation}\label{eq:hetero}
\omega_m(x) = \left\{1 + m\,\mathds{1}(|x_1| > 1)\right\}^{1/2};
\end{equation}
$m=0$ returns homoskedastic errors, while $m>0$ produces heteroskedasticity that the working model, with Gaussian or Student-$t$ errors of constant scale, cannot match. The parameter $m$ measures the departure of the DGP from the working conditional model, which is always homoskedastic, and so $m$ indexes the \emph{degree of misspecification}. %
The repeated-sampling experiments reported below contrast $m=0$ with $m=4$, while Figure \ref{fig:reg1} traces the effect of $m$ over the grid $m\in\{0,5,10,15\}$. Let
$
A_0 = \{j : \beta_{0j} \neq 0\}$, and $ A_0^c = \{1, \dots, p\} \setminus A_0
$
denote the active and inactive coordinates, respectively. 

The regression parameters estimated by the MGP and bMGP algorithms are initialized in two ways: $\beta_n$ is estimated by ridge regression, or $\beta_n$ is estimated by the continuous spike-and-slab procedure of \citet{rockova2014emvs}. Across the initializations and error assumptions, we assess frequentist calibration and interval width through the repeated-sampling coverage and average length of $95\%$ marginal credible intervals for the two groups of covariates.

\subsubsection{Linear Regression}
\label{sec:lr}
We first take the errors to be Gaussian, and consider observed covariates generated as $X_i=Z_i \mathbf{1}_p+E_i$, where $Z_i$ is uniform on $\{-1,1\}$ and $E_i \sim \mathcal{N}_p(0, I_p)$, followed by column standardization. The active set $A_0$ consists of $p_\star=50$ coordinates chosen uniformly at random, with $\beta_{0j}=3\eta_j$ and $\eta_j \sim \mathcal{N}(0,1)$ for $j \in A_0$.

The indicator in the heteroskedastic scale is evaluated using the standardized first covariate. Under both DGPs, the working predictive engine remains
$$
P_\beta(\cdot \mid x) = \mathcal{N}(x^\top \beta, 1).
$$
Thus, the conditional location is correctly specified in both experiments, while the heteroskedastic experiment violates the homoskedasticity assumption.

The two initializations differ only in the prior used to compute $\beta_n$. The ridge initialization takes $\beta \sim \mathcal{N}_p(0,I_p)$, that is, unit prior precision. The spike-and-slab initialization uses inclusion probability $\varpi=0.25$ and slab variance $v_1=9$, with the spike variance annealed down a decreasing grid to a final value $v_0=0.0005$; the maximum a posteriori (MAP) estimate and the associated local regularization matrix are computed by the expectation-maximization (EM) steps of Appendix \ref{app:regression}.

The covariates in the predictive recursion are drawn independently from $P_X = \mathcal{N}_p(0,I_p)$, and every path is run for $N-n=5000$ updates. The MGP simulates $1000$ paths from the single initialization computed on $\x$. The bMGP draws $B=50$ bootstrap data sets, computes an initialization from each, and simulates $M_B=20$ paths from each. These choices equate the simulation budget of the MGP and bMGP; the bMGP additionally computes $B$ initializations, a cost quantified in Appendix \ref{app:timing_ss}.

 Under the ridge initialization, the Bayesian posterior is available in closed form, and so we also compare the behavior of the MGP with two Bayesian baselines: the conjugate Gaussian posterior under the prior $\beta\sim \mathcal{N}_p(0,I_p)$ (Bayes), and its bagged version (BayesBag, \citealp{huggins2024reproducible}) based on $B=50$ bagged data sets. To equate {the number of posterior draws for} the bMGP and BayesBag, we only draw $M_B=20$ posterior samples from each of the $B=50$ posteriors associated with each bagged data set.\footnote{Increasing the draws used by BayesBag to $50$ per bagged data set changes BayesBag coverage by at most $0.004$ and average length by at most {$0.012$}.} %

Across each method, Table \ref{tab:bpbp-linear-well-miss} reports the resulting coverage and average length of the $95\%$ marginal credible intervals obtained over $80$ replications of the observed data. Under heteroskedasticity, the ordinary MGP displays substantial under-coverage for the active coefficients, with coverage falling to 0.785 for ridge and 0.702 for spike-and-slab, whereas coverage for bMGP is 0.924 and 0.947, respectively. Under the spike-and-slab initialization, bMGP coverage rates are similar to the ridge case, but with much shorter intervals for the inactive coefficients, while also delivering essentially complete inactive-coordinate coverage. Under the homoskedastic DGP, the coverage gains from bootstrapping are smaller and bMGP is conservative for inactive coefficients. Since the aggregate signed biases remain close to zero throughout, the main differences concern uncertainty calibration rather than a systematic directional shift in the estimators, although this aggregate can mask coordinate-level shrinkage, as noted in Appendix \ref{app:high-dimensional-p300}.

\begin{table}[htbp]
\centering
\caption{Linear regression experiment of Section \ref{sec:lr}, under the homoskedastic ($m=0$) and
heteroskedastic ($m=4$) data-generating processes, over $R=80$ replicated data sets. The Cov.\
columns report empirical coverage of the $95\%$ marginal credible intervals, with the signed
bias of the posterior mean in parentheses; the Len.\ columns report their average length. The
active and inactive groups contain $50$ and $150$ coefficients, and ``ridge'' and ``spike''
denote the two initializations. For the ridge initialization, Bayes is the conjugate Gaussian posterior under the same prior and BayesBag \citep{huggins2024reproducible} its bagged version, both computed on the same $80$ data sets.}
\label{tab:bpbp-linear-well-miss}
\footnotesize
\begin{tabular}{llcccc}
\toprule
Example & Method
& Active Cov. (Bias) & Active Len.
& Inactive Cov. (Bias) & Inactive Len. \\
\midrule
\multicolumn{6}{l}{\textit{Homoskedastic ($m=0$)}} \\
\addlinespace[2pt]
Linear ridge
& MGP  & 0.905 (0.006) & 0.724
       & 0.935 ($-0.002$) & 0.722 \\
& bMGP & 0.914 (0.012) & 2.772
       & 1.000 ($-0.004$) & 2.747 \\
& Bayes    & 0.905 (0.006) & 0.727
           & 0.937 ($-0.002$) & 0.724 \\
& BayesBag & 0.915 (0.012) & 2.772
           & 1.000 ($-0.004$) & 2.748 \\
\addlinespace
Linear spike
& MGP  & 0.835 (0.001) & 0.352
       & 1.000 ($-0.000$) & 0.071 \\
& bMGP & 0.927 (0.001) & 0.630
       & 1.000 ($-0.000$) & 0.082 \\
\midrule
\multicolumn{6}{l}{\textit{Heteroskedastic ($m=4$)}} \\
\addlinespace[2pt]
Linear ridge
& MGP  & 0.785 (0.010) & 0.724
       & 0.799 ($-0.003$) & 0.722 \\
& bMGP & 0.924 (0.014) & 2.844
       & 1.000 ($-0.005$) & 2.820 \\
& Bayes    & 0.787 (0.010) & 0.727
           & 0.803 ($-0.003$) & 0.724 \\
& BayesBag & 0.924 (0.014) & 2.844
           & 1.000 ($-0.005$) & 2.820 \\
\addlinespace
Linear spike
& MGP  & 0.702 (0.001) & 0.355
       & 0.987 ($-0.000$) & 0.075 \\
& bMGP & 0.947 (0.002) & 0.895
       & 1.000 ($-0.000$) & 0.238 \\
\bottomrule
\end{tabular}
\end{table}

The ridge example is conjugate, so the predictive recursion started from the ridge initialization is the exact Bayesian update. Hence, in this example the two pairs of methods should coincide if the predictive recursion is run long enough: indeed, the results for Bayes and MGP (respectively, BayesBag and bMGP) agree to within $0.003$ in average length. Under the heteroskedastic DGP both unbagged posteriors undercover the active coefficients, while both bagged posteriors return coverage near the nominal level for the active coefficients and are conservative for the inactive ones. 

In the case of the spike-and-slab prior, the resulting posterior is not conjugate, and implementing BayesBag requires posterior sampling across each bagged data set. In contrast, the bMGP only recomputes the initialization for each bagged data set, and only requires one run of the EM algorithm {per bagged data set}. Consequently, the bMGP is more than an order of magnitude faster than BayesBag. Appendix \ref{app:timing_ss} reports the computation time of one replication of BayesBag and the bMGP on the same machine: over $50$ bagged data sets BayesBag \textit{required about $23$ minutes, while the bMGP required $28$ seconds.} %

\subsubsection{Student-$t$ Regression}
\label{sec:student-t-regression}

We now consider a sparse regression experiment based on a Student-$t$ predictive engine. As in the preceding experiment, we take $n=250$ and $p=200$. The first $p_\star=5$ coordinates are active, with $\beta_{0j} = \sqrt{20/p_\star}\,s_j = 2s_j$, with $s_j$ independent Rademacher signs, and the remaining $195$ coordinates are zero. The observed covariates are standardized $\mathcal{N}_p(0,I_p)$ draws, and the responses follow the model of Section \ref{sec:highDReg} with Student-$t$ errors, for $m=0$ and $m=4$. 

The working predictive engine is the homoskedastic Student-$t$ model:
$$
P_\beta(\cdot\mid x) = t_\nu\left(\mu=x^\top\beta,\sigma=1\right), \quad \nu=4.
$$
Consequently, the working conditional model is correctly specified when $m=0$, whereas when $m>0$ the conditional scale is misspecified.

We again compare ridge and continuous spike-and-slab initializations. The ridge initialization uses prior variance $\tau^2=1$; the spike-and-slab initialization uses inclusion probability $\varpi=0.1$, slab variance $v_1=1$, and {a similar} annealing of the spike variance down to $v_0=0.0005$. Future predictive covariates are drawn from $P_X=\mathcal{N}_p(0,10^2 I_p)$, and every path is run for $N-n=300$ updates. The Monte Carlo and bootstrap settings are those of Section \ref{sec:lr}. We do not compare against Bayes or BayesBag here, since neither posterior is available in closed form, and each would require posterior sampling on every replicated data set. %

Table \ref{tab:bpbp-studentt-well-miss} reports coverage, signed bias and average length of the $95\%$ marginal credible intervals, separately for the $5$ active and $195$ inactive coordinates. The results for the MGP and bMGP are qualitatively similar to those obtained under the predictive engine based on Gaussian errors: under the homoskedastic regime, MGP has reasonable coverage, whereas bMGP is conservative; but under the heteroskedastic regime, MGP displays under-coverage, while the bMGP has coverage that is much closer to the nominal level. Further, the length of the active credible sets for both methods is much smaller under the spike-and-slab initialization.\footnote{Additional results for the regime where $p>n$, $n=250$ and $p=300$, are presented in Appendix \ref{app:high-dimensional-p300}. } 

\begin{table}[htbp]
\centering
\caption{Student-$t$ regression experiment of Section \ref{sec:student-t-regression}, under the same
two data-generating processes, with entries read as in Table
\ref{tab:bpbp-linear-well-miss}. The active and inactive groups contain $5$ and $195$
coefficients.}
\label{tab:bpbp-studentt-well-miss}
\footnotesize
\begin{tabular}{llcccc}
\toprule
Example & Method
& Active Cov. (Bias) & Active Len.
& Inactive Cov. (Bias) & Inactive Len. \\
\midrule
\multicolumn{6}{l}{\textit{Homoskedastic ($m=0$)}} \\
\addlinespace[2pt]
Student-$t$ ridge
& MGP  & 0.920 (0.008) & 0.619
       & 0.915 (0.000) & 0.618 \\
& bMGP & 1.000 (0.017) & 2.015
       & 1.000 (0.000) & 2.000 \\
\addlinespace
Student-$t$ spike
& MGP  & 0.960 (0.002) & 0.300
       & 0.999 ($-0.000$) & 0.076 \\
& bMGP & 0.993 (0.003) & 0.439
       & 1.000 ($-0.000$) & 0.092 \\
\midrule
\multicolumn{6}{l}{\textit{Heteroskedastic ($m=4$)}} \\
\addlinespace[2pt]
Student-$t$ ridge
& MGP  & 0.728 (0.022) & 0.619
       & 0.760 (0.002) & 0.618 \\
& bMGP & 1.000 (0.018) & 2.126
       & 1.000 ($-0.000$) & 2.105 \\
\addlinespace
Student-$t$ spike
& MGP  & 0.858 (0.001) & 0.300
       & 0.997 (0.000) & 0.076 \\
& bMGP & 0.988 (0.004) & 0.597
       & 1.000 ($-0.000$) & 0.152 \\
\bottomrule
\end{tabular}
\end{table}

\subsubsection{Empirical Application}\label{sec:empirical}
Following \citet{fong2026asymptotics}, we analyze the $n=2139$ records of the AIDS Clinical Trials Group Study 175 (ACTG175) data of \citet{hammer1996trial}, in which HIV patients were randomized to four treatment arms and the CD4 count was measured at twenty weeks, plus or minus five weeks. The outcome is {modeled} by linear regression on twelve baseline covariates, all continuous covariates {and the outcome} standardized, together with three treatment dummies and an intercept, zidovudine serving as the reference group. Three predictive engines are compared: a Gaussian regression engine (GPE); the Student-$t$ regression engine (TPE) of \citet{fong2026asymptotics}, which those authors {adopt because the outcome may be heavy-tailed}; and the Bayesian posterior predictive of the same Student-$t$ regression model (BTPE). Let $\theta=(\beta^\top,s^2)^\top$ denote the regression parameter, with $s^2$ the error variance{ (the squared scale for the Student-$t$ engines; implementation details are given in Appendix \ref{app:actg_impl})}, and let $\beta_2$ and $\beta_3$ be the treatment-effect coefficients of interest. For each bagged data set $\x^{*(b)}$, $b=1,\ldots,B$, drawn from $\PP^{*}_n$, we run the same predictive engine and record the terminal draws $\theta^{*(b)}_j$, $j=1,\ldots,M_B$, so that {the mixture \eqref{eq:finiteB} is approximated by the empirical measure of the pooled draws,}
$$
  \frac{1}{B M_B} \sum_{b=1}^{B} \sum_{j=1}^{M_B} \delta_{\theta^{*(b)}_j}.
$$
We take $B=50$ and $M_B=20$, giving $1000$ terminal draws for each predictive engine{, while the ordinary MGP uses $10{,}000$}.

Figure \ref{fig:bpbi-aids-contours} compares the ordinary MGP contours with their bagged counterparts under each of the three engines. Under all three approaches the bagged contour contains the ordinary one. Moreover, we know from Corollary \ref{cor:coverage} that the larger bMGP credible regions will likely be more reliable in settings such as this where the model specification is unlikely to be satisfied in general.

\begin{figure}[!htbp]
  \centering
  \includegraphics{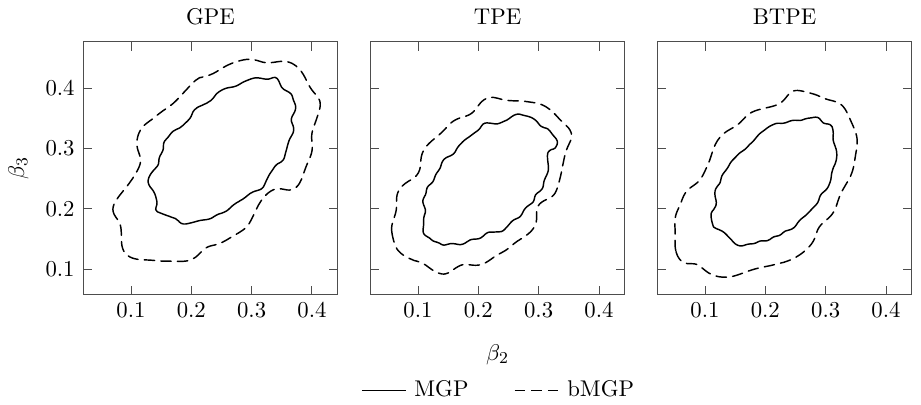}
  \caption{Joint credible contours for the ACTG175 treatment-effect coefficients $\beta_2$ and
$\beta_3$ under the ordinary MGP (solid) and the bMGP (dashed), for the Gaussian (GPE),
Student-$t$ (TPE) and Bayesian Student-$t$ posterior predictive (BTPE) engines. %
}
  \label{fig:bpbi-aids-contours}
\end{figure}

\subsection{Quantile Martingale Posterior}\label{sec:quantile}

For $q\in(0,1)$, recall the quantile functional
$$
T(\nu) = Q_q(\nu):= \inf\{t:\nu\{(-\infty,t]\}\geq q\},
$$and the corresponding population quantity
$ \vartheta_0 = Q_q(\mathsf{P}_0)$. For inference on $\vartheta_0$, \citet{fong2025bayesian} propose predictive Bayesian inference using the quantile martingale posterior (QMP). The authors argue that the QMP concentrates onto the true quantile; they do not, however, establish whether it quantifies uncertainty reliably. In this section, we empirically compare the ability of the QMP approach in Algorithms 4 and 5 of \citet{fong2025bayesian}, the latter a Gaussian-process variant we denote as QMP-GP, to quantify uncertainty about the quantile; the learning rate is chosen as the authors suggest. Both algorithms use the copula update {used} in \citet{fong2023martingale} to generate direct samples from the quantile functional, and both start from an initialization based on the observed data; see Appendix \ref{app:qmp_impl} for details. All future values of the quantiles are conditional on that starting value. Thus, the bagged QMP version has the same structure as the QMP but starts from $B$ initializations fitted to the $B$ bagged data sets.

We report results for $q\in\{0.50,0.75\}$. At $q=0.50$ we take $n\in\{100,200,500\}$, whereas when $q=0.75$ we consider additional values of $n$, $n\in\{100,200,500,1000,2000\}$. As we shall see, these additional values of $n$ are necessary to understand the theoretical behavior of the QMP when $q=0.75$. 

The QMP is based on $M=1000$ paths, each of $N-n=\lceil n^{3/2}\rceil$ predictive steps, while its bagged version uses $B=50$ bagged data sets and $M_B=20$ paths for each, so that the two have the same simulation budget. All quantities are averaged over $200$ replications of the observed data. The DGP is the same as in Section \ref{sec:Gaussexample}, i.e., data are generated under the $\mathcal{N}(0,1)$ and $\mathrm{Ga}(2,2)$ distributions.

For each method we report Monte Carlo coverage for the parameter $\vartheta_\star=Q_q(\FF_\star)$ onto which the posterior {is expected to concentrate} (see Appendix \ref{app:qmp_impl} for further details). Results for $q=0.50$ are given in Table \ref{tab:quantile_cov_bias_fstar_q50}, while those for $q=0.75$ are given in Table \ref{tab:quantile_cov_bias_fstar_q75}. At $q=0.50$ both approaches deliver reasonable coverage rates. However, at $q=0.75$ the results are quite different: under the Gamma DGP, QMP and the QMP-GP credible sets have empirical coverage between $32.5\%$ and $44.5\%$ across each choice of $n$. In contrast, the bagged counterparts to these algorithms have empirical coverage between $81.0\%$ and $92.5\%$ across the different sample sizes, with coverage increasing as the sample size increases. %

\begin{table}[htbp]
\centering
\caption{Monte Carlo coverage and signed bias (in parentheses) of $95\%$ MGP and bMGP credible sets
for the oracle-estimated engine-specific target $Q_{0.50}(\FF_\star)$, under the $\mathcal{N}(0,1)$ and
$\mathrm{Ga}(2,2)$ data-generating processes, over $200$ replicated data sets. QMP is
Algorithm 4 of \citet{fong2025bayesian} and QMP-GP its Gaussian-process variant.}
\label{tab:quantile_cov_bias_fstar_q50}
\footnotesize
\begin{tabular}{cccccc}
\toprule
\multirow{2}{*}{Method} & \multirow{2}{*}{$n$}
& \multicolumn{2}{c}{DGP: $\mathcal{N}(0,1)$}
& \multicolumn{2}{c}{DGP: $\mathrm{Ga}(2,2)$} \\
\cmidrule(lr){3-4} \cmidrule(lr){5-6}
& & QMP & QMP-GP & QMP & QMP-GP \\
\midrule
\multirow{3}{*}{MGP}
& 100 & 0.995 $(0.018)$ & 0.995 $(0.017)$ & 0.985 $(0.011)$ & 0.995 $(0.010)$ \\
& 200 & 0.985 $(-0.002)$ & 0.990 $(-0.003)$ & 0.975 $(0.002)$ & 0.980 $(0.002)$ \\
& 500 & 0.995 $(-0.002)$ & 0.985 $(-0.002)$ & 0.995 $(0.000)$ & 0.995 $(-0.000)$ \\
\midrule
\multirow{3}{*}{bMGP}
& 100 & 1.000 $(0.018)$ & 1.000 $(0.019)$ & 1.000 $(0.012)$ & 1.000 $(0.014)$ \\
& 200 & 1.000 $(-0.002)$ & 1.000 $(-0.003)$ & 0.995 $(0.005)$ & 0.990 $(0.005)$ \\
& 500 & 1.000 $(-0.001)$ & 0.995 $(-0.002)$ & 1.000 $(0.001)$ & 1.000 $(0.001)$ \\
\bottomrule
\end{tabular}
\end{table}

\begin{table}[htbp]
\centering
\caption{As Table \ref{tab:quantile_cov_bias_fstar_q50}, for the oracle-estimated engine-specific target
$Q_{0.75}(\FF_\star)$, with the sample-size grid extended to $n\in\{1000,2000\}$.}
\label{tab:quantile_cov_bias_fstar_q75}
\footnotesize
\begin{tabular}{cccccc}
\toprule
\multirow{2}{*}{Method} & \multirow{2}{*}{$n$}
& \multicolumn{2}{c}{DGP: $\mathcal{N}(0,1)$}
& \multicolumn{2}{c}{DGP: $\mathrm{Ga}(2,2)$} \\
\cmidrule(lr){3-4} \cmidrule(lr){5-6}
& & QMP & QMP-GP & QMP & QMP-GP \\
\midrule
\multirow{5}{*}{MGP}
& 100 & 0.940 $(0.049)$ & 0.960 $(0.048)$ & 0.325 $(0.240)$ & 0.375 $(0.238)$ \\
& 200 & 0.970 $(0.022)$ & 0.970 $(0.024)$ & 0.375 $(0.192)$ & 0.385 $(0.187)$ \\
& 500 & 0.970 $(-0.000)$ & 0.970 $(-0.000)$ & 0.325 $(0.151)$ & 0.360 $(0.144)$ \\
& 1000 & 0.975 $(-0.001)$ & 0.970 $(-0.001)$ & 0.375 $(0.096)$ & 0.415 $(0.093)$ \\
& 2000 & 0.985 $(-0.006)$ & 0.985 $(-0.006)$ & 0.415 $(0.070)$ & 0.445 $(0.071)$ \\
\midrule
\multirow{5}{*}{bMGP}
& 100 & 0.995 $(0.048)$ & 1.000 $(0.047)$ & 0.855 $(0.188)$ & 0.855 $(0.187)$ \\
& 200 & 0.995 $(0.022)$ & 0.995 $(0.022)$ & 0.810 $(0.147)$ & 0.810 $(0.147)$ \\
& 500 & 0.995 $(-0.000)$ & 0.990 $(-0.000)$ & 0.860 $(0.110)$ & 0.850 $(0.109)$ \\
& 1000 & 0.995 $(-0.001)$ & 0.995 $(0.000)$ & 0.875 $(0.065)$ & 0.900 $(0.067)$ \\
& 2000 & 1.000 $(-0.006)$ & 1.000 $(-0.006)$ & 0.925 $(0.048)$ & 0.920 $(0.049)$ \\
\bottomrule
\end{tabular}
\end{table}

\subsubsection{Empirical Example}
We now compare the QMP and its bagged version with the dependent quantile pyramid (DQP) of \citet{an2024process} on the cyclone data of \citet{elsner2008increasing}; see Appendix \ref{app:cyclone_timing} for full details. The goal of the analysis is to understand how or if the distribution of cyclone lifetime maximum wind speed changed between 1981 and 2006, and so both methods model the conditional $q$-quantile of wind speed given year. 

Table \ref{tab:cyclone_slope_main} reports posterior summaries for the conditional quantile slope, as a function of year, at $q\in\{0.50,0.90\}$; results for other quantiles are in Appendix \ref{app:cyclone_timing}. Across the methods, the posteriors agree on the direction, but not the magnitude, of the year effect. %
{Consistent with the Monte Carlo under-coverage at $q=0.75$}, the QMP intervals at the upper quantile are markedly shorter than those at the median, which suggests that the QMP understates uncertainty at higher quantiles. In contrast, the bagged QMP intervals are wider at the upper quantile than at the median, as one would expect. Further analysis is given in Appendix \ref{app:cyclone_timing}.

In terms of computational speed, the QMP takes about $16$ seconds, the bagged QMP about one minute, and the DQP, which draws posterior samples by Markov chain Monte Carlo, about $15$ minutes. The QMP-GP and its bagged version take about $1$ and $20$ seconds, respectively. The additional cost of the bagged QMP comes from refitting the {learning rate and bandwidth constant} to each bagged data set and recompiling the reference sampler for each new {pair}. The bagged posterior nonetheless remains an order of magnitude faster than even a single run of MCMC, with a widening gap if multiple runs are needed for the DQP. 

\begin{table}[htbp]
\centering
\caption{{Posterior slope (year) of the conditional $q$-quantile of lifetime maximum wind speed} for the cyclone data of \citet{elsner2008increasing}. Entries are the posterior mean (standard deviation), the equal-tailed $95\%$ interval and its length, over $10{,}000$ QMP and QMP-GP draws, $1000$ draws per bagged version, and the $10{,}000$ DQP draws retained from a seeded chain of $20{,}000$ iterations. {The last column reports the number of draws for the predictive resampling methods (for the bagged versions these draws are exchangeable within a bag rather than independent) and, for the DQP, the effective sample size by the initial positive sequence estimate.}}
\label{tab:cyclone_slope_main}
\footnotesize
\spacingset{1.3}
\begin{tabular}{clcccc}
\toprule
$q$ & Method & Mean (SD) & $95\%$ interval & Len. & {Draws / ESS} \\
\midrule
\multirow{5}{*}{0.50}
& QMP           & 0.447 $(0.288)$ & $[-0.114,\,1.017]$ & 1.131 & 10,000 \\
& Bagged QMP    & 0.441 $(0.393)$ & $[-0.338,\,1.254]$ & 1.591 & 1,000 \\
& QMP-GP        & 0.445 $(0.294)$ & $[-0.128,\,1.019]$ & 1.147 & 10,000 \\
& Bagged QMP-GP & 0.451 $(0.394)$ & $[-0.282,\,1.211]$ & 1.493 & 1,000 \\
& DQP           & 0.343 $(0.174)$ & $[0.071,\,0.677]$  & 0.606 & 4 \\
\midrule
\multirow{5}{*}{0.90}
& QMP           & 1.036 $(0.171)$ & $[0.700,\,1.372]$ & 0.672 & 10,000 \\
& Bagged QMP    & 1.061 $(0.436)$ & $[0.226,\,1.930]$ & 1.704 & 1,000 \\
& QMP-GP        & 1.036 $(0.170)$ & $[0.698,\,1.368]$ & 0.669 & 10,000 \\
& Bagged QMP-GP & 1.067 $(0.441)$ & $[0.195,\,1.874]$ & 1.678 & 1,000 \\
& DQP           & 1.426 $(0.269)$ & $[0.844,\,1.933]$ & 1.089 & 10 \\
\bottomrule
\end{tabular}
\end{table}

\section{Discussion}\label{sec:discussion}
Predictive Bayesian inference methods, such as the martingale posterior (MGP), replace the likelihood and prior with a model for future data. The key question asked in this paper is whether the posterior that results from this substitution produces inferences that reliably quantify uncertainty. We show that the MGP concentrates onto an appropriately defined pseudo-true value but that the resulting credible sets can be severely over-confident. To solve this issue we propose the {bagged} MGP (bMGP), which has credible sets that are guaranteed to deliver conservatively calibrated inferences.

The bMGP belongs to a family of procedures that merge frequentist and Bayesian ideas in order to deliver more reliable uncertainty quantification. 
Critically, unlike related methods such as BayesBag \citep{huggins2024reproducible}, and the variational bagging approach of \citet{Fan2025}, {the bMGP carries the same forward-simulation budget as the MGP. Its additional costs are the $B$ initializations and the overhead of running the recursion separately for each bagged data set, both quantified in Appendix \ref{app:timing_ss}.} This unique feature of the bMGP ensures that it can {be} applied to deliver reliable Bayesian inference in high-dimensional and highly nonlinear problems where other Bayesian methods would be too computationally onerous. 

The bMGP delivers conservatively calibrated inferences. As such, the bMGP credible sets are larger than necessary when the predictive engine is correctly specified. In addition, we note that {the results obtained herein require} technical conditions that may be difficult to verify outside of simple predictive engines, which of course does not invalidate the practical usefulness of the bMGP. %

\noindent\textbf{Acknowledgments.} %
{Edwin Fong} receives funding from the Research Grants Council of Hong Kong through the Early Career Scheme (Grant 27304424) and the General Research Fund (Grant 17306925). David Frazier acknowledges funding from the Australian Research Council Discovery Program (DP250101069). Frazier would also like to thank the {International Centre for Mathematical Sciences} (ICMS) for providing funding and support for the 2026 Approximately Bayes ICMS workshop.  Code that reproduces all numerical results in this paper is available at \url{https://github.com/hwstat/bMGP}.

{{
\spacingset{1.0} %
\bibliographystyle{chicago}
\bibliography{main}
} }

\appendix 
\section{Regression Details}\label{app:regression}
\subsection{The Single-Index Predictive Recursion}\label{app:single_index}

This section records the predictive recursion used in the regression experiments of Section \ref{sec:highDReg}; the implementation details in Appendix \ref{app:linear-implementation} specialize it to the Gaussian and Student-$t$ working models.

Suppose our goal is inference on the unknown regression parameters of the single-index model $ p_\beta(y \mid x) = p(y \mid t=\beta^\top x) $. Denote the score with respect to the single index $t=x^\top\beta$ as $r(y \mid t) = \tfrac{\partial}{\partial t} \log p(y \mid t)$, and the conditional Fisher information at $t=x^\top\beta$ as $w(t) = \int r(y \mid t)^2 p(\dt y \mid t)$. Given the parameter score, $s(\beta, y \mid x) = x\, r(y \mid x^\top\beta)$, inference on $\beta$ can be carried out using the posterior based on the parametric stochastic gradient approach of \citet{fortini2025exchangeability} and \citet{holmes2023statistical}. Given an initial parameter estimator $\beta_n$, we iteratively update our prediction of $\beta$ and $Y$ via the recursive approach of \citet{fong2026asymptotics} and \citet{sun2026design}:
$$
\beta_i \;=\; \beta_{i-1} + \alpha_{i-1}(X_i)\, \mathcal{I}_i^{-1} s(\beta_{i-1}, Y_i \mid X_i), \quad Y_i\sim p(\cdot\mid X_i^\top\beta_{i-1}),\quad X_i\sim P_X,
$$
for $n+1\le i\le N$, where $\mathcal{I}_i = \sum_{j \le i} w(\beta_n^\top X_j) X_j X_j^\top + D(\beta_n)$, with $D(\beta_n)$ a regularization matrix {such that $\mathcal{I}_n$ is positive definite}, and the scalar learning rate $\alpha_i(\cdot)$ evolves according to
$$
\alpha_{i-1}(X_i) = \Biggl[ \frac{w(\beta_n^\top X_i)}{w(\beta_{i-1}^\top X_i)} \Bigl\{ 1 + w(\beta_n^\top X_i)\, X_i^\top \mathcal{I}_{i-1}^{-1} X_i \Bigr\} \Biggr]^{1/2}.
$$

\begin{remark}\label{rem:single_index}
{With $w(t)=\int r(y\mid t)^2p(\dt y\mid t)$ the conditional Fisher information of the working model, as above, let $Q_0:=\E_{\mathsf{P}_0}[w(X^\top\beta_\star)XX^\top]$ denote its average over the observed design, with $\beta_\star$ the probability limit of $\beta_n$. Under regularity conditions for this recursion \citep{fong2026asymptotics,sun2026design}, in particular fixed $p$ with $n^{-1}\sum_{j\le n}w(\beta_n^\top X_j)X_jX_j^\top\rightarrow Q_0$ and $D(\beta_n)/n\rightarrow0$, so that $n^{-1}\mathcal{I}_n\rightarrow Q_0$, the MGP for $\beta$ has limiting conditional law $\mathcal{N}(\beta_n,n^{-1}Q_0^{-1})$, so that $\Sigma_\star=Q_0^{-1}$ in the notation of Theorem \ref{thm:bvm}: the path variance is driven by the working-model information accumulated on the observed design. The center $\beta_n$, when it is asymptotically equivalent to the maximum likelihood estimator of the working model, satisfies $\sqrt{n}(\beta_n-\beta_\star)\Rightarrow\mathcal{N}(0,H_0^{-1}K_0H_0^{-1})$, where $H_0:=-\E_{\mathsf{P}_0}[\partial s(\beta_\star,Y\mid X)/\partial\beta^\top]$ and $K_0:=\E_{\mathsf{P}_0}[XX^\top r(Y\mid X^\top\beta_\star)^2]$ are the Hessian and the score variance under the true law; hence $\Sigma_0$ is the sandwich matrix. When the model is correctly specified, $Q_0=H_0=K_0$, so that $\Sigma_0=\Sigma_\star$ and the bMGP variance $\Sigma_0+\Sigma_\star$ is twice that of the MGP. Under misspecification the three matrices need not coincide (with an intercept only, a Gaussian working model and true error variance $\sigma_0^2\neq1$, $Q_0=H_0=1$ while $K_0=\sigma_0^2$); the MGP may then under- or over-cover $\beta_\star$ and is not exactly calibrated in general, while the bMGP remains conservatively calibrated. In both cases the comparison takes the form of Corollary \ref{cor:coverage}.}
\end{remark}

\subsection{Implementation Details}
\label{app:linear-implementation}
\subsubsection{Gaussian Score and Specialized Recursion}
For the Gaussian working predictive model
$$
P_\beta(\cdot \mid x) = \mathcal{N}(x^\top \beta, 1),
$$
the single-index score and conditional Fisher information are
$$
r(y \mid t) = y - t, \quad s(\beta, y \mid x) = x(y - x^\top \beta), \quad w(t) = 1.
$$
Consequently, for an observed-data initialization $\beta_n$, the initial information matrix is
$$
\mathcal{I}_n = X^\top X + D(\beta_n).
$$
For $i > n$, the Gaussian predictive recursion specializes to
\begin{align*}
\mathcal{I}_i &= \mathcal{I}_{i-1} + X_i X_i^\top, \\
\alpha_{i-1}(X_i) &= \left\{1 + X_i^\top \mathcal{I}_{i-1}^{-1} X_i\right\}^{1/2},
\end{align*}
and
$$
\beta_i = \beta_{i-1} + \alpha_{i-1}(X_i) \mathcal{I}_i^{-1} X_i \left(Y_i - X_i^\top \beta_{i-1}\right),
$$
where
$$
X_i \sim \mathcal{N}_p(0, I_p), \quad Y_i \mid X_i, \beta_{i-1} \sim \mathcal{N}(X_i^\top \beta_{i-1}, 1).
$$
The initialization $\beta_n$ and the matrix $D(\beta_n)$ are held fixed along a predictive trajectory.

\subsubsection{Student-\texorpdfstring{$t$}{t} Score and Specialized Recursion}

Consider next the Student-$t$ working predictive model with fixed degrees of freedom $\nu$ and scale $\sigma$,
$$ P_\beta(\cdot\mid x) = t_\nu\left(\mu=x^\top\beta,\sigma\right). $$
The corresponding single-index density can be written, up to a normalizing constant, as
$$ p(y\mid t) \propto \left\{ 1+\frac{(y-t)^2}{\nu\sigma^2} \right\}^{-(\nu+1)/2}. $$
The single-index score is
$$ r_{\nu}(y\mid t) = \frac{(\nu+1)(y-t)}{\nu\sigma^2+(y-t)^2}, $$
and hence
$$ s_{\nu}(\beta,y\mid x) = x\,\frac{(\nu+1)(y-x^\top\beta)}{\nu\sigma^2+(y-x^\top\beta)^2}. $$
The conditional Fisher information for the location parameter is constant in $t$ and is given by
$$ w(t) = w_\nu := \frac{\nu+1}{(\nu+3)\sigma^2}. $$
In the experiments of Section \ref{sec:student-t-regression}, $\nu=4$ and $\sigma=1$, so that $$ w_\nu=\frac{5}{7}. $$

Consequently, for an observed-data initialization $\beta_n$, the initial information matrix is
$$ \mathcal{I}_n = w_\nu X^\top X+D(\beta_n). $$
For $i>n$, the information update and scalar correction specialize to
\begin{align*}
\mathcal{I}_i &= \mathcal{I}_{i-1}+w_\nu X_iX_i^\top,\\
\alpha_{i-1}(X_i) &= \left\{ 1+w_\nu X_i^\top \mathcal{I}_{i-1}^{-1}X_i \right\}^{1/2}.
\end{align*}
The coefficient update is therefore
$$ \beta_i = \beta_{i-1} + \alpha_{i-1}(X_i) \mathcal{I}_i^{-1}X_i \frac{(\nu+1)(Y_i-X_i^\top\beta_{i-1})}{\nu\sigma^2+ (Y_i-X_i^\top\beta_{i-1})^2}. $$
For the Student-$t$ experiments,
$$ X_i\sim \mathcal{N}_p(0,10^2I_p), \qquad Y_i\mid X_i,\beta_{i-1} \sim t_\nu\left(\mu=X_i^\top\beta_{i-1}, \sigma=1\right). $$

\subsubsection{Ridge Initialization}

For ridge initialization, we use the prior
$$
\beta \sim \mathcal{N}_p(0, I_p).
$$
Since the Gaussian working variance is one, the ridge MAP is
$$
\widehat{\beta}_n^{\text{ridge}} = \argmin_{\beta \in \mathbb{R}^p} \left\{ \frac{1}{2} \|Y - X\beta\|_2^2 + \frac{1}{2} \|\beta\|_2^2 \right\},
$$
with closed-form solution
$$
\widehat{\beta}_n^{\text{ridge}} = \left(X^\top X + I_p\right)^{-1} X^\top Y.
$$
The corresponding regularization matrix is
$$
D_{\text{ridge}} = I_p.
$$

For the Student-$t$ experiment, we use the same Gaussian ridge prior,
$$ \beta\sim \mathcal{N}_p( 0,I_p), $$
but replace the Gaussian likelihood by the Student-$t$ likelihood. Ignoring terms that do not depend on $\beta$, the ridge MAP is
$$ \widehat{\beta}_{n,t}^{\mathrm{ridge}} = \argmin_{\beta\in\mathbb R^p} \left[ \frac{\nu+1}{2} \sum_{i=1}^n \log \left\{ 1+ \frac{(Y_i-X_i^\top\beta)^2} {\nu\sigma^2} \right\} + \frac{1}{2}\|\beta\|_2^2 \right]. $$
Unlike its Gaussian counterpart, this estimator does not have a closed-form solution. The corresponding regularization matrix remains
$$ D_{\mathrm{ridge}}=I_p. $$

\subsubsection{Spike-and-Slab Initialization}

For the continuous spike-and-slab initialization, the coefficients are assigned independent mixture priors
$$
\pi(\beta_j) = (1 - \varpi)\phi(\beta_j; 0, v_0) + \varpi\,\phi(\beta_j; 0, v_1), \quad j = 1, \dots, p.
$$
The spike-and-slab hyperparameters are experiment-specific. For the Gaussian regression experiment, we use
$$ \varpi=0.25, \qquad v_1=9, \qquad v_0=0.0005. $$
For the Student-$t$ regression experiment, we use
$$ \varpi=0.1, \qquad v_1=1, \qquad v_0=0.0005. $$
The corresponding MAP estimator is
$$
\widehat{\beta}_n^{\text{SS}} = \argmax_{\beta \in \mathbb{R}^p} \left[ -\frac{1}{2} \|Y - X\beta\|_2^2 + \sum_{j=1}^p \log \left\{ (1 - \varpi)\phi(\beta_j; 0, v_0) + \varpi\phi(\beta_j; 0, v_1) \right\} \right].
$$

For a given value of $\beta_j$, define the conditional slab probability
$$ \gamma_j(\beta) = \frac{ \varpi\phi(\beta_j;0,v_1) }{ \varpi\phi(\beta_j;0,v_1) + (1-\varpi)\phi(\beta_j;0,v_0) }. $$
The corresponding local prior precision is
$$ d_j(\beta) = \frac{1-\gamma_j(\beta)}{v_0} + \frac{\gamma_j(\beta)}{v_1}, $$
and the regularization matrix used in the predictive recursion is
$$ D_{\mathrm{SS}}(\beta) = \operatorname{diag} {\left[ d_j(\beta)-\beta_j^2\left(v_0^{-1}-v_1^{-1}\right)^2\gamma_j(\beta)\{1-\gamma_j(\beta)\} \right]_{j=1}^p}. $$

For the Student-$t$ working model, the corresponding spike-and-slab MAP estimator is
\begin{align*}
\widehat{\beta}_{n,t}^{\mathrm{SS}} = \argmax_{\beta\in\mathbb R^p} \Bigg[
&\sum_{i=1}^n \log t_\nu \left( Y_i; \mu=X_i^\top\beta, \sigma \right)\\
&\quad + \sum_{j=1}^p \log \left\{ (1-\varpi)\phi(\beta_j;0,v_0) + \varpi\,\phi(\beta_j;0,v_1) \right\} \Bigg].
\end{align*}
For this experiment,
$$ \nu=4, \qquad \sigma=1, \qquad \varpi=0.1, \qquad v_1=1, \qquad v_0=0.0005. $$
After computing $\widehat{\beta}_{n,t}^{\mathrm{SS}}$, the regularization matrix used in predictive resampling is
$$ D_{\mathrm{SS}} \left( \widehat{\beta}_{n,t}^{\mathrm{SS}} \right). $$

\subsubsection{EM Updates}
The non-convex MAP problem is solved using an expectation-maximization algorithm. Given the current iterate $\beta^{(m)}$, the E-step computes
$$
\gamma_j^{(m)} = \gamma_j \left(\beta^{(m)}\right),
$$
and forms
$$
D^{(m)} = \operatorname{diag} \left\{ \frac{1 - \gamma_j^{(m)}}{v_0} + \frac{\gamma_j^{(m)}}{v_1} \right\}_{j=1}^p.
$$
The Gaussian M-step is
$$
\beta^{(m+1)} = \left(X^\top X + D^{(m)}\right)^{-1} X^\top Y.
$$

For the Student-$t$ likelihood, we use its normal--Gamma scale-mixture representation,
$$ Y_i\mid \lambda_i,\beta \sim \mathcal{N}\left( X_i^\top\beta, \frac{\sigma^2}{\lambda_i} \right), \quad \lambda_i \sim \text{Ga} \left( \frac{\nu}{2}, \frac{\nu}{2} \right), $$
where the second Gamma parameter denotes the rate. Given the current iterate $\beta^{(m)}$, define the residual
$$ e_i^{(m)} = Y_i-X_i^\top\beta^{(m)}. $$
The conditional expectation of the latent precision is
$$ \overline{\lambda}_i^{(m)} = \mathbb E \left( \lambda_i \mid Y_i,\beta^{(m)} \right) = \frac{\nu+1} { \nu+ \{e_i^{(m)}/\sigma\}^2 }. $$
Let
$$ W^{(m)} = \operatorname{diag} \left( \overline{\lambda}_1^{(m)}, \ldots, \overline{\lambda}_n^{(m)} \right). $$
For the spike-and-slab initialization, the same iteration also computes
$$ \gamma_j^{(m)} = \gamma_j\left(\beta^{(m)}\right), $$
and
$$ D^{(m)} = \operatorname{diag} \left\{ \frac{1-\gamma_j^{(m)}}{v_0} + \frac{\gamma_j^{(m)}}{v_1} \right\}_{j=1}^p. $$
The resulting weighted M-step is
$$ \beta^{(m+1)} = \left\{ X^\top W^{(m)}X + \sigma^2D^{(m)} \right\}^{-1} X^\top W^{(m)}Y. $$
For the Student-$t$ ridge initialization, the same update is used with
$$ D^{(m)}=I_p. $$
For the Student-$t$ spike-and-slab initialization, $D^{(m)}$ is the local precision matrix defined above.

The iteration is run until $\|\beta^{(m+1)}-\beta^{(m)}\|_2<10^{-6}$, or for at most $500$ iterations. The Gaussian ridge estimator is available in closed form and requires no iteration; the Student-$t$ ridge iteration is started at the Gaussian ridge solution $(X^\top X/\sigma^2+I_p/\tau^2)^{-1}X^\top Y/\sigma^2${, with $\tau^2=1$}. The two spike-and-slab problems are non-convex, and are started at $\beta=0$ at the largest $v_0$ on the ladder described below, each subsequent fit being warm-started at the preceding solution. No multi-start is used: the decreasing ladder is the device by which the non-convexity is handled, and the initialization entering the predictive recursion is the fit at the terminal $v_0$.

\subsubsection{Design, Signal and $v_0$ Path}
The observed design is standardized in both experiments. For the linear experiment the rows of $X$ are drawn from the balanced Gaussian mixture $\tfrac12\mathcal{N}_p(-\mathbf{1},I_p)+\tfrac12\mathcal{N}_p(\mathbf{1},I_p)$ and the active set is a random subset of size $p_\star=50$ with $\beta_{0j}=3\eta_j$, $\eta_j\stackrel{\text{iid}}{\sim}\mathcal{N}(0,1)$; for the Student-$t$ experiment the rows are equicorrelated Gaussian with correlation zero and the active set is the first $p_\star=5$ coordinates, with $\beta_{0j}$ as given in Section \ref{sec:student-t-regression}. The design used to generate future covariates along a predictive path is the one displayed above in each case, namely $\mathcal{N}_p(0,I_p)$ for the linear models and $\mathcal{N}_p(0,10^2I_p)$ for the Student-$t$ models, and is deliberately distinct from the observed design.

For the spike-and-slab initializations, the EM algorithm is run along the decreasing ladder $v_0\in\{1,0.5,0.1,0.05,0.01,0.005,0.001,0.0005\}$ for the linear models, and the same ladder without its first element for the Student-$t$ models, each fit warm-started at the solution of the previous one, in the manner of the dynamic posterior exploration of \citet{rockova2014emvs}. The terminal value $v_0=0.0005$ is the one that enters $D_{\mathrm{SS}}(\beta_n)$ in the predictive recursion, and it is the value reported above.

\subsubsection{Computation and Evaluation}
Results are averaged over $R=80$ independently generated data sets. Ordinary MGP uses $1000$ predictive paths. For bMGP, 50 bootstrap data sets are generated and 20 predictive paths are run from each bootstrap initialization, giving $1000$ terminal draws. The predictive recursion is run for $5000$ steps for the linear models and $300$ steps for the Student-$t$ models.

For a coordinate group $G\in\{A_0,A_0^c\}$, the reported quantities are
$$
\widehat{\mathrm{Cov}}_G = \frac{1}{R|G|} \sum_{r=1}^R\sum_{j\in G} \mathds{1}\{\beta_{0j}\in C_{jr}\},
$$
$$
\widehat{\mathrm{Bias}}_G = \frac{1}{R|G|} \sum_{r=1}^R\sum_{j\in G} (\bar{\beta}_{jr}-\beta_{0j}), \qquad \widehat{\mathrm{Len}}_G = \frac{1}{R|G|} \sum_{r=1}^R\sum_{j\in G}|C_{jr}|,
$$
where $C_{jr}$ is the $95\%$ marginal credible interval and $\bar{\beta}_{jr}$ is the corresponding posterior mean.

\subsubsection{The ACTG175 Application}\label{app:actg_impl}
{The Gaussian (GPE) and Student-$t$ (TPE) engines of Section \ref{sec:empirical} follow the regression recursions of \citet{fong2026asymptotics}, updating the scale alongside $\beta$. The GPE is initialized at the Gaussian maximum likelihood estimator and the TPE at the Student-$t$ maximum likelihood estimator selected among ten random restarts; each path then takes $100$ steps with covariates resampled from the observed design, after which the remaining increments are replaced by their Gaussian limit. The outcome and the continuous covariates are standardized once, on the full sample, so that the draws from different bagged data sets share the same units. For the BTPE, the martingale posterior under the Bayesian posterior predictive coincides with the Bayesian posterior itself, which is sampled with CmdStan under independent $\mathcal{N}(0,10^2)$ priors on the coefficients and a half-Cauchy$(0,5)$ prior on the scale, with $\nu=5$ fixed throughout. The contours of Figure \ref{fig:bpbi-aids-contours} are $95\%$ highest-density regions of bivariate kernel density estimates of the draws.}

\subsection{Additional Experiments}
\label{app:high-dimensional-p300}

We repeat both experiments in the regime $p>n$, taking $p=300$ with $n=250$ and leaving every other setting as above{, the Student-$t$ recursion then running for $p+100=400$ steps}. Table \ref{tab:bpbp-linear-well-miss-300} reports the linear results and Table \ref{tab:bpbp-studentt-well-miss-300} the Student-$t$ results.

\begin{table}[htbp]
\centering
\caption{Linear regression experiment in the regime $p>n$, with $p=300$ and $n=250$, and entries read
as in Table \ref{tab:bpbp-linear-well-miss}. The active and inactive groups contain $50$ and
$250$ coefficients. Bayes and BayesBag are the two Bayesian baselines of Table \ref{tab:bpbp-linear-well-miss}.}
\label{tab:bpbp-linear-well-miss-300}
\footnotesize
\begin{tabular}{llcccc}
\toprule
Example & Method
& Active Cov. (Bias) & Active Len.
& Inactive Cov. (Bias) & Inactive Len. \\
\midrule
\multicolumn{6}{l}{\textit{Homoskedastic ($m=0$)}} \\
\addlinespace[2pt]
Linear ridge
& MGP
& 0.736 (0.003) & 1.728
& 0.927 ($-0.001$) & 1.728 \\
& bMGP
& 0.711 (0.020) & 3.215
& 1.000 ($-0.004$) & 3.204 \\
& Bayes
& 0.734 (0.003) & 1.730
& 0.926 ($-0.001$) & 1.728 \\
& BayesBag
& 0.718 (0.020) & 3.214
& 1.000 ($-0.004$) & 3.205 \\
\addlinespace
Linear spike
& MGP
& 0.828 ($-0.000$) & 0.353
& 1.000 (0.000) & 0.071 \\
& bMGP
& 0.909 ($-0.001$) & 0.675
& 1.000 (0.000) & 0.081 \\
\midrule
\multicolumn{6}{l}{\textit{Heteroskedastic ($m=4$)}} \\
\addlinespace[2pt]
Linear ridge
& MGP
& 0.725 (0.003) & 1.728
& 0.906 ($-0.001$) & 1.728 \\
& bMGP
& 0.715 (0.020) & 3.229
& 1.000 ($-0.004$) & 3.217 \\
& Bayes
& 0.726 (0.003) & 1.730
& 0.906 ($-0.001$) & 1.728 \\
& BayesBag
& 0.718 (0.020) & 3.227
& 1.000 ($-0.004$) & 3.218 \\
\addlinespace
Linear spike
& MGP
& 0.677 ($-0.000$) & 0.354
& 0.990 (0.000) & 0.074 \\
& bMGP
& 0.917 ($-0.002$) & 0.918
& 1.000 (0.000) & 0.163 \\
\bottomrule
\end{tabular}
\end{table}
For linear spike-and-slab regression, bMGP improves active coverage from $0.828$ to $0.909$ under correct specification and from $0.677$ to $0.917$ under misspecification. This improvement is accompanied by wider active intervals.

The ridge results are different. Active coverage remains near $0.71$ after bootstrapping, despite the increase in interval width. When $p>n$,
$$
\text{rank}(X^\top X)\leq n<p,
$$
so regularization determines directions that are not identified by the likelihood. Bootstrapping captures sampling variability but need not remove this regularization effect. Moreover, the small reported signed bias may partly reflect cancellation across positive and negative coordinates.

\begin{table}[htbp]
\centering
\caption{Student-$t$ regression experiment in the regime $p>n$, with $p=300$ and $n=250$, and entries
read as in Table \ref{tab:bpbp-linear-well-miss}. The active and inactive groups contain $5$
and $295$ coefficients.}
\label{tab:bpbp-studentt-well-miss-300}
\footnotesize
\begin{tabular}{llcccc}
\toprule
Example & Method
& Active Cov. (Bias) & Active Len.
& Inactive Cov. (Bias) & Inactive Len. \\
\midrule
\multicolumn{6}{l}{\textit{Homoskedastic ($m=0$)}} \\
\addlinespace[2pt]
Student-$t$ ridge
& MGP
& 0.995 (0.009) & 1.689
& 1.000 ($-0.001$) & 1.694 \\
& bMGP
& 1.000 (0.025) & 2.735
& 1.000 ($-0.000$) & 2.728 \\
\addlinespace
Student-$t$ spike
& MGP
& 0.963 ($-0.006$) & 0.306
& 1.000 (0.000) & 0.077 \\
& bMGP
& 0.995 ($-0.005$) & 0.438
& 1.000 (0.000) & 0.084 \\
\midrule
\multicolumn{6}{l}{\textit{Heteroskedastic ($m=4$)}} \\
\addlinespace[2pt]
Student-$t$ ridge
& MGP
& 0.968 (0.000) & 1.689
& 0.996 ($-0.003$) & 1.694 \\
& bMGP
& 1.000 (0.021) & 2.750
& 1.000 ($-0.001$) & 2.742 \\
\addlinespace
Student-$t$ spike
& MGP
& 0.880 ($-0.010$) & 0.306
& 0.998 (0.000) & 0.077 \\
& bMGP
& 0.993 ($-0.008$) & 0.583
& 1.000 (0.000) & 0.117 \\
\bottomrule
\end{tabular}
\end{table}

Student-$t$ ridge MGP is already conservative, and bMGP mainly increases interval width while producing coverage close to one. For the spike-and-slab procedure, bMGP increases active coverage from $0.963$ to $0.995$ under correct specification and from $0.880$ to $0.993$ under misspecification. The larger improvement under misspecification supports the use of bootstrap aggregation when the working predictive model understates sampling uncertainty.

Overall, the $p>n$ results show that bMGP is not a universal correction for under-coverage. Its clearest benefit occurs for the sparse spike-and-slab procedures, particularly under model misspecification, with the expected cost of wider credible intervals.

\subsubsection{Computational Cost of Bagging}\label{app:timing_ss}
Table \ref{tab:timing-spikeslab} reports the wall-clock cost of one replication of the spike-and-slab experiment of Section \ref{sec:lr} at $n=250$ and $p=200$, for the four procedures compared in Table \ref{tab:bpbp-linear-well-miss}{, all based on spike-and-slab initialization}. The Bayesian posterior is fitted by the Gibbs sampler of \citet{george1993variable} for the continuous spike-and-slab prior that the working model already uses, with the chain length of \citet{sun2026design}: $10{,}000$ burn-in iterations followed by $25{,}000$ kept iterations, thinned to $5000$. Over the {$25{,}000$ kept iterations} the smallest effective sample size among the {active} coefficients is $251$ under the homoskedastic DGP and $118$ under the heteroskedastic one, with medians over the active coefficients of $9205$ and $847$. BayesBag applied to this working model needs one such fit for each of the $B=50$ bagged data sets, the same bagged data sets the bMGP uses, and the $50$ fits were run rather than extrapolated. They take $23.2$ minutes, about $50$ times the $28.0$ seconds of one bMGP replication.
Scaled to the $160$ data sets of Table \ref{tab:bpbp-linear-well-miss}, the comparison is about $62$ hours for BayesBag against $1.25$ hours for the bMGP. Within a bMGP replication the $50$ EM initializations account for $3.6$ seconds. The remainder is the predictive recursion, which the current implementation repeats once per bagged data set because each has its own covariance state, so the bMGP takes about $35$ times as long as the MGP although the two produce the same number of terminal draws. The Student-$t$ spike-and-slab experiment of Section \ref{sec:student-t-regression} gives the same picture: one Gibbs fit takes $43$ seconds and the $50$ bagged fits $36$ minutes on the homoskedastic data set, against $5.6$ seconds for a bMGP replication on the same data set, whose predictive horizon is $300$ steps. In the {Gaussian} ridge case every posterior is available in closed form and the ordering is reversed: one replication takes $0.01$ seconds for Bayes, $0.09$ seconds for BayesBag, $0.75$ seconds for the MGP and $25$ seconds for the bMGP. The computational advantage of bagging the initialization is therefore confined to engines whose {Bayesian posterior requires} Monte Carlo sampling.

\begin{table}[htbp]
\centering
\caption{Wall-clock time of one replication of the linear spike-and-slab experiment of Section \ref{sec:lr}, at $n=250$ and $p=200$, measured on one core of an Apple M4 Pro with single-threaded BLAS and averaged over the two data-generating processes. The Bayesian posterior is fitted by the Gibbs sampler of \citet{george1993variable} for the continuous spike-and-slab prior of the working model, run for $10{,}000$ burn-in iterations followed by $25{,}000$ kept iterations and thinned to $5000$, the chain length used by \citet{sun2026design}. BayesBag uses the same $50$ bagged data sets as the bMGP, and all $50$ fits were run.}
\label{tab:timing-spikeslab}
\footnotesize
\begin{tabular}{lr}
\toprule
Method & One replication \\
\midrule
Gibbs sampler, one posterior fit & $28.5$ s \\
BayesBag, $50$ Gibbs fits & $23.2$ min \\
MGP, $1$ initialization and $1000$ paths & $0.79$ s \\
bMGP, $50$ initializations and $1000$ paths & $28.0$ s \\
\bottomrule
\end{tabular}
\end{table}

\section{QMP Details}\label{app:qmp}

\subsection{Asymptotics for the Quantile Functional}\label{app:qmp_theory}
Recall the quantile functional of Section \ref{sec:quantile},
$$
T(\nu)=Q_q(\nu):=\inf\left\{t:\nu\{(-\infty,t]\}\ge q\right\},\qquad q\in(0,1),
$$
with $\vartheta_0=Q_q(\mathsf{P}_0)$ and $\vartheta_\star=Q_q(\FF_\star)$, which agree when the engine is correctly specified. For a distribution $\nu$ whose density $f_\nu$ is positive and continuous at $\vartheta=Q_q(\nu)$, and an empirical approximation $\nu_m$, the Bahadur representation gives
$$
Q_q(\nu_m)-Q_q(\nu)=\frac{q-\nu_m\{(-\infty,\vartheta]\}}{f_\nu(\vartheta)}+o_p(m^{-1/2}),
$$
{and identifies the influence function in \eqref{eq:frechet} as} $\psi(z)=\{q-\mathds{1}(z\le\vartheta_\star)\}/f_{\FF_\star}(\vartheta_\star)$, which satisfies $\int\psi\dt\FF_\star={[q-\FF_\star\{(-\infty,\vartheta_\star]\}]/f_{\FF_\star}(\vartheta_\star)}=0$ as required. The Bahadur representation concerns an empirical approximation of a fixed law and does not by itself deliver Assumption \ref{ass:frechet}, which is an expansion along the random sequences $\FF_n$ and $\FF^{*}_n$; what follows is accordingly offered as a heuristic reading of Theorems \ref{thm:bvm} and \ref{thm:boot} for the QMP implementation, not as a corollary of them: it presupposes that the QMP engine satisfies Assumptions \ref{ass:predictives}--\ref{ass:boot}, the quantile expansion along $\FF_n$ and $\FF^{*}_n$ included, and it identifies the reported draw, namely the rearranged terminal curve evaluated at $q$ (which involves no mixture with $\PP_n$; see Appendix \ref{app:qmp_impl}), with $T(\PP_{n,N})$, an identification we do not prove. For this functional $\mathcal{G}$ contains the constant $1$ and the indicators $\mathds{1}(\cdot\le t)$, so both terms below are coordinates of $\GG_{\FF_\star}$. Because $\GG_{\FF_\star}$ is a limit of differences of probability measures it has total mass zero, so that $\int\{q-\mathds{1}(z\le t)\}\dt\GG_{\FF_\star}(z)=-\GG_{\FF_\star}\{(-\infty,t]\}$ and the scale factor $1/f_{\FF_\star}(\vartheta_\star)$ passes outside the variance. Hence, under these conditions and those of Theorem \ref{thm:bvm}, $\Sigma_\star=\tau^2_\star/f_{\FF_\star}(\vartheta_\star)^{2}$ where $\tau^{2}_\star:=\Var[\GG_{\FF_\star}\{(-\infty,\vartheta_\star]\}]$ is the limiting variance of the engine's distribution function at the target quantile, and Theorem \ref{thm:bvm} gives
\begin{equation}\label{eq:quantile_asymp}
\sqrt{n}\left\{Q_q(\PP_{n,N})-Q_q(\FF_\star)\right\}-\sqrt{n}\,\bar\xi_n\mid\mathcal{F}_n
\Rightarrow\mathcal{N}\left(0,\;\frac{\tau^{2}_\star}{f_{\FF_\star}\left\{Q_q(\FF_\star)\right\}^{2}}\right)
\end{equation}
in $\mathsf{P}_0$-probability, whereas calibrated inference for {$\vartheta_\star$, which equals $Q_q(\mathsf{P}_0)$ when $b=0$,} requires the variance $\Sigma_0$ of \eqref{eq:sigma0_main}, which takes the familiar sample-quantile form $q(1-q)/f_{\mathsf{P}_0}\{Q_q(\mathsf{P}_0)\}^{2}$ when {$\vartheta_\star+\bar\xi_n$} is asymptotically equivalent to the sample quantile.

Two consequences are worth drawing out. First, calibration then requires the engine to match the whole ratio, $\tau^{2}_\star/f_{\FF_\star}\{Q_q(\FF_\star)\}^{2} =q(1-q)/f_{\mathsf{P}_0}\{Q_q(\mathsf{P}_0)\}^{2}$, which under the identification \eqref{eq:iid_form} reduces to reproducing the \emph{density of the data at the quantile being reported}. Either way the demand is strictly stronger than reproducing the quantile itself: an engine may satisfy $\vartheta_\star=\vartheta_0$ and still misstate the uncertainty. Second, the density enters squared in the denominator, so any discrepancy is magnified wherever $f_{\FF_\star}$ is small, and the quality of the uncertainty quantification deteriorates as $q$ moves into the tails.

Neither of the tuning devices in the QMP is directed at this requirement. As implemented by \citet{fong2025bayesian}, the learning rate of the copula update is set to $a=\sqrt{12}\,\hat\sigma_n$, a global scale of the observed sample, and the bandwidth constant $c$ is chosen to maximize a prequential log score, a global measure of predictive fit. Both are therefore blind to $f_{\mathsf{P}_0}$ at the particular $q$ that is reported, and nothing in the construction forces the equality above to hold. This is what the experiments of Section \ref{sec:quantile} display: the QMP does not undercover at $q=0.50$ under either data-generating process, and undercovers badly at $q=0.75$ under the Gamma data-generating process, its coverage still below one half at $n=2000$. Bagging lifts coverage there from $32.5$--$44.5\%$ to $81.0$--$92.5\%$ {with the same tuning rule applied to each bagged data set}, and leaves the adequate $q=0.50$ results alone, because the width it reports adds the sampling variability $\Sigma_0$ of the observed data to the engine's own path variance $\Sigma_\star$, in line with the reading of Theorem \ref{thm:boot} given above.

\subsection{Implementation Details}\label{app:qmp_impl}
Section \ref{sec:quantile} compares two engines, the QMP and its Gaussian-process variant, reported with the same number of posterior draws.

\emph{Quantile martingale posterior} (QMP). Algorithm 4 of \citet{fong2025bayesian}, which uses the bivariate Gaussian copula update of \citet{fong2023martingale} to propagate the quantile function itself rather than the data. The quantile function is carried on the grid $u\in\{0.005,0.010,\dots,0.995\}$. The initialization $Q_{\text{init}}$ is obtained by passing the update through the observed sample{, started from the uniform quantile function on the sample range,} and averaging over $10$ random permutations of the sample order, which {reduces} the dependence of the martingale update on the order in which the data happen to be listed. The learning rate is $a=\sqrt{12}\,\hat\sigma_n$ with $\hat\sigma_n$ the sample standard deviation, entering the update through the sequence $\alpha_i=a(i+1)^{-1}$; the bandwidth sequence is $\rho_i=(1-c\,i^{-k})^{1/2}$, with exponent $k=1/2$ and constant $c$ selected from $\{0.05,0.10,\dots,0.95\}$ by maximizing the prequential log score; these are the settings used in the authors' own scripts. Each posterior draw is a forward path of $N-n=\lceil n^{3/2}\rceil$ updates of the quantile curve started from $Q_{\text{init}}$, and the reported draw is the terminal curve evaluated at $q$. Every quantile curve, initial or simulated, is rearranged to be non-decreasing before the $q$-th quantile is read off by linear interpolation on the $u$-grid.

\emph{Gaussian-process variant} (QMP-GP). Algorithm 5 of \citet{fong2025bayesian}, which replaces the forward path by a closed-form approximation to its limit: the terminal quantile function is drawn as $\widetilde{Q}_{\infty}=Q_{\text{init}}+a\,S/\sqrt{n+1}$ on the same $u$-grid, where $S$ is a mean-zero Gaussian process with covariance $\mathcal{C}_{\rho^{2}}(u,v)-uv$, $\mathcal{C}_{\rho^{2}}$ the Gaussian copula with correlation $\rho^{2}$ and $\rho=\{1-c(n+1)^{-k}\}^{1/2}$, and the draw is rearranged before evaluation, as above. No horizon is simulated, so the QMP-GP is the $N\rightarrow\infty$ counterpart of the QMP rather than a finite-$N$ approximation to it.

\emph{Bagged versions.} Each bagged data set is an ordinary bootstrap resample of size $n$ from $x_{1:n}$, and the \emph{entire} initialization is re-estimated on it: $a$, $c$ and $Q_{\text{init}}$ for the QMP and the QMP-GP. Randomizing only $Q_{\text{init}}$ would leave the learning rate $a$ and the bandwidth constant $c$ fixed at values estimated from the original sample, and would therefore understate the variation of the engine across data sets that $\Sigma_0$ in Theorem \ref{thm:boot} measures.

\emph{Budgets, targets and replication.} The unbagged posteriors use $M=1000$ paths; the bagged posteriors use $B=50$ bagged data sets with $M_B=20$ paths each, so that $B\times M_B=M$ and the two carry the same forward-simulation budget. Every simulated path is run for $N-n=\lceil n^{3/2}\rceil$ steps; the QMP-GP simulates none, drawing {from the Gaussian-process approximation}. Credible sets are the equal-tailed $95\%$ intervals of the resulting draws, and all reported quantities are averaged over $R=200$ independently generated data sets.

For the copula-based engines, the limit $\FF_n$ of the predictive path exists for each fixed $n$ \citep{fong2025bayesian}, but no closed form is available for its limit $\FF_\star$ as $n\rightarrow\infty$, when this limit exists. The consistency results of \citet{fong2025bayesian} as $n\rightarrow\infty$ require $\mathsf{P}_0$ to have compact support, which neither $\mathcal{N}(0,1)$ nor $\mathrm{Ga}(2,2)$ has. Moreover, under $\mathrm{Ga}(2,2)$, $\FF_n$ can be shown to not have a proper weak limit as $n\rightarrow\infty$, since its upper quantiles diverge with $n$ (see below). The target $Q_q(\FF_\star)$ of Section \ref{sec:quantile} is therefore a scalar object, the limit as $n\rightarrow\infty$ of the engine's $q$-quantile, whose existence we do not establish; Tables \ref{tab:quantile_cov_bias_fstar_q50} and \ref{tab:quantile_cov_bias_fstar_q75} report coverage of the fixed reference value, which we continue to denote by $Q_q(\FF_\star)$, obtained by fitting the engine's initialization to an oracle sample of size $m=2\times10^{4}$ drawn from $\mathsf{P}_0$ and reading the $q$-th quantile off the resulting curve. This is the intended object in the following sense. Before rearrangement the update propagates the quantile curve as a martingale, so the un-rearranged terminal curve has conditional mean $Q_{\text{init}}$; monotone rearrangement is a contraction in $L^{2}$ of the $u$-grid, so in that norm the perturbation is bounded by the path dispersion, which vanishes as the sample on which the engine is initialized grows. On the fixed grid, convergence in this norm implies convergence at every grid point, and the reported levels lie on the grid. What is not verified is that the initialization fitted to an oracle sample converges, as $m$ grows, to a fixed curve; the oracle fit is therefore read as a numerical evaluation of $Q_q(\FF_\star)$, whose stability in $m$ is checked directly below.

Its accuracy is a separate question, and one that has to be checked. Repeating the oracle fit at $m\in\{5\times10^{3},2\times10^{4},5\times10^{4},10^{5}\}$ and across independent oracle samples, the evaluation is stable at the quantiles reported in Section \ref{sec:quantile}: at $q=0.50$ and $q=0.75$ the estimates agree across $m$ to within their Monte Carlo spread, which is $0.008$ and $0.005$ under $\mathcal{N}(0,1)$ and $0.009$ and $0.038$ under $\mathrm{Ga}(2,2)$, against reported biases that reach $0.240$. The evaluation also records numerical evidence of an engine bias, since $Q_{0.75}(\FF_\star)\approx1.40$ under $\mathrm{Ga}(2,2)$ while $Q_{0.75}(\mathsf{P}_0)=1.346$; assessing coverage against $Q_q(\FF_\star)$ rather than $Q_q(\mathsf{P}_0)$ is what separates the calibration question from that bias.

The same evaluation is not usable further into the tail. At $q=0.95$ the oracle estimate does not settle: under $\mathrm{Ga}(2,2)$ it increases from $4.24$ to $6.63$ as $m$ grows over the range above, with a spread across oracle samples of $0.70$. The drift is already present in the initialization $Q_{\text{init}}$ from which predictive resampling starts, that is, the curve obtained by passing the update through the sample to which it is fitted ($Q_n$ in the notation of \citealp{fong2025bayesian}). For the oracle fit, this sample has size $m$, which plays the role of $n$ below. The drift occurs due to the unbounded upper tail of $\mathrm{Ga}(2,2)$, which is outside the compact-support setting of \citet{fong2025bayesian}. Specifically, the sample maximum grows at the same logarithmic rate as $\sum_{i\le n}\alpha_i$. Combined with the use of the uniform quantile function over the sample range as the initial curve $Q_0$ (in the notation of \citealp{fong2025bayesian}), this causes the upper tail of the initialization to inherit the growth of the sample maximum.
One update step can lower the curve at level $u$ by at most $\alpha_i(1-u)$, so after the pass the curve at $u$ is at least $(1-u)x_{(1)}+u\,x_{(n)}-a(1-u)\sum_{i\le n}(i+1)^{-1}$, a bound that rearrangement and the averaging over permutations preserve. Under $\mathrm{Ga}(2,2)$, the sample maximum grows like $\tfrac12\log n$ and $a\rightarrow\sqrt6$, so the bound grows like $\{u/2-\sqrt6(1-u)\}\log n$, which is positive for $u>2\sqrt6/(1+2\sqrt6)$, that is, $u>0.8305$: the upper quantiles of $Q_{\text{init}}$ follow the sample maximum and do not settle, whereas at $u\le0.75$ the bound is uninformative and the estimates above are stable. Section \ref{sec:quantile} therefore reports $q\le0.75$, and the tail experiment of Appendix \ref{app:qmp_extra} is assessed against the population quantile $Q_{0.95}(\mathsf{P}_0)$, which is available in closed form and does not depend on the oracle fit.

\subsection{Additional Experiments}\label{app:qmp_extra}
Section \ref{sec:quantile} reports coverage for the engine-specific target $Q_q(\FF_\star)$, which isolates the calibration question from the bias of the engine. Table \ref{tab:quantile_q95_p0} instead reports coverage for the population quantile $Q_{q}(\mathsf{P}_0)$ itself, at the extreme quantile $q=0.95$, under the same design and the same simulation budget.

\begin{table}[htbp]
\centering
\caption{Monte Carlo coverage and signed bias (in parentheses) of $95\%$ MGP and bMGP credible sets
for the \emph{population} quantile $Q_{0.95}(\mathsf{P}_0)$, over $R=200$ replicated data
sets. Methods and budgets are as in Table \ref{tab:quantile_cov_bias_fstar_q50}.}
\label{tab:quantile_q95_p0}
\footnotesize
\begin{tabular}{cccccc}
\toprule
\multirow{2}{*}{Method} & \multirow{2}{*}{$n$}
& \multicolumn{2}{c}{DGP: $\mathcal{N}(0,1)$}
& \multicolumn{2}{c}{DGP: $\mathrm{Ga}(2,2)$} \\
\cmidrule(lr){3-4} \cmidrule(lr){5-6}
& & QMP & QMP-GP & QMP & QMP-GP \\
\midrule
\multirow{3}{*}{MGP}
  & 100 & $0.160\,(0.345)$ & $0.185\,(0.342)$ & $0.030\,(0.756)$ & $0.030\,(0.757)$ \\
  & 200 & $0.100\,(0.371)$ & $0.095\,(0.371)$ & $0.005\,(1.131)$ & $0.005\,(1.130)$ \\
  & 500 & $0.010\,(0.419)$ & $0.010\,(0.419)$ & $0.000\,(1.301)$ & $0.000\,(1.299)$ \\
\cmidrule{1-6}
\multirow{3}{*}{bMGP}
  & 100 & $0.830\,(0.263)$ & $0.860\,(0.266)$ & $0.630\,(0.581)$ & $0.600\,(0.589)$ \\
  & 200 & $0.635\,(0.304)$ & $0.655\,(0.304)$ & $0.235\,(0.910)$ & $0.265\,(0.905)$ \\
  & 500 & $0.255\,(0.357)$ & $0.200\,(0.360)$ & $0.005\,(1.101)$ & $0.010\,(1.103)$ \\
\bottomrule
\end{tabular}
\end{table}

The table makes concrete a limitation of bagging. At $q=0.95$ the finite-sample posterior centers remain far from the population quantile under either data-generating process, and the signed bias grows with $n$ rather than shrinking, a pattern consistent with a persistent tail bias. 
Bagging widens the intervals and buys back a substantial amount of coverage at $n=100$, from $0.160$ to $0.830$ under $\mathcal{N}(0,1)$, but it cannot survive the growth of the bias, and by $n=500$ coverage for $\vartheta_0$ has collapsed again. Inflating the spread of a posterior does not relocate it, and no resampling scheme applied to the initialization can repair a functional bias in the predictive engine. Reliable tail inference requires an engine whose $\FF_\star$ reproduces the tail of $\mathsf{P}_0$.

\subsection{Runtime and Posterior Comparison on the Cyclone Data}\label{app:cyclone_timing}
\begin{table}[htbp]
\centering
\caption{Wall-clock time to produce one posterior for the conditional quantiles of lifetime maximum wind speed on the $n=291$ North Atlantic cyclone records of \citet{elsner2008increasing}, measured on an otherwise idle Apple M4 Pro with BLAS threads pinned to one; ``Cores'' is the measured ratio of CPU time to wall-clock time{, computed against the total script time, which includes start-up}. QMP-GP is the Gaussian-process approximation of \citet{fong2025bayesian}, their {Algorithm 7}; the bagged versions use $B=50$ bagged data sets with $M_B=20$ draws each; the DQP of \citet{an2024process} is run for $20{,}000$ MCMC iterations with $10{,}000$ discarded.}
\label{tab:cyclone_runtime}
\footnotesize
\begin{tabular}{lrrrrr}
\toprule
Method & Draws & Run 1 (s) & Run 2 (s) & CPU (s) & Cores \\
\midrule
QMP, $M=10{,}000$ paths         & 10,000 &  16.2 &  16.4 & 127 & 7.7 \\
Bagged QMP, $B=50$, $M_B=20$    &  1,000 &  58.6 &  59.1 & 102 & 1.7 \\
QMP, $M=1000$ paths             &  1,000 &   3.4 &   3.4 &  15 & 4.1 \\
QMP-GP, $M=10{,}000$ draws      & 10,000 &   1.1 &   1.1 & 1.3 & 1.2 \\
Bagged QMP-GP, $B=50$, $M_B=20$ &  1,000 &  19.8 &  19.8 &  27 & 1.4 \\
DQP, $20{,}000$ iterations      & 10,000 & 904.6 & 902.0 & 903 & 1.0 \\
\bottomrule
\end{tabular}
\end{table}

\begin{table}[htbp]
\centering
\caption{Posterior slope in year of the conditional $q$-quantile of lifetime maximum wind speed, North Atlantic cyclones 1981 to 2006, $n=291$ \citep{elsner2008increasing}, in the units of the raw data: $\hat\sigma_y/\hat\sigma_x=3.31$ times the year coefficient of the standardized fit for the predictive resampling methods, and the slope of the line through the $26$ year-specific posterior quantiles for the DQP. Entries are the posterior mean (standard deviation), the equal-tailed $95\%$ interval and its length, over $10{,}000$ QMP and QMP-GP draws, $1000$ draws per bagged version, and the $10{,}000$ DQP draws retained from a seeded chain of $20{,}000$ iterations. {The last column reports the number of draws for the predictive resampling methods (for the bagged versions these draws are exchangeable within a bag rather than independent) and, for the DQP, the effective sample size by the initial positive sequence estimate.}}
\label{tab:cyclone_slope}
\footnotesize
\spacingset{1.0}
\begin{tabular}{clcccc}
\toprule
$q$ & Method & Mean (SD) & $95\%$ interval & Len. & {Draws / ESS} \\
\midrule
\multirow{5}{*}{0.50}
& QMP           & 0.447 $(0.288)$ & $[-0.114,\,1.017]$ & 1.131 & 10,000 \\
& Bagged QMP    & 0.441 $(0.393)$ & $[-0.338,\,1.254]$ & 1.591 & 1,000 \\
& QMP-GP        & 0.445 $(0.294)$ & $[-0.128,\,1.019]$ & 1.147 & 10,000 \\
& Bagged QMP-GP & 0.451 $(0.394)$ & $[-0.282,\,1.211]$ & 1.493 & 1,000 \\
& DQP           & 0.343 $(0.174)$ & $[0.071,\,0.677]$  & 0.606 & 4 \\
\midrule
\multirow{5}{*}{0.75}
& QMP           & 0.757 $(0.251)$ & $[0.265,\,1.254]$ & 0.990 & 10,000 \\
& Bagged QMP    & 0.745 $(0.387)$ & $[0.053,\,1.630]$ & 1.576 & 1,000 \\
& QMP-GP        & 0.755 $(0.254)$ & $[0.260,\,1.250]$ & 0.990 & 10,000 \\
& Bagged QMP-GP & 0.749 $(0.389)$ & $[0.036,\,1.597]$ & 1.561 & 1,000 \\
& DQP           & 0.640 $(0.223)$ & $[0.311,\,1.118]$ & 0.807 & 4 \\
\midrule
\multirow{5}{*}{0.90}
& QMP           & 1.036 $(0.171)$ & $[0.700,\,1.372]$ & 0.672 & 10,000 \\
& Bagged QMP    & 1.061 $(0.436)$ & $[0.226,\,1.930]$ & 1.704 & 1,000 \\
& QMP-GP        & 1.036 $(0.170)$ & $[0.698,\,1.368]$ & 0.669 & 10,000 \\
& Bagged QMP-GP & 1.067 $(0.441)$ & $[0.195,\,1.874]$ & 1.678 & 1,000 \\
& DQP           & 1.426 $(0.269)$ & $[0.844,\,1.933]$ & 1.089 & 10 \\
\midrule
\multirow{5}{*}{0.95}
& QMP           & 1.088 $(0.122)$ & $[0.848,\,1.330]$ & 0.482 & 10,000 \\
& Bagged QMP    & 1.098 $(0.470)$ & $[0.136,\,1.908]$ & 1.771 & 1,000 \\
& QMP-GP        & 1.087 $(0.120)$ & $[0.848,\,1.322]$ & 0.474 & 10,000 \\
& Bagged QMP-GP & 1.099 $(0.470)$ & $[0.143,\,1.924]$ & 1.782 & 1,000 \\
& DQP           & 1.915 $(0.329)$ & $[1.180,\,2.530]$ & 1.351 & 46 \\
\bottomrule
\end{tabular}
\end{table}

\citet{fong2025bayesian} illustrate the QMP on the lifetime maximum wind speeds of the $n=291$ North Atlantic tropical cyclones recorded by \citet{elsner2008increasing} between 1981 and 2006, with year as the single covariate. They compare it with the dependent quantile pyramid (DQP) of \citet{an2024process}, a Bayesian nonparametric model for conditional quantile curves fitted by Markov chain Monte Carlo. Table \ref{tab:cyclone_runtime} reports the wall-clock time to produce one posterior for the conditional quantiles of wind speed given year under the QMP, its Gaussian-process variant QMP-GP, the bagged versions of both, and the DQP, all on the same machine with no other job running. The QMP is the exact predictive resampling sampler {for regression} of \citet{fong2025bayesian}, their supplementary Algorithm A1 initialized by their Algorithm 6, run with the settings of the authors' script: $M=10{,}000$ predictive paths of $N-n=5000$ steps, with the bandwidth constant chosen by prequential search on the observed data. The QMP-GP is their Algorithm 7 with the same hyperparameters and $10{,}000$ draws. The bagged versions wrap each sampler in the bagging scheme of Section \ref{sec:quantile}: $B=50$ bootstrap resamples of the $(x_i,y_i)$ pairs, the prequential search repeated on every resample, and $M_B=20$ draws from each. The DQP is run with the $20{,}000$ iterations, $10{,}000$ of them discarded as burn-in, of the scripts distributed with \citet{fong2025bayesian}, with the initial quantile pyramid seeded so that the run is reproducible. The DQP authors' own script runs $210{,}000$ iterations, which at the measured cost per iteration would take about $158$ minutes. Each timing was repeated twice and the two runs agree to within {$2\%$}. The computational cost of the bagged QMP is about three and a half times the QMP{ at one tenth of the number of draws, and about seventeen times the QMP at the same number of draws}. Of its $59$ seconds, $18$ are the $50$ refits of the learning rate and bandwidth constant, and most of the remainder is spent recompiling the reference sampler, which is compiled for {fixed values of the two} and so is compiled afresh for every resample. The bagged QMP-GP pays the same $18$ seconds of refits and about $2$ seconds of sampling. The DQP costs about fifteen times the bagged QMP and $55$ times the QMP. At this chain length it has not mixed: the effective sample size of the slope draws reported in Table \ref{tab:cyclone_slope} lies between $4$ and $46$, so a usable DQP posterior costs more than the table shows. The QMP draws on several cores through its just-in-time compiled sampler while the DQP chain runs on one, so the table also reports CPU time; per unit of CPU time the DQP costs about seven times the QMP and nine times the bagged QMP. These are times to produce a posterior for the same quantities on the same data; they say nothing about the statistical accuracy of the posteriors, and the $10{,}000$ retained DQP draws are autocorrelated whereas the predictive resampling draws are independent{ given the data, or given the bagged data set for the bagged versions}.

Table \ref{tab:cyclone_slope} reports the posterior slope of the conditional $q$-quantile of lifetime maximum wind speed in year on the same records, under the five methods, in the units of the data. All of them place the slope above zero at $q=0.75$, $0.90$ and $0.95$ and let it grow with $q$, so they agree that the upper quantiles of wind speed rose over 1981 to 2006, in the direction reported by \citet{elsner2008increasing}. The bagged intervals are wider than the QMP intervals at every level, by a factor of $1.4$ at $q=0.50$ rising to $3.7$ at $q=0.95$, with the posterior means almost unchanged; this is the extra variance term of Theorem \ref{thm:boot} on real data, and the Gaussian-process variant reproduces it. With one data set and no known truth these widths cannot be read as coverage, and the DQP entries, whose effective sample sizes lie between $4$ and $46$, carry Monte Carlo error that the table does not show.%

\section{Assumptions, Discussion, and Proofs}\label{app:theory}
\subsection{Posterior Concentration}
\subsubsection{Assumptions and Discussion}
The form of the MGP in \eqref{eq:mgp} ensures that it is possible to state and derive concentration results under: 1) existence assumptions on the (random) empirical measures; 2) moment conditions (in $z_{n,N}$) on the distribution of the simulated data. To derive such a result, define $\alpha_N=n/N$ and note that the empirical measure ${\PP}_{n,N}$ can be represented as 
\begin{equation}\label{eq:mixture}
{\PP}_{n,N}=\alpha_N\frac{1}{n}\sum_{i=1}^{n}\delta_{x_i}+(1-\alpha_N)\frac{1}{N-n}\sum_{j=n+1}^{N}\delta_{z_j}=\alpha_N \PP_{n}+(1-\alpha_N)\PP_{N-n},
\end{equation}
where $\mathbb{P}_n$ is just the empirical measure on $x_{1:n}$ and $\PP_{N-n}$ the empirical measure on $z_{n,N}$. This mixture representation of ${\PP}_{n,N}$ makes clear that the interplay between $N$ and $n$ drives the behavior of the posterior $\Pi_N(\vartheta\in\cdot\mid\x)$. Herein, we restrict our analysis to two separate asymptotic regimes: $N\rightarrow\infty$ and $n$ fixed; $N,n\rightarrow\infty$, but $\alpha_N=n/N=o(1)$. The following assumptions are sufficient for determining posterior concentration. {Throughout the appendix, statements made almost surely in $\x$ that involve $\FF_n$ or the simulated path, and unqualified probabilities $\Pr$, refer to the joint law of the observed data and the path; $\mathsf{P}_0$-probability statements refer to the observed data alone.}

\begin{assumption}\label{ass:predictives}
We have $\alpha_N=n/N=o(1)$, even as $n,N\rightarrow\infty$. There exists a (random) distribution $\FF_n$, such that $\PP_{n,N}\Rightarrow \FF_n$, as $N\rightarrow\infty$, a.s.\ in $\x$.
\end{assumption}

\begin{assumption}\label{ass:discrepancy}
  There exists a discrepancy function $\mathcal{D}:\mathcal{P}(\mathcal{X})\times\mathcal{P}(\mathcal{X})\rightarrow\mathbb{R}_+$ and a constant $L_T>0$ such that: (i) For $\nu,\mu\in\mathcal{P}(\mathcal{X})$, if $\nu=\mu$, then $\mathcal{D}(\nu,\mu)=0$. (ii) For $\nu,\mu\in\mathcal{P}(\mathcal{X})$, $ \|T(\nu)-T(\mu)\|\le L_T\mathcal{D}(\nu,\mu) $. (iii) There exists a constant $C$ such that, for $\alpha\in(0,1)$, and $\nu,\mu,\zeta\in\mathcal{P}(\mathcal{X})$, $\mathcal{D}(\alpha \nu+(1-\alpha)\mu,\zeta)\le \alpha\cdot C\mathcal{D}(\nu,\zeta)+(1-\alpha)\cdot C\mathcal{D}(\mu,\zeta)$.
\end{assumption}

\begin{assumption}\label{ass:concentration}
For some $p> 1$, and all $u>0$,
$$
\Pr\left\{\mathcal{D}(\PP_{N-n},\FF_n)>u\mid x_{1:n},\FF_n\right\}\le \frac{C}{u^p(N-n)^{p/2}},
$$
a.s.\ in $\x$. 
\end{assumption}
\begin{assumption}\label{ass:stability}
(i) There exists a positive sequence $r_n=o(1)$ such that $\mathcal{D}(\PP_n,\mathsf{P}_0)=O_p(r_n)$.
(ii) There exists a positive sequence $\varsigma_n=o(1)$ such that $\alpha_N\mathcal{D}(\PP_{n},\FF_{n})=o(\varsigma_n)$ a.s.\ in $\x$. (iii) There exists a non-random probability measure $\FF_\star$ such that $\mathcal{D}(\FF_n,\FF_\star) =O_{p}(r_n)$.
\end{assumption}

\begin{remark}\label{rem:wass}For the Gaussian {example} we consider, $\mathcal{D}(\nu,\mu)$ can be taken to be the Wasserstein distance: let $\mathcal{P}_p(\mathcal{X})$ with $p \ge 1$ be the set of distributions $\mu \in \mathcal{P}(\mathcal{X})$ with finite $p$-th moment; the $p$-Wasserstein distance is a metric on $\mathcal{P}_p(\mathcal{X})$, defined by the transport problem
$ \mathcal{W}_p(\mu, \nu)^p=\inf _{\gamma \in \Gamma(\mu, \nu)} \int_{\mathcal{X} \times \mathcal{X}} \|x-y\|^p \dt\gamma(x, y), $ where $\Gamma(\mu, \nu)$ is the set of probability measures on $\mathcal{X} \times \mathcal{X}$ with marginals $\mu$ and $\nu$. Consider that our goal is inference on the mean functional $T(\nu)=\int z \dt\nu(z)$: for random variables with finite means, Assumption \ref{ass:discrepancy} is satisfied with $\mathcal{D}(\nu,\mu)=\mathcal{W}_1(\nu,\mu)$. More generally, for any functional that can be written as $T(\nu)=\int \phi \dt \nu$, we know that, for $d=\text{dim}(X)$ and $p>d$,
 $$
\left|T(\mu)-T(\nu)\right| =\left|\int \phi \dt \mu-\int \phi \dt \nu\right| \leq \frac{p}{p-d}\|\nu\|_{L^{\frac{p}{p-1}}}\|\nabla \phi\|_{L^p} \mathcal{W}_{\infty}(\nu, \mu)
 $$ (see \citealp{santambrogio2023sharp}). Thus, on a compact convex domain, so long as $\nu$ admits a density making the displayed norm finite and the gradient of $\phi$ is an element of $L^p$, {the bound gives Assumption \ref{ass:discrepancy}(ii) for such pairs, with a constant $L_T$ that is uniform over a class of $\nu$ only when the displayed norm is bounded over that class; condition (iii) does not follow from this bound.} For the quantile functional, a more direct bound can be obtained: for any fixed $\varepsilon\in(0,1/2)$ and $q\in[\varepsilon,1-\varepsilon]$, define the quantile functional $T_q(\nu)=\nu^{-1}(q)=\inf\{t:\nu\{(-\infty,t]\}\ge q\}$; we then have
 $$
| T_q(\PP_{n,N})-T_q(\FF_n)|=|\PP_{n,N}^{-1}(q)-\FF_n^{-1}(q)|\le \sup_{u\in[\varepsilon,1-\varepsilon]}|\PP_{n,N}^{-1}(u)-\FF_n^{-1}(u)|=:\mathcal{W}^{\varepsilon}_{\infty}(\PP_{n,N},\FF_n).
 $$ The restriction to $q\in[\varepsilon,1-\varepsilon]$ is
not cosmetic. Taking the supremum over all of $[0,1]$ gives the $\infty$-Wasserstein distance, which is infinite between a finitely supported empirical measure and an engine with unbounded support, so Assumption \ref{ass:concentration} could not hold in generality.%
\end{remark}
\begin{remark}\label{rem:rate}
For an i.i.d.\ sample of size $n$ from a non-degenerate distribution $\FF$ with finite moments of order $q>2p$, we {have, for $\mathcal{X}\subseteq\mathbb{R}$,} $\{\E[\mathcal{W}_1(\PP_n,\FF)^p]\}^{1/p}\asymp n^{-1/2}$, so that a condition similar to Assumption \ref{ass:concentration} will follow via Markov. In the typical case in which the contraction rate $\epsilon_n$ of Theorem \ref{thm:concentration} is of order $n^{-1/2}$, the exponent $p/2$ imposes no restriction on $N$ beyond those already in force, since $\alpha_N=n/N=o(1)$ implies $N-n\gg n$. The constant $C$ does not depend on $n$ or $N$, so Assumption \ref{ass:concentration} bounds the fluctuation of the simulated empirical measure uniformly along the array $n,N\rightarrow\infty$.
\end{remark}

Assumption \ref{ass:discrepancy} requires that differences in the parameters/functionals of interest can be upper-bounded by differences in the underlying measures. This condition is also required when matching simulated and observed data sets in the literature on approximate Bayesian computation (ABC); see, e.g., Assumption 3 in \citet{frazier2018asymptotic}, while Assumption 5 in \citet{bernton2019approximate} is nearly identical to Assumption \ref{ass:discrepancy}(i)--(ii). In the case where $\mathcal{D}(P,Q)$ is a norm, Assumption \ref{ass:discrepancy}(iii) is directly satisfied, and it also holds for $\mathcal{W}_1$, which is jointly convex; for $p>1$ joint convexity is available for $\mathcal{W}^p_p$ rather than for $\mathcal{W}_p$, and the trimmed $\mathcal{W}^{\varepsilon}_{\infty}$ of Remark \ref{rem:wass} requires separate verification.  
Assumption \ref{ass:discrepancy} is critical as it links concentration of the functional $T(\PP_{n,N})$ to the concentration of $\PP_{N-n}$. Further, as discussed in Remark \ref{rem:wass}, $\mathcal{D}(\mu,\nu)$ can often be taken to be the Wasserstein distance, which adds a useful level of regularity and generality to PBI. %

\begin{remark}
If $\mathcal{D}(\mu,\nu)$ satisfies a weak triangle inequality, then the stochastic counterpart of Assumption \ref{ass:stability}(ii) is implied by Assumptions \ref{ass:stability}(i) and \ref{ass:stability}(iii); the almost-sure form assumed in (ii) requires almost-sure versions of the two: for some constant, $C>0$,
$$\alpha_N\mathcal{D}(\PP_{n},\FF_{n})\le C\alpha_N\left\{\mathcal{D}(\PP_{n},\mathsf{P}_0)+\mathcal{D}(\mathsf{P}_0,\FF_\star)+\mathcal{D}(\FF_n,\FF_\star)\right\}.$$ Then, assuming that $\mathcal{D}(\mathsf{P}_0,\FF_\star)$ is finite, the right-hand side is $O(\alpha_N)$, so that any $\varsigma_n$ with $\alpha_N=o(\varsigma_n)$ serves.%
\end{remark}
\subsubsection{Proof of Theorem \ref{thm:concentration}}
\begin{proof}[Proof of Theorem \ref{thm:concentration}]

\noindent\textbf{Step 1.} We first show that, for any $\delta>0$,
\begin{multline}\label{eq:step1}
\Pr\left\{\|T({\PP}_{n,N})-T(\FF_{n})\|>\delta\mid x_{1:n}\right\}\\
\le \Pr\left\{CL_T\alpha_N\mathcal{D}(\PP_{n},\FF_{n})>\delta/2\mid x_{1:n}\right\}+\frac{C}{\delta^p (N-n)^{p/2}},
\end{multline}
{where the first term on the right tends to zero by Assumption \ref{ass:stability}(ii).}
Using the mixture representation \eqref{eq:mixture} together with Assumption \ref{ass:discrepancy}(ii)--(iii),
\begin{flalign}
  \|T({\PP}_{n,N})-T(\FF_n)\|&\le L_T\mathcal{D}(\PP_{n,N},\FF_n)\le CL_T\alpha_N\mathcal{D}(\PP_{n},\FF_n)+CL_T(1-\alpha_N)\mathcal{D}(\PP_{N-n},\FF_{n}).\label{eq:basic-bound}
\end{flalign}
{On the event $\{CL_T\alpha_N\mathcal{D}(\PP_{n},\FF_{n})\le\delta/2\}$, whose complement contributes the first term on the right of \eqref{eq:step1},} writing $\delta':=\delta/\{2CL_T(1-\alpha_N)\}$, equation \eqref{eq:basic-bound} implies
$$
\left\{\|T({\PP}_{n,N})-T(\FF_{n})\|>\delta\right\}\subseteq \left\{\mathcal{D}(\PP_{N-n},\FF_{n})>\delta'\right\}.
$$
Assumption \ref{ass:concentration} bounds the conditional probability of the right-hand event given $(x_{1:n},\FF_n)$ by a non-random quantity; integrating over the conditional law of $\FF_n$ given $x_{1:n}$ therefore gives
\begin{flalign*}
\Pr\left\{\|T({\PP}_{n,N})-T(\FF_{n})\|>\delta\mid x_{1:n}\right\}&-\Pr\left\{CL_T\alpha_N\mathcal{D}(\PP_{n},\FF_{n})>\delta/2\mid x_{1:n}\right\}
\\&\le \E\left[ \Pr\left\{\mathcal{D}(\PP_{N-n},\FF_n)>\delta'\;\middle|\; x_{1:n},\FF_n\right\}\;\middle|\;x_{1:n}\right]
\\&\le \frac{C}{(\delta')^{p}(N-n)^{p/2}}\le \frac{C}{\delta^p (N-n)^{p/2}},
\end{flalign*}
where the last inequality uses $0\le\alpha_N\le1$ and absorbs $\{2CL_T\}^pC$ into $C$. This is \eqref{eq:step1}, the concentration of the MGP about the infeasible functional $T(\FF_n)$ given the observed data. Step 2 does not use \eqref{eq:step1} itself, but builds on the bound \eqref{eq:basic-bound} and on the same passage from Assumption \ref{ass:concentration} to a conditional probability given $x_{1:n}$.

\noindent\textbf{Step 2.} Decompose $\vartheta-\vartheta_0=T(\PP_{n,N})-T(\mathsf{P}_0)$ as
\begin{equation}
T(\PP_{n,N})-T(\mathsf{P}_0)
=
\{T(\PP_{n,N})-T(\FF_n)\}
+
\{T(\FF_n)-T(\FF_\star)\}
+
\{T(\FF_\star)-T(\mathsf{P}_0)\}.
\label{eq:case2-Fstar-decomp}
\end{equation}
By the triangle inequality, Assumption \ref{ass:discrepancy}(ii) and the definition of $b$,
\begin{flalign*}
\|T(\PP_{n,N})-T(\mathsf{P}_0)\|
&\leq
\|T(\PP_{n,N})-T(\FF_n)\|
+
L_T\mathcal{D}(\FF_n,\FF_\star)
+
b,
\end{flalign*}
and combining this with \eqref{eq:basic-bound} gives
\begin{flalign}
\|T(\PP_{n,N})-T(\mathsf{P}_0)\|
\leq
b
+
CL_T\alpha_N\mathcal{D}(\PP_{n},\FF_{n})
+
CL_T(1-\alpha_N)\mathcal{D}(\PP_{N-n},\FF_{n})
+
L_T\mathcal{D}(\FF_n,\FF_\star).
\label{eq:case2-main-bound}
\end{flalign}

Define the event
\begin{equation*}
\Omega_{n,N}
:=
\left\{
(x_{1:n},\FF_n):
\alpha_N\mathcal{D}(\PP_n,\FF_n)\leq C\varsigma_n,
\quad
\mathcal{D}(\FF_n,\FF_\star)\leq CM_nr_n
\right\},
\end{equation*}
which satisfies $\Pr(\Omega_{n,N})\rightarrow1$. Indeed, the first requirement holds almost surely for all $n,N$ large enough, since $\alpha_N\mathcal{D}(\PP_n,\FF_n)=o(\varsigma_n)$ by Assumption \ref{ass:stability}(ii); and, by Assumption \ref{ass:stability}(iii), $r_n^{-1}\mathcal{D}(\FF_n,\FF_\star)=O_{p}(1)$, so that
$$
\Pr\left\{\mathcal{D}(\FF_n,\FF_\star)>CM_nr_n\right\} = \Pr\left\{r_n^{-1}\mathcal{D}(\FF_n,\FF_\star)>CM_n\right\}\longrightarrow0,
$$
because $M_n\rightarrow\infty$. Since $M_n$ may be taken to diverge arbitrarily slowly, $\epsilon_n$ can be brought arbitrarily close to $\varsigma_n+r_n$.

Set
\begin{equation*}
A_{n,N}(\delta)
:=
\left\{
z_{n,N}:
\|T({\PP}_{n,N})-T(\mathsf{P}_0)\|>b+\delta
\right\}.
\end{equation*}
On $\Omega_{n,N}$, if $z_{n,N}\in A_{n,N}(\delta)$ then \eqref{eq:case2-main-bound} gives
\begin{flalign*}
b+\delta
<
\|T(\PP_{n,N})-T(\mathsf{P}_0)\|
\leq
b
+
C(\varsigma_n+M_nr_n)
+
CL_T(1-\alpha_N)\mathcal{D}(\PP_{N-n},\FF_{n}),
\end{flalign*}
so that, canceling $b$ and writing $\delta_n:=\delta-C\epsilon_n$ with $\epsilon_n=\varsigma_n+M_nr_n$,
\begin{equation}
\delta_n
<
CL_T(1-\alpha_N)\mathcal{D}(\PP_{N-n},\FF_{n}).
\label{eq:new3-Fstar}
\end{equation}
Choose $\delta\geq K\epsilon_n$ with $K>2C$, so that $\delta_n\ge\delta/2>0$. The event $\Omega_{n,N}$ constrains $\FF_n$ as well as $x_{1:n}$ and so is not $\mathcal{F}_n$-measurable; we therefore split on it. Since, on $\Omega_{n,N}$, $z_{n,N}\in A_{n,N}(\delta)$ implies \eqref{eq:new3-Fstar},
\begin{flalign}
\Pi_N\{A_{n,N}(\delta)\mid x_{1:n}\}
\leq
\Pr\left\{
\mathcal{D}(\PP_{N-n},\FF_{n})
>
\frac{\delta_n}{CL_T(1-\alpha_N)}
\;\middle|\; x_{1:n}
\right\}
+
\Pr\left\{\left(\Omega_{n,N}\right)^{c}\;\middle|\;x_{1:n}\right\}.
\label{eq:split}
\end{flalign}
The first term on the right of \eqref{eq:split} is bounded, by Assumption \ref{ass:concentration} integrated over the conditional law of $\FF_n$ given $x_{1:n}$ exactly as in Step 1 but with $\delta$ replaced by $\delta_n$, by
$$
\frac{C}{\delta_n^p (N-n)^{p/2}} = \frac{C}{\left\{\delta_n\sqrt{N-n}\right\}^p}.
$$
Since $\sqrt{N-n}\,\epsilon_n\rightarrow\infty$, we have $\sqrt{N-n}\,\epsilon_n\ge1$ for all $n,N$ large enough, and therefore
\begin{flalign*}
\delta_n\sqrt{N-n}
\geq
\frac{\delta}{2}\sqrt{N-n}
\geq
\frac{K}{2}\epsilon_n\sqrt{N-n}
\geq
\frac{K}{2},
\end{flalign*}
so that the first term is at most $C2^p/K^p\le 1/(2K)$ for $K$ large enough, since $p>1$. For the second term, $\Pr(\Omega_{n,N})\rightarrow1$ gives $\E[\Pr\{(\Omega_{n,N})^{c}\mid x_{1:n}\}]\rightarrow0$, so that by Markov's inequality $\Pr\{(\Omega_{n,N})^{c}\mid x_{1:n}\}\le 1/(2K)$ with $\mathsf{P}_0$-probability tending to one. Adding the two bounds in \eqref{eq:split} gives the stated result.
\end{proof}

\subsection{Asymptotic Normality}
\subsubsection{Assumptions and Discussion}
\label{app:normality_discuss}
To obtain the limiting shape of the MGP, we require more than conditions that guarantee stability of the predictive engine, Assumption \ref{ass:stability}. First, we require a level of smoothness for the functional itself. 
\begin{assumption}\label{ass:frechet}
{There exists a function $\psi:\mathcal{X}\rightarrow\mathbb{R}^{d_\vartheta}$ such that: for any $H_n\in\mathcal{P}(\mathcal{X})$ such that $\mathcal{D}(H_n,\FF_\star)=O_{p}(n^{-1/2})$,}
$$
\sqrt{n}\left\{T(H_n)-T(\FF_\star)-\int_{\mathcal{X}} \psi(z)\dt H_n(z)\right\}=o_p(1);
$$ and $\int\psi(z)\dt\FF_\star(z)=0$.
\end{assumption}

A form of regularity for the fluctuations of the predictive path is also required. To state such a condition rigorously, we require a few additional definitions. Let  $\mathcal{G}$ denote the class of measurable functions containing the constant function $1$ and the coordinates of $\psi:\mathcal{X}\rightarrow\mathbb{R}^{d_\vartheta}$, and for an empirical process $\mathbb{G}_n$, let $\ell^\infty(\mathcal{G})$ denote the space into which $\{\mathbb{G}_ng:g\in\mathcal{G}\}$ maps. When $\mathbb{G}_n$ converges weakly to $\mathbb{G}$, a tight and Borel measurable element of $\ell^\infty(\mathcal{G})$, we write $\mathbb{G}_n\Rightarrow\mathbb{G}$. {For $h\in\ell^\infty(\mathcal{G})$ we write $\int\psi\,\dt h:=(h(\psi_1),\dots,h(\psi_{d_\vartheta}))^\top$, a continuous linear map on $\ell^\infty(\mathcal{G})$, since each coordinate is an evaluation at an element of $\mathcal{G}$.} %
Lastly, let $\mathcal{F}_n=\sigma(\x)$ denote the natural filtration of $\x$, and let $\bar\FF_n:=\E(\FF_n\mid\mathcal{F}_n)$ denote the conditional expectation of the random distribution $\FF_n$. For the bagged posterior we write in addition $\mathcal{F}^{*}_n:=\sigma(x_{1:n},x^{*}_{1:n})\supseteq\mathcal{F}_n$ for the $\sigma$-field generated by the data and the resample, $\FF^{*}_n:=\FF_n(\cdot\mid\x^{*})$, and $\bar\FF^{*}_n:=\E[\FF^{*}_n\mid\mathcal{F}^{*}_n]$, with $\bar\xi^{*}_n:=\int\psi(z)\dt\bar\FF^{*}_n(z)$ the corresponding center.

\begin{assumption}\label{ass:dist}
Let $\FF_\star$ be as in Assumption \ref{ass:stability} and let $\mathcal{G}$ be an $\FF_\star$-Donsker class. There exists a tight, mean-zero Gaussian element $\mathbb{G}_{\FF_{\star}}$ such that
$
r^{-1}_n\{\FF_n-\bar\FF_n\}\mid \mathcal{F}_n\Rightarrow \mathbb{G}_{\FF_{\star}},$ in $\ell^\infty(\mathcal{G})$, in $\mathsf{P}_0$-probability.
\end{assumption}
Assumption \ref{ass:dist} governs the fluctuation of the predictive path about its own center, which is what $\Sigma_\star$ measures. It says nothing about the center $\bar\FF_n$ itself, and the bagged posterior needs a separate condition on that object; it is stated as Assumption \ref{ass:boot} in Appendix \ref{app:bmgp}.

 Since $h\mapsto\int\psi\dt h$ is a continuous linear map on $\ell^{\infty}(\mathcal{G})$, the image $\int\psi\dt\GG_{\FF_\star}$ is a mean-zero Gaussian vector, and we write its covariance as
\begin{equation}\label{eq:psi_gaussian}
\Sigma_\star:=\Var\left\{\int \psi(z)\dt\GG_{\FF_\star}(z)\right\}.
\end{equation}Writing $\Gamma(g,h):=\mathrm{Cov}\{\GG_{\FF_\star}(g),\GG_{\FF_\star}(h)\}$, $g,h\in\mathcal{G}$, for the covariance kernel of $\GG_{\FF_\star}$, the definition takes the sandwich form $(\Sigma_\star)_{jk}=\Gamma(\psi_j,\psi_k)$; that is, $\Sigma_\star=\dot{T}_{\FF_\star}\,\Gamma\,\dot{T}_{\FF_\star}^\top$, with $\dot{T}_{\FF_\star}:h\mapsto\int\psi\,\dt h$ the directional derivative of $T$ at $\FF_\star$. When $\Gamma$ is the $\FF_\star$-Brownian-bridge kernel, the above simplifies to $\Gamma(g,h)=\int gh\,\dt\FF_\star-\int g\,\dt\FF_\star\int h\,\dt\FF_\star$, and $\Sigma_\star$ takes the more recognizable form 
\begin{equation}\label{eq:iid_form}
\Sigma_\star=\int\psi(z)\psi(z)^\top\dt\FF_\star(z).
\end{equation}%

By \eqref{eq:psi_gaussian}, $\Sigma_\star$ depends on $\GG_{\FF_\star}$ only through the numbers $\Gamma(\psi_j,\psi_k)$, $j,k=1,\dots,d_\vartheta$. Hence \eqref{eq:iid_form} holds if and only if $\Gamma(\psi_j,\psi_k)=\int\psi_j\psi_k\,\dt\FF_\star$ for all $j,k$, the centering term of the bridge kernel vanishing because $\int\psi\,\dt\FF_\star=0$; the kernel need not be the bridge kernel on all of $\mathcal{G}$. This holds for the P\'olya-urn engine, equivalently the Bayesian bootstrap \citep{fong2023martingale}, whose path fluctuation is an $\FF_\star$-Brownian bridge. However, predictive engines for which this fails can easily be devised, and so this more general condition is required. %

\begin{remark}\label{rem:dist}
Assumption \ref{ass:dist} concerns the \emph{stochastic path fluctuation} $\FF_n-\bar{\FF}_n$ only: the quantity $\FF_n-\bar{\FF}_n$ directly measures the fluctuation of the predictive path given the same observed data set. This quantity measures distinctly different fluctuations from those in $\bar{\FF}_n-\FF_\star$, which captures changes in the ``average'' $\FF_n$  across different hypothetical repeated data sets of increasing size. Indeed, conditionally on $\x$ the difference $\bar{\FF}_n-\FF_\star$ is not negligible at the $\sqrt{n}$ scale. When the predictive engine satisfies the martingale condition of \citet{fong2023martingale}, $\bar{\FF}_n$ is the one-step-ahead predictive $\FF(X_{n+1}\in\cdot\mid\x)$; in general it is simply the conditional mean.
\end{remark}

\begin{remark}
The centering sequence $\bar\xi_n$ plays a key role in Theorem \ref{thm:bvm} and acts, in some ways, like the centering sequence in a standard {Bernstein--von Mises theorem}. However, unlike standard Bayesian results, the conditions maintained herein do not automatically imply asymptotic normality of the centering sequence, as is often the case for Bayesian inference, and the choice of centering sequence ultimately depends on the specific nature of the predictive algorithm; for one such example see Theorem 3 of \citet{fong2026asymptotics}. That being said, to investigate the nature of $\bar\xi_n$, it is perhaps helpful to use Assumption \ref{ass:frechet} to explore one possible approach to define $\bar\xi_n$. Namely, from Assumption \ref{ass:frechet}, we have
\begin{flalign*}
T(\bar{\FF}_n)&=T(\FF_\star)+\underbrace{\int \psi(z)\dt\bar{\FF}_n(z)}_{=:\bar\xi_n}+o\{\mathcal{D}(\bar{\FF}_n,\FF_\star)\}\\&=T(\FF_\star)+\E\left\{\int \psi(z)\dt{\FF}_n(z)\mid\mathcal{F}_n\right\}+o\{\mathcal{D}(\bar{\FF}_n,\FF_\star)\}.
\end{flalign*}That is, when Assumption \ref{ass:frechet} is satisfied at $\bar\FF_n$, $\bar\xi_n$ is the conditional expectation (wrt $\x$) of the influence function over the simulated paths defined by $\FF_n$ and conditional on $\x$; i.e., the influence function where simulation noise has been averaged out. For example, in the Gaussian mean example of Section \ref{sec:Gaussexample}, the distribution $\bar\FF_n$ can be shown to have mean $\bar{x}_n$: hence, $\bar\xi_n=\bar{x}_n-\vartheta_\star$, and $\bar\xi_n$ has an influence function representation of the form 
$$
\sqrt{n}\bar\xi_n=n^{-1/2}\sum_{i=1}^{n}\phi(x_i)+o_p(1),\quad \phi(x)=x-\vartheta_\star.
$$

More generally, however, the existence of such a representation for $\sqrt{n}\bar\xi_n$ depends intimately on the nature of the predictive engine used in the MGP. To understand the implications of this, consider that $\bar\FF_n(\cdot)=\FF(X_{n+1}\in\cdot\mid\x)=\FF_{\lambda_n}(\cdot)$ for some $\lambda_n$ estimated from $\x${, with limit $\lambda_\star$ such that $\FF_{\lambda_\star}=\FF_\star$}, and that $\FF_{\lambda_n}(\cdot)$ admits a density $f_{\lambda_n}(\cdot)$. Then, under appropriate regularity conditions, for $s_{\lambda_\star}(z):=\frac{1}{f_{\lambda_\star}(z)}\frac{\partial f_{\lambda_\star}(z)}{\partial\lambda}$,
\begin{flalign*}
\int \psi(z)\dt\bar\FF_n(z)=\int\psi(z)f_{\lambda_n}(z)\dt z&=\int\psi(z)f_{\lambda_\star}(z)\dt z+\left\{\int \psi(z) s_{\lambda_\star}(z)^\top f_{\lambda_\star}(z)\dt z\right\}(\lambda_n-\lambda_\star)\\&+\left[\int \psi(z)\left\{s_{\overline\lambda}(z)f_{\overline\lambda}(z)-s_{\lambda_\star}(z)f_{\lambda_\star}(z)\right\}^\top\dt z\right](\lambda_n-\lambda_\star),
\end{flalign*}for {intermediate values $\overline\lambda$ on the segment between $\lambda_n$ and $\lambda_\star$, which may differ across the coordinates}. From the definition of the influence function $\psi(z)$, the first term above is zero. Write $\mathcal{J}:=\int \psi(z)s_{\lambda_\star}(z)^\top f_{\lambda_\star}(z)\dt z$ for the matrix appearing in the second term. Then, if 
$$
\sqrt{n}(\lambda_n-\lambda_\star)=\frac{1}{\sqrt{n}}\sum_{i=1}^{n}g(x_i)+o_p(1),
$$for $g$ such that $\int g\dt\mathsf{P}_0=0$ and $\int \|g\|^2\dt\mathsf{P}_0<\infty$, we will have that 
$$
\sqrt{n}\bar\xi_n=\frac{1}{\sqrt{n}}\sum_{i=1}^{n}\phi(x_i)+o_p(1),\quad \phi(x)=\mathcal{J}\,g(x).
$$
The above influence function representation of $\sqrt{n}\bar\xi_n$ reinforces the calibration issues inherent in MGPs: from Theorem \ref{thm:bvm}, the posterior has coverage defined by $\Sigma_\star$, which will often be given by $\Sigma_\star=\int \psi\psi^\top\dt\FF_\star$, but even under  idealized assumptions the centering sequence has variance $$
\Sigma_0=\mathcal{J}\left\{\int g(z)g(z)^\top \dt\mathsf{P}_0(z)\right\}\mathcal{J}^\top.$$ Hence, in general it will not be the case that $\Sigma_\star=\Sigma_0$. 

{To summarize our results, achieving calibration with the MGP is more delicate than in traditional Bayes, due to the greater flexibility in model elicitation. Consider the setting where the functional is the likelihood projection $T(\nu)=\argmax_\lambda\int\log f_\lambda\,\dt\nu$, which returns $\lambda$ itself on the model, so that $\psi=\mathcal{I}^{-1}s_{\lambda_\star}$ with $\mathcal{I}:=\int s_{\lambda_\star}(s_{\lambda_\star})^\top\dt\FF_\star$ and $\mathcal{J}=I$, the identity matrix. When the observations are i.i.d., the predictive model is correctly specified ($\FF_\star=\mathsf{P}_0$) and $\lambda_n$ is the maximum likelihood estimator, we have $g=\mathcal{I}^{-1}s_{\lambda_\star}$ and $\Sigma_0=\mathcal{I}^{-1}$. We additionally have $\Sigma_\star=\int\psi\psi^\top\dt\FF_\star=\mathcal{I}^{-1}$ when \eqref{eq:iid_form} holds, so $\Sigma_0 = \Sigma_{\star}$ and we achieve calibration because the two variances match. As in standard Bayes, a mismatch may occur if the model is misspecified, that is when $\FF_\star\neq\mathsf{P}_0$. Unlike standard Bayes, however, the mismatch can still occur when $\FF_\star=\mathsf{P}_0$: for example if $\lambda_n$ is not an efficient estimator, so that $\Sigma_0\neq\mathcal{I}^{-1}$, or if $\Gamma(\psi_j,\psi_k)\neq\int\psi_j\psi_k\,\dt\FF_\star$, so that $\Sigma_\star\neq\mathcal{I}^{-1}$. In the parametric MGP of \citet{fong2026asymptotics}, for example, the first issue is resolved by centering the MGP on the maximum likelihood estimator and the second by choosing the inverse Fisher information as the learning rate. Therefore, for the variances to match we see that the MGP requires additional algorithmic tuning compared to standard Bayes.}

\end{remark}

\subsubsection{Proof of Theorem \ref{thm:bvm}}
In what follows, statements involving starred quantities are made under $\mathsf{P}^{*}_0$-probability, the joint law of $(\x,\x^{*})$; unqualified $O_p$ and $o_p$ statements refer to the full joint law of the data.
\begin{proof}[Proof of Theorem \ref{thm:bvm}]

Convergence of the bounded-Lipschitz distance to zero in $\mathsf{P}_0$-{probability} is the definition of conditional weak convergence in probability, the mode of convergence used for the bootstrap \citep[Section 3.6]{vandervaart1996weak}%
; the proof establishes the statement in this form.
Recall that $\vartheta=T(\PP_{n,N})$ and $\vartheta_\star=T(\FF_{\star})$, 
and decompose
\begin{flalign}
\sqrt{n}(\vartheta-\vartheta_\star)&=\sqrt{n}\left\{T(\FF_n)-T(\FF_{\star})\right\}+\sqrt{n}\left\{T(\PP_{n,N})-T(\FF_n)\right\}.\label{eq:decomp}
\end{flalign}

\noindent\textbf{Step 1: the second term in \eqref{eq:decomp}.} By Assumption
\ref{ass:discrepancy}(ii)--(iii) and the mixture representation \eqref{eq:mixture}, exactly as in \eqref{eq:basic-bound},
\begin{equation}\label{eq:step1-bvm}
\sqrt{n}\left\|T(\PP_{n,N})-T(\FF_n)\right\|
\le
CL_T\,\sqrt{n}\,\alpha_N\mathcal{D}(\PP_{n},\FF_n)
+
CL_T\,\sqrt{n}\,\mathcal{D}(\PP_{N-n},\FF_{n}).
\end{equation}
By Assumption \ref{ass:stability}(ii) with $\varsigma_n=n^{-1/2}$ we have $\alpha_N\mathcal{D}(\PP_n,\FF_n)=o(n^{-1/2})$ almost surely in $\x$, so the first term on the right of \eqref{eq:step1-bvm} satisfies $CL_T\sqrt{n}\,o(n^{-1/2})=o(1)$. For the second term, fix $\varepsilon>0$ and apply Assumption \ref{ass:concentration} with $u=\varepsilon/(CL_T\sqrt{n})$:
\begin{equation}\label{eq:step1-tail}
\Pr\left\{CL_T\sqrt{n}\,\mathcal{D}(\PP_{N-n},\FF_{n})>\varepsilon\;\middle|\;x_{1:n},\FF_n\right\}
\le
\frac{C\,(CL_T)^{p}}{\varepsilon^{p}}\cdot\frac{n^{p/2}}{(N-n)^{p/2}}
=
\frac{C}{\varepsilon^{p}}\left(\frac{n}{N-n}\right)^{p/2},
\end{equation}
which tends to zero, since $\alpha_N=n/N=o(1)$ by Assumption \ref{ass:predictives}, which implies $n/(N-n)\rightarrow0$. The bound in \eqref{eq:step1-tail} is non-random, so integrating over the conditional law of $\FF_n$ given $x_{1:n}$ leaves it unchanged. Combining the two terms,
$$
\sqrt{n}\left\{T(\PP_{n,N})-T(\FF_n)\right\}=o_p(1).
$$
Since the bound in \eqref{eq:step1-tail} is uniform in $(x_{1:n},\FF_n)$, the same estimate also gives the conditional version, which is the form used in Step 2. The constant in Assumption \ref{ass:concentration} does not depend on $n$ or $N$, so the argument is valid along the array $n,N\rightarrow\infty$.

\noindent\textbf{Step 2: the first term in \eqref{eq:decomp}.} Define the $\mathcal{F}_n$-measurable center $\bar\xi_n:=\int \psi(z)\dt\bar{\FF}_n(z)$, return to \eqref{eq:decomp}, and subtract $\sqrt{n}\bar\xi_n$ from both sides:
\begin{flalign*}
\sqrt{n}(\vartheta-\vartheta_\star)-\sqrt{n}\bar\xi_n&=\sqrt{n}\left\{T(\FF_{n})-T(\FF_{\star})\right\}-\sqrt{n}\bar\xi_n\\&\qquad+\sqrt{n}\{T(\PP_{n,N})-T(\FF_n)\}\\&=\sqrt{n}\left\{T(\FF_{n})-T(\FF_{\star})\right\}-\sqrt{n}\bar\xi_n+o_p(1),
\end{flalign*}
where the second equality follows from Step 1. For the first term, apply Assumption \ref{ass:stability}(iii) with
$r_n=n^{-1/2}$: we have $\mathcal{D}(\FF_n,\FF_\star)=O_{p}(n^{-1/2})$, and Assumption \ref{ass:frechet} gives
\begin{flalign*}
\sqrt{n}\left\{T(\FF_{n})-T(\FF_{\star})\right\}-\sqrt{n}\bar\xi_n&=\sqrt{n}\int\psi(z)\dt\FF_n(z)-\sqrt{n}\bar\xi_n+o_p(1)\\&=\int\psi(z)\dt\{\sqrt{n}(\FF_n-\bar\FF_n)\}(z)+o_p(1).
\end{flalign*}%
Assumption \ref{ass:dist}, with $r_n=n^{-1/2}$, gives $\sqrt{n}(\FF_n-\bar{\FF}_n)\mid\mathcal{F}_n\Rightarrow\GG_{\FF_\star}$. Since $h\mapsto\int\psi\dt h$ is a continuous linear map on $\ell^{\infty}(\mathcal{G})$, the continuous mapping theorem gives $\int\psi\dt\GG_{\FF_\star}$ as the limit. Being the image of a tight Gaussian element under a continuous linear map (Assumption \ref{ass:dist}), the term $ \int\psi\dt\GG_{\FF_\star}$  is a mean-zero Gaussian random variable with covariance $\Sigma_\star$ as defined in \eqref{eq:psi_gaussian}. The convergence holds in $\mathsf{P}_0$-probability. 

{The $o_p(1)$ terms in Steps 1 and 2 are also negligible conditionally on $\mathcal{F}_n$: for a remainder $R_n=o_p(1)$, $\E[\min\{1,\|R_n\|\}\mid\mathcal{F}_n]\rightarrow0$ in $\mathsf{P}_0$-probability by Markov's inequality, since its expectation tends to zero.} Since the metric $\dt_{\mathrm{BL}}$ is invariant to shifts in location, combining Steps 1 and 2 with Slutsky's theorem gives the stated result.

\end{proof}
\subsection{bMGP}\label{app:bmgp}
\subsubsection{Assumptions and Discussion}
Throughout this section $\x^{*}$ is a bootstrap resample of $\x$, $\mathcal{F}^{*}_n:=\sigma(\x,\x^{*})\supseteq\mathcal{F}_n$, $\FF^{*}_n:=\FF_n(\cdot\mid\x^{*})$ is the predictive engine started from the resample, $\bar\FF^{*}_n:=\E[\FF^{*}_n\mid\mathcal{F}^{*}_n]$ is its center {(so that the starred version of Assumption \ref{ass:dist} in Theorem \ref{thm:boot} is centered at $\bar\FF^{*}_n$ and conditioned on $\mathcal{F}^{*}_n$)}, $\bar\xi^{*}_n:=\int\psi(z)\dt\bar\FF^{*}_n(z)$, and $\mathsf{P}^{*}_0$ denotes the joint law of $(\x,\x^{*})$. {With a slight abuse of notation, $\PP^{*}_n$ and $\PP^{*}_{N-n}$ also denote the empirical measures of $\x^{*}$ and of the path simulated from it.} Beyond the conditions maintained in Theorem \ref{thm:bvm}, to establish the asymptotic behavior of the bMGP we require an additional condition: under the bootstrap law $\PP^{*}_n$, the behavior of $\bar\FF^{*}_n$ must mimic that of $\bar\FF_n$, as measured through the influence function $\psi$. Let $Z_n:=\sqrt{n}(\bar\xi^{*}_n-\bar\xi_n)=\int\psi(z)\dt\left[\sqrt{n}\left(\bar\FF^{*}_n-\bar\FF_n\right)\right](z)$.
\begin{assumption}\label{ass:boot}
Under the bootstrap law $\PP^{*}_n$, 
$
Z_n\mid\mathcal{F}_n\Rightarrow \mathcal{N}(0,\Sigma_0),
$ 
in $\mathsf{P}_0$-probability, with $\Sigma_0$ as in \eqref{eq:sigma0_main}.
\end{assumption}
To obtain concrete results on the coverage of MGP and bMGP credible sets as in Corollary \ref{cor:coverage}, we must maintain an assumption on the asymptotic distribution of the centering sequence $\bar\xi_n$ used within Theorems \ref{thm:bvm}--\ref{thm:boot}. 
\begin{assumption}\label{ass:center}
For $\bar\xi_n$ as in Theorems \ref{thm:bvm}--\ref{thm:boot}, $\sqrt{n}\bar\xi_n\Rightarrow \mathcal{N}(0,\Sigma_0)$, with $\Sigma_0$ as in \eqref{eq:sigma0_main}.  
\end{assumption}
Assumptions \ref{ass:boot} and \ref{ass:center} specify the same limiting form but they concern different quantities. In both assumptions, $\Sigma_0$ is the covariance matrix of the stated limit law, and  coincides with the limit of variances in \eqref{eq:sigma0_main} whenever $\{n\,\bar\xi_n\bar\xi_n^\top\}$ is uniformly integrable; Assumptions \ref{ass:boot} and \ref{ass:center} include this equality in their statement. Assumption \ref{ass:boot} maintains conditions on the difference in influence functions under the \emph{bootstrap} law of the center, whereas Assumption \ref{ass:center}, used in Corollary \ref{cor:coverage}, concerns its \emph{repeated-sampling} law under $\mathsf{P}_0$. The two are different statements, and neither implies the other without further structure on the map from the data to $\bar\FF_n$.

Under the requirement that $\mathcal{G}$ is a Donsker class (Assumption \ref{ass:dist}), and additional regularity, Assumption \ref{ass:boot} is akin to the requirement that the ``conditional random law'' of $\sqrt{n}(\bar\FF_n^*-\bar\FF_n)$ is an asymptotically consistent estimator of the law of $\sqrt{n}(\bar\FF_n-\FF_\star)$ in $\mathsf{P}_0$-probability, while Assumption \ref{ass:center} is an asymptotic condition on the behavior of $\sqrt{n}(\bar\FF_n-\FF_\star)$. If we are willing to assume additional structure on the centering term $\bar\xi_n$, the two assumptions can be collapsed into a single condition. 
\begin{assumption}\label{ass:boot_alt}
{The data are i.i.d., the class $\mathcal{G}$ is $\mathsf{P}_0$-Donsker, $\bar\xi_n=\Xi(\PP_n)$ with $\Xi(\mathsf{P}_0)=0$, $\Xi$ is Hadamard differentiable at $\mathsf{P}_0$ tangentially to a subspace of $\ell^\infty(\mathcal{G})$ that contains the sample paths of $\GG_{\mathsf{P}_0}$, and $\{n\,\bar\xi_n\bar\xi_n^\top\}$ is uniformly integrable.} 
\end{assumption}
Assumption \ref{ass:boot_alt} implies Assumptions \ref{ass:boot} and \ref{ass:center}: since the bootstrap empirical process converges conditionally, in probability, to the same $\mathsf{P}_0$-Brownian bridge $\GG_{\mathsf{P}_0}$ as the empirical process itself \citep[Theorem 3.6.1]{vandervaart1996weak}, and since the functional delta method gives $\sqrt{n}\,\bar\xi_n\Rightarrow\dot\Xi_{\mathsf{P}_0}(\GG_{\mathsf{P}_0})$ \citep[Theorem 3.9.4]{vandervaart1996weak}, by Theorem 3.9.11 of \citet{vandervaart1996weak} the bootstrap counterpart, for which $\bar\xi^{*}_n=\Xi(\PP^{*}_n)$, satisfies
$$
\sqrt{n}\{\Xi(\PP_n^*)-\Xi(\PP_n)\}\mid\mathcal{F}_n\Rightarrow\dot\Xi_{\mathsf{P}_0}(\GG_{\mathsf{P}_0}),\quad \text{in }\mathsf{P}_0\text{-probability}.
$$ The two limits thus coincide, and Assumptions \ref{ass:boot} and \ref{ass:center} hold with the common variance $\Sigma_0=\Var\{\dot\Xi_{\mathsf{P}_0}(\GG_{\mathsf{P}_0})\}$. It is the assumed structure $\bar\xi_n=\Xi(\PP_n)$, together with the Hadamard differentiability of $\Xi$, that makes the two limits agree: without it the bootstrap limit and the sampling limit may differ, which is why the two assumptions are kept apart rather than derived from one another. What Assumptions \ref{ass:predictives}--\ref{ass:dist} leave open is not the relation between them but the map $\x\mapsto\bar\FF_n$: they constrain the fluctuation of the path about $\bar\FF_n$ and say nothing about how $\bar\FF_n$ responds to the sample. Assumption \ref{ass:boot} is stated directly on the term that the proof of Theorem \ref{thm:boot} uses, so that engines whose center is not of the form $\Xi(\PP_n)$ are not excluded.

\subsubsection{Proof of Theorem \ref{thm:boot}}
\begin{proof}[Proof of Theorem \ref{thm:boot}]
As in the proof of Theorem \ref{thm:bvm}, the assertion is equivalent to conditional weak convergence in $\mathsf{P}_0$-probability \citep[Section 3.6]{vandervaart1996weak}. Since $\vartheta^{\dagger}=T(\PP^{*}_{n,N})$, where $\PP^{*}_{n,N}$ is computed from a realization $z^{*}_{n,N}$ generated conditionally on $\x^{*}$, decompose
\begin{equation}\label{eq:boot-decomp}
\sqrt{n}\left(\vartheta^{\dagger}-\vartheta_\star\right)-\sqrt{n}\bar\xi_n
=
\underbrace{\sqrt{n}\left\{T(\PP^{*}_{n,N})-T(\FF^{*}_n)\right\}}_{=:R_n}
+
\sqrt{n}\left\{T(\FF^{*}_n)-T(\FF_\star)\right\}-\sqrt{n}\bar\xi_n.
\end{equation}
Note that the decomposition does not pass through $\FF_n$: the bagged draw is generated from $\x^{*}$ alone, and introducing $\FF_n$ would import a second, independent path fluctuation that is not present in $\vartheta^{\dagger}$.

\noindent\textbf{Step 1: $R_n=o_p(1)$.} Exactly as in Step 1 of the proof of Theorem
\ref{thm:bvm}, but applied to the starred quantities, Assumption \ref{ass:discrepancy}(ii)--(iii) and the mixture representation give
$$
\left\|R_n\right\| \le CL_T\,\sqrt{n}\,\alpha_N\mathcal{D}(\PP^{*}_{n},\FF^{*}_n) + CL_T\,\sqrt{n}\,\mathcal{D}(\PP^{*}_{N-n},\FF^{*}_{n}).
$$
The first term is $o(1)$ by Assumption \ref{ass:stability}(ii) with $\varsigma_n=n^{-1/2}$, assumed also for $\PP^{*}_n$ and $\FF^{*}_n$; by Assumption \ref{ass:concentration}, assumed also for $\PP^{*}_{N-n}$ and $\FF^{*}_n$, the second term satisfies the bound \eqref{eq:step1-tail} with the starred quantities in place of the unstarred ones, and therefore tends to zero in probability because $\alpha_N=o(1)$. Hence
\begin{equation}
R_n=o_p(1).\label{eq:r_term}
\end{equation}

\noindent\textbf{Step 2: linearization of $T$.} Assumption
\ref{ass:stability}(iii), assumed also for $\FF^{*}_n$, gives $\mathcal{D}(\FF^{*}_n,\FF_\star)=O_{p}(n^{-1/2})$, so Assumption \ref{ass:frechet}, applied with $H_n=\FF^{*}_n$, gives
$$
\sqrt{n}\left\{T(\FF^{*}_n)-T(\FF_\star)\right\}-\sqrt{n}\,\bar\xi_n=\int \psi(z)\dt [\sqrt{n}(\FF_n^*-\bar\FF_n)](z)+o_p(1).
$$Adding and subtracting $\sqrt{n}\int \psi(z)\dt\bar\FF_n^*(z)$ gives
\begin{equation}\label{eq:linear-rep}
\sqrt{n}\left(\vartheta^{\dagger}-\vartheta_\star\right)-\sqrt{n}\,\bar\xi_n
=
\underbrace{\int\psi(z)\dt[\sqrt{n}(\bar\FF_n^*-\bar\FF_n)](z)}_{=:A_n}
+
\underbrace{\int\psi(z)\dt[\sqrt{n}(\FF^{*}_n-\bar\FF^{*}_n)](z)}_{=:B_n}
+o_p(1),
\end{equation}
where $\sqrt{n}\,\bar\xi_n$ is $\mathcal{F}_n$-measurable, $A_n$ is $\mathcal{F}^{*}_n$-measurable, and $B_n$ is the fluctuation of the predictive path started from $\x^{*}$.

For $t\in\mathbb{R}^{d_\vartheta}$ let $\varphi_n(t):=\E[\exp\{it^\top(A_n+B_n)\}\mid\mathcal{F}_n]$. Because $u\mapsto\exp(it^\top u)$ is bounded and Lipschitz, the $o_p(1)$ term in \eqref{eq:linear-rep} alters the conditional characteristic function of its left-hand side by $o_p(1)$, so it is enough to determine the limit of $\varphi_n(t)$. Since $\mathcal{F}_n\subseteq\mathcal{F}^{*}_n$ and $A_n$ is $\mathcal{F}^{*}_n$-measurable, the tower property gives
\begin{align}
\varphi_n(t)
&=
\E\left[\exp\left\{it^\top A_n\right\}\,
\E\left(\exp\left\{it^\top B_n\right\}\;\middle|\;\mathcal{F}^{*}_n\right)\;\middle|\;\mathcal{F}_n\right].\label{eq:tower}
\end{align}

By hypothesis, Assumption \ref{ass:dist} remains valid for $\sqrt{n}\{\FF^{*}_n-\bar\FF_n^*\}$, which implies that $B_n\mid\mathcal{F}^{*}_n\Rightarrow \mathcal{N}(0,\Sigma_\star)$ in $\mathsf{P}^{*}_0$-probability, by the same arguments as in Theorem \ref{thm:bvm}. Hence, the inner conditional characteristic function in \eqref{eq:tower} converges to $\exp(-\tfrac12t^\top\Sigma_\star t)$ in $\mathsf{P}^{*}_0$-probability, and its limit does not depend on $\x^{*}$. Writing $Y_n:=\E(\exp\{it^\top B_n\}\mid\mathcal{F}^{*}_n)$ and $\upsilon_t:=\exp(-\tfrac12t^\top\Sigma_\star t)$, we have $|Y_n-\upsilon_t|\le2$ and $Y_n-\upsilon_t\rightarrow0$ in $\mathsf{P}^{*}_0$-probability, so that $\E|Y_n-\upsilon_t|\rightarrow0$; since conditional expectation is a contraction,
$$
\left|\E\left[\exp\left\{it^\top A_n\right\}(Y_n-\upsilon_t)\;\middle|\;\mathcal{F}_n\right]\right| \le \E\left(\left|Y_n-\upsilon_t\right|\;\middle|\;\mathcal{F}_n\right) \longrightarrow0
$$
in $\mathsf{P}_0$-probability, by Markov's inequality. Hence \eqref{eq:tower} yields
$$
\varphi_n(t)=\exp\left(-\tfrac12t^\top\Sigma_\star t\right)\E\left[\exp\left\{it^\top A_n\right\}\;\middle|\;\mathcal{F}_n\right]+o_p(1).
$$
Since $A_n=Z_n$ and $u\mapsto\exp(it^\top u)$ is bounded and Lipschitz, Assumption \ref{ass:boot} gives
$$
\E\left[\exp\{it^\top A_n\}\mid\mathcal{F}_n\right]\rightarrow\exp\left(-\tfrac12t^\top\Sigma_0t\right)
$$
in $\mathsf{P}_0$-probability, and therefore
$$
\varphi_n(t)\rightarrow\exp\left\{-\tfrac12t^\top\left(\Sigma_0+\Sigma_\star\right)t\right\} \qquad\text{for every }t\in\mathbb{R}^{d_\vartheta}.
$$
The limit is the characteristic function of $\mathcal{N}(0,\Sigma_0+\Sigma_\star)$, and is in particular continuous at $t=0$. To pass from convergence in $\mathsf{P}_0$-probability at each fixed $t$ to convergence of the conditional laws, write $\upsilon(t):=\exp\{-\tfrac12t^\top(\Sigma_0+\Sigma_\star)t\}$ and note that $|\varphi_n(t)-\upsilon(t)|\le2$, so that $\E|\varphi_n(t)-\upsilon(t)|\rightarrow0$ for each $t$; integrating over $t$ against a probability measure $\lambda$ equivalent to Lebesgue measure on $\mathbb{R}^{d_\vartheta}$ and applying Fubini's theorem and dominated convergence gives $\int|\varphi_n(t)-\upsilon(t)|\dt\lambda(t)\rightarrow0$ in $\mathsf{P}_0$-probability. Along any subsequence there is therefore a further subsequence along which this integral tends to zero almost surely. On the corresponding almost-sure event, every subsequence admits a further subsequence along which $\varphi_n(t)\rightarrow \upsilon(t)$ for Lebesgue-almost every $t$, so that L\'{e}vy's continuity theorem, in the form requiring convergence of the characteristic functions only for almost every $t$, identifies the limit of the conditional laws as $\mathcal{N}(0,\Sigma_0+\Sigma_\star)$. Every subsequence therefore admits a further subsequence along which the conditional laws converge almost surely, so that $A_n+B_n\mid\mathcal{F}_n\Rightarrow \mathcal{N}(0,\Sigma_0+\Sigma_\star)$ in $\mathsf{P}_0$-probability.
\end{proof}
\subsubsection{Proof of Corollary \ref{cor:coverage}}
\begin{proof}[Proof of Corollary \ref{cor:coverage}]
Fix $c\in\mathbb{R}^{d_\vartheta}$ with $c^\top\Sigma_0c>0$, and write $\Phi$ for the standard normal distribution function and $z_{1-\alpha/2}$ for its $(1-\alpha/2)$ quantile. In addition to Assumption \ref{ass:center}, part (i) uses only the conditions of Theorem \ref{thm:bvm}, which are included among those of Theorem \ref{thm:boot}. By Theorem \ref{thm:bvm}, the conditional law of $\sqrt{n}\,c^\top(\vartheta-\vartheta_\star)$ under $\Pi_N(\cdot\mid\x)$ is within $o_p(1)$, in the bounded-Lipschitz metric, of $\mathcal{N}(\sqrt{n}\,c^\top\bar\xi_n,c^\top\Sigma_\star c)$. Since the latter has a continuous and strictly increasing distribution function, the conditional quantiles converge in probability to the corresponding normal quantiles (when $c^\top\Sigma_\star c=0$ the comparison law is degenerate, the interval collapses onto its center at rate $o_p(n^{-1/2})$, and the argument below gives coverage $2\Phi(0)-1=0$, consistent with the formula), so the equal-tailed $(1-\alpha)$ credible interval for $c^\top\vartheta_\star$ obtained from $\Pi_N(\cdot\mid\x)$, denoted $\mathrm{CI}_{1-\alpha}(\Pi_N)$, has endpoints
$$
c^\top\vartheta_\star+c^\top\bar\xi_n\pm z_{1-\alpha/2}\left\{c^\top\Sigma_\star c/n\right\}^{1/2}+o_p(n^{-1/2}).
$$
It contains $c^\top\vartheta_\star$ exactly when $|\sqrt{n}\,c^\top\bar\xi_n|\le z_{1-\alpha/2}(c^\top\Sigma_\star c)^{1/2}+o_p(1)$. By Assumption \ref{ass:center}, $\sqrt{n}\,c^\top\bar\xi_n\Rightarrow \mathcal{N}(0,c^\top\Sigma_0c)$ under $\mathsf{P}_0$, and since this limit law is continuous, Slutsky's theorem gives
\begin{flalign*}
\Pr\left\{c^\top\vartheta_\star\in \mathrm{CI}_{1-\alpha}(\Pi_N)\right\} &\longrightarrow \Pr\left\{|\mathcal{N}(0,1)|\le z_{1-\alpha/2}\sqrt{c^\top\Sigma_\star c/c^\top\Sigma_0c}\right\}\\&=2\Phi\left(z_{1-\alpha/2}\sqrt{\frac{c^\top\Sigma_\star c}{c^\top\Sigma_0c}}\right)-1,
\end{flalign*}
which is part (i). The limit equals $1-\alpha$ if and only if $c^\top\Sigma_\star c=c^\top\Sigma_0c$, and falls below $1-\alpha$ whenever $c^\top\Sigma_\star c<c^\top\Sigma_0c$. For part (ii), repeat the computation with Theorem \ref{thm:boot} in place of Theorem \ref{thm:bvm}: the centering $\sqrt{n}\,c^\top\bar\xi_n$ is the same, and $\Sigma_\star$ is replaced by $\Sigma_0+\Sigma_\star$, so the ratio under the square root becomes $1+c^\top\Sigma_\star c/c^\top\Sigma_0c\ge1$, with strict inequality whenever $c^\top\Sigma_\star c>0$. The bMGP interval therefore has asymptotic coverage at least $1-\alpha$, and strictly above $1-\alpha$ in the latter case. Stating the comparison for a scalar contrast avoids specifying the shape of a multivariate credible region; read across all $c$, the comparison is the Loewner ordering of $\Sigma_\star$ and $\Sigma_0$.
\end{proof}

\end{document}